\begin{document}

\setlength{\pdfpageheight}{\paperheight}
\setlength{\pdfpagewidth}{\paperwidth}

\conferenceinfo{CONF 'yy}{Month d--d, 20yy, City, ST, Country}
\copyrightyear{20yy}
\copyrightdata{978-1-nnnn-nnnn-n/yy/mm}
\copyrightdoi{nnnnnnn.nnnnnnn}



\title{Type Theory based on Dependent Inductive and Coinductive Types}

\authorinfo{Henning Basold}
           {Radboud University \\ CWI, Amsterdam}
           {h.basold@cs.ru.nl}
\authorinfo{Herman Geuvers}
           {Radboud University \\ Technical University Eindhoven}
           {herman@cs.ru.nl}

\maketitle

\begin{abstract}
  We develop a dependent type theory that is based purely on inductive
  and coinductive types, and the corresponding recursion and
  corecursion principles.
  This results in a type theory with a small set of rules, while still being
  fairly expressive.
  For example, all well-known basic types and type formers that are needed for
  using this type theory as a logic are definable: propositional
  connectives, like falsity, conjunction, disjunction, and function space,
  dependent function space, existential quantification, equality,
  natural numbers, vectors etc.
  The reduction relation on terms consists solely of a rule for recursion and a
  rule for corecursion.
  The reduction relations for well-known types arise from that.
  To further support the introduction of this new type theory, we also
  prove fundamental properties of its term calculus.
  Most importantly, we prove subject reduction and strong normalisation of
  the reduction relation, which gives computational meaning to the terms.

  The presented type theory is based on
  ideas from categorical logic that have been investigated before by
  the first author, and it extends Hagino's categorical data types to
  a dependently typed setting.
  By basing the type theory on concepts from category theory we maintain the
  duality between inductive and coinductive types, and it allows us to describe,
  for example, the function space as a coinductive type.
\end{abstract}

\category{F.4.1}{Mathematical Logic and Formal Languages}
                {Mathematical Logic}


\keywords
Dependent Types, Inductive Types, Coinductive Types, Fibrations

\section{Introduction}
\label{sec:intro}

In this paper, we develop a type theory that is based solely on dependent
inductive and coinductive types.
By this we mean that the only way to form new types is by specifying the type
of their corresponding constructors or destructors, respectively.
From such a specification, we get the corresponding recursion and corecursion
principles.
One might be tempted to think that such a theory is relatively weak as, for
example, there is no function space type.
However, as it turns out, the function space is definable as a
coinductive type.
Other type formers, like the existential quantifier, that are needed in logic,
are definable as well.
Thus, the type theory we present in this paper encompasses intuitionistic
predicate logic.

Why do we need another type theory, especially since
Martin-Löf type theory (MLTT)~\cite{MartinLof-ModelsITT} or the calculus of
inductive constructions (CoIC)~\cite{Paulin-Mohring93,Werner1994,BertotC04}
are well-studied
frameworks for intuitionistic logic?
The main reason is that the existing type theories have no explicit
dependent coinductive types.
Gim\'enez~\cite{Gimenez-RecursiveSchemes} discusses an extension of the CoIC
with coinductive types and guarded recursive schemes but proves no properties
about the conversion relation.
On the other hand, Sacchini~\cite{Sacchini-TypeBasedProductivity} extended the
CoC with streams, and proves subject reduction and strong normalisation.
However, the problem of limited support for general coinductive types remains.
Finally, we should also mention that general coinductive types are
available in implementations like Coq~\cite{Coq:manual}, which is based
on~\cite{Gimenez-RecursiveSchemes}, Agda~\cite{Agda:system} and
Nuprl~\cite{Constable:NuprlTT}.
Yet, none of these has a formal justification, and Coq's coinductive types
are even known to have problems (e.g. related to subject reduction).

One might argue that dependent coinductive types can be encoded through
inductive types, see~\cite{Ahrens:MTypes-HoTT,Basold-DepCoindFibDialg}.
However, it is not clear whether such an encoding gives rise to a good
computation principle in an intensional type theory such as MLTT or CoIC,
see~\cite{cLab:FinalChain-TT}.
This becomes an issue once we try to prove propositions about terms of
coinductive type.

Other reasons for considering a new type theory are of foundational interest.
First, taking inductive and coinductive types as core of
the type theory reduces the number of deduction rules considerably
compared to, for example, MLTT with W- and M-types.
Second, it is an interesting fact that the (dependent) function space can be
described as a coinductive type.
This is well-known in category theory but we do not know of any treatment of
this fact in type theories.
Thus the presented type theory allows us to deepen our understanding of
coinductive types.

\paragraph{Contributions}

Having discussed the raison d'être of this paper, let us briefly mention the
technical contributions.
First of all, we introduce the type theory and show how important logical
operators can be represented in it. We also discuss some other basic examples,
including one that shows the difference to existing theories with coinductive
types.
Second, we show that computations of terms, given in form of a reduction
relation, are meaningful, in the sense that the reduction relation preserves
types (subject reduction) and that all computations are terminating (strong
normalisation).
Thus, under the propositions-as-types interpretation, our type theory can
serve as formal framework for intuitionistic reasoning.


\paragraph{Related Work}
A major source of inspiration for the setup of our type theory is categorical
logic.
Especially, the use of fibrations, brought forward in~\cite{Jacobs1999-CLTT},
helped a great deal in understanding how coinductive types should be treated.
Another source of inspiration is the view of type theories as internal language
or even free model for categories, see for example~\cite{LambekScott-HOCatLog}.
This view is especially important in topos theory, where final coalgebras
have been used as foundation for predicative, constructive
set theory~\cite{Aczel:nonwfs,vdBerg-Non-wellfoundedTrees,vanDenBerg:thesis}.
These ideas were extended in~\cite{Basold-DepCoindFibDialg},
which discusses the categorical analogue of the type theory of this paper,
see also~\iSecRef{dtcc}.

Let us briefly discuss other type theories that the present work relates to.
Especially close is the copattern calculus introduced
in~\cite{Abel2013-Copatterns}, as there the coinductive types are also
specified by the types of their destructors.
However, said calculus does not have dependent types, and it is based on
systems of equations to define terms, whereas the calculus in the present paper
is based on recursion and corecursion schemes.

To ensure strong normalisation, the copatterns have been combined with size
annotations in~\cite{Abel2013}.
Due to the nature of the reduction relation in these copattern-based calculi,
strong normalisation also ensures productivity for coinductive types or,
more generally, well-definedness~\cite{BasoldHansen:Welldef-Equiv-CoInd}.
As another way to ensure productivity, guarded recursive types were proposed
and in \cite{Bizjak:GuardedDepRecTypes} guarded recursion was extended to
dependent types.
Guarded recursive types are not only applicable to strictly positive types,
which we restrict to in this paper, but also to positive and even negative
types.
However, it is not clear how one can include inductive types into such
a type theory, which are, in the authors opinion, crucial to mathematics
and computer science.
Finally, in~\cite{Sacchini-TypeBasedProductivity} another type theory
with type-based termination conditions and a type former for streams
has been introduced.
This type theory, however, lacks again dependent coinductive types.

\paragraph{Outline}

The rest of the paper is structured as follows.
In~\iSecRef{dtcc}, we briefly discuss the ideas from category theory
that motivate the definition of the type theory.
This section is strictly optional and can be safely skipped.
The type theory itself is introduced in~\iSecRef{syntax}, and
in~\iSecRef{examples} we give a host of examples and discuss the representation
of logical operators.
After that, we justify in~\iSecRef{pre-syntax} the definition of the typing
rules of~\iSecRef{syntax} by giving an untyped version of the calculus.
Moreover, this section serves as the technical basis for the strong
normalisation proof.
\secRef{meta-prop} is devoted to proving important properties of the type
theory, namely subject reduction in~\iSecRef{subject-reduction}, and
strong normalisation in~\iSecRef{sn}.
Finally, we make concluding remarks and discuss future work
in~\iSecRef{conclusion}.


\section{Categorical Dependent Data Types}
\label{sec:dtcc}

Before we introduce the actual calculus, let us briefly describe the structure
the calculus shall capture.
This is a short recap from~\cite{Basold-DepCoindFibDialg}, to which we refer
for more details.
Note, that this section is completely optional and only serves as
motivation for those familiar with category theory.

We begin with the definition of dialgebras and associated notions,
see~\cite{Hagino-Dialg}.
\begin{definition}
  \label{def:dialgebra}
  Let $\Cat{C}$ and $\Cat{D}$ be categories and $F, G : \Cat{C} \to \Cat{D}$ be
  functors.
  An $(F, G)$-\emph{dialgebra} is morphism $d : F X \to G X$ in $\Cat{D}$ for
  an object $X$ in $\Cat{C}$.
  We say that a morphism $f : X \to Y$ is a dialgebra \emph{homomorphism}
  from the dialgebra $d : F X \to G X$ to
  $e : F Y \to G Y$, if $e \circ Ff = Gf \circ d$.
  This allows us to form the category $\DialgC{F, G}$ of dialgebras and their
  homomorphisms.
  Finally, a dialgebra is an \emph{initial} (resp. \emph{final})
  $(F,G)$-dialgebra if it is an initial (resp. final) object in $\DialgC{F, G}$,
  see~\cite{Basold-DepCoindFibDialg}.
\end{definition}

Let us discuss an example of a dialgebra in the category of sets.
\begin{example}
  Let $F, G : \SetC \to \SetC \times \SetC$ be given by
  $F = \prodArr{\T, \Id}$ and $G = \prodArr{\Id, \Id}$, that is, $F$ maps a
  set $X$ to the pair $(\T, X)$ in the product category.
  Similarly, $G$, the diagonal functor, maps $X$ to $(X, X)$.
  Now, let $z : \T \to \N$ and $s : \N \to \N$ be the constant zero map
  and the successor on natural numbers, respectively.
  It is then easy to see that $(z, s) : F (\N) \to G (\N)$ is an initial
  dialgebra.
  \qedDef
\end{example}

Initial and final dialgebras will allow us to describe dependent data types
conveniently, where the dependencies are handled through the use of fibrations.
\begin{definition}
  \label{def:fibration}
  Let $P : \TCat \to \BCat$ be a functor, where $\TCat$ is called the
  \emph{total} category and $\BCat$ the \emph{base} category.
  A morphism $f : A \to B$ in $\TCat$ is said to be \emph{cartesian over}
  $u : I \to J$ in $\BCat$, provided that
  \begin{enumerate*}[label=\roman*)]
  \item $P f = u$, and
  \item for all $g : C \to B$ in $\TCat$ and $v : PC \to I$ with
    $Pg = u \circ v$ there is a unique $h : C \to A$ such that $f \circ h = g$.
  \end{enumerate*}
  For $P$ to be a \emph{fibration}, we require that
  for every $B \in \TCat$ and $u : I \to PB$ in $\Cat{B}$, there is
  a cartesian morphism $f : A \to B$ over $u$.
  Finally, a fibration is \emph{cloven}, if it comes with a unique choice
  for $A$ and $f$, in which case we denote $A$ by $\reidx{u}B$ and
  $f$ by $\cartL{u} B$, as displayed in the diagram on the right.
\end{definition}

\begin{wrapfigure}[7]{r}{.2\textwidth}
\vspace{-1.2\baselineskip}
\hspace*{-20pt}
  \begin{tikzcd}[row sep=0.1cm, column sep=0.25cm]
    C \arrow[bend left=15]{drrr}{g}
      \arrow[dashed,shorten >= -5pt]{dr}[swap]{!h}
      & & \\
    & \reidx{u} B
      \arrow{rr}[swap]{\cartL{u} B}
    & & B
      & \TCat \arrow{dddd}{P}
    \\ \\ \\
    PC
      \arrow[bend left=15]{drrr}{P g}
      \arrow{dr}[swap]{v} & & \\
    & I
      \arrow{rr}[swap]{u} & & PB
      & \BCat
  \end{tikzcd}
\end{wrapfigure}

On cloven fibrations, we can define for each $u : I \to J$ in $\BCat$
a functor, the \emph{reindexing} along $u$, as follows.
Let us denote by $\Cat{P}_I$ the category
having objects $A$ with $P(X) = I$ and morphisms $f : A \to B$ with
$P(f) = \id_I$.
We call $\Cat{P}_I$ the \emph{fibre above $I$}.
The assignment of $\reidx{u} B$ to $B$ for a cloven fibration can then be
extended to a functor $\reidx{u} : \Cat{P}_J \to \Cat{P}_I$.
Moreover, one can show that $\reidx{\id_I} \cong \Id_{\Cat{P}_I}$ and
$\reidx{(v \circ u)} \cong \reidx{u} \circ \reidx{v}$.
In this work, we are mostly interested in \emph{split fibrations},
which are cloven fibrations such that the above isomorphisms are
equalities, that is, $\reidx{\id_I} = \Id_{\Cat{P}_I}$ and
$\reidx{(v \circ u)} = \reidx{u} \circ \reidx{v}$.

\begin{example}[See~\cite{Jacobs1999-CLTT}]
  Important examples of fibrations arise from categories with pullbacks.
  Let $\Cat{C}$ be a category and $\ArrC{\Cat{C}}$ be the arrow category with
  morphisms $f : X \to Y$ of $\Cat{C}$ as objects and commutative squares as
  morphisms.
  We can then define a functor $\cod : \ArrC{\Cat{C}} \to \Cat{C}$ by
  $\cod(f : X \to Y) = Y$.
  This functor turns out to be a fibration, the \emph{codomain fibration},
  if $\Cat{C}$ has pullbacks.
  If we are given a choice of pullbacks, then $\cod$ is cloven.

  The split variant of this construction is given by the category of
  \emph{set-indexed families} over $\Cat{C}$.
  Let $\Fam{\Cat{C}}$ be the category that has families $\setFam{X_i}{i \in I}$
  of objects $X_i$ in $\Cat{C}$ indexed by a set $I$.
  The morphisms $\setFam{X_i}{i \in I} \to \setFam{Y_j}{j \in J}$ in
  $\Fam{\Cat{C}}$ are pairs $(u, f)$ where $u : I \to J$ is a function and $f$
  is an $I$-indexed family of morphisms in $\Cat{C}$ with
  $\setFam{f_i : X_i \to Y_{u(i)}}{i \in I}$.
  It is then straightforward to show that the functor
  $p : \Fam{\Cat{C}} \to \SetC$, given by projecting on the index set,
  is a split fibration.
  \qedDef
\end{example}

To model \emph{dependent} data types, we consider dialgebras in the fibres of a
fibration $P : \TCat \to \BCat$.
Before giving a general account, let us look at an important example:
the dependent function space.
\begin{example}
  Suppose that $\Cat{B}$ has a final object $\T$, and let $I \in \Cat{B}$.
  Thus, there is a morphism $!_{I} : I \to \T$, which gives rise
  to the \emph{weakening} functor
  $\reidx{!_{I}} : \Cat{P}_{\T} \to \Cat{P}_{I}$.
  We can then show that for each $X \in \Cat{P}_I$ the dependent function space
  $\Pi_I X$ is the final $(\reidx{!_{I}}, K_X)$-dialgebra, whereby $K_X$ is
  the functor mapping every object to $X$ and morphism to $\id_X$.
  That is to say, there is a dialgebra $\ev_X : \reidx{!_{I}}(\Pi_I X) \to X$
  that evaluates a function on an argument from $I$, such that for each
  dialgebra $f : \reidx{!_{I}}(U) \to X$ there is a unique
  $\lambda f : U \to \Pi_I X$ with $\ev_X \circ \reidx{!_{I}}(\lambda f) = f$.
  \qedDef
\end{example}

From a categorical perspective, the dependent function space is actually a
functor $\Cat{P}_I \to \Cat{P}_{\T}$ that is, moreover, right adjoint to the
weakening functor : $\reidx{!_I} \dashv \Pi_I$.
To capture this, we allow data types to have parameters.
\begin{definition}
  \label{def:parameterised-dialg}
  Let $\Cat{C}$, $\Cat{D}$, $\Cat{X}$ be categories, and
  $F : \Cat{X} \times \Cat{C} \to \Cat{D}$ be a functor.
  We define a functor
  $\funCur{F} : \FunCat{\Cat{X}}{\Cat{C}} \to \FunCat{\Cat{X}}{\Cat{D}}$
  between functor categories by
  \begin{equation}
    \funCur{F}(H) = F \circ \prodArr{\Id_{\Cat{X}}, H}.
  \end{equation}
  Let $G : \Cat{X} \times \Cat{C} \to \Cat{D}$ be another functor.
  A \emph{parameterised} $(F, G)$-dialgebra is an
  $(\funCur{F}, \funCur{G})$-dialgebra, that is, a natural transformation
  $\nat{\delta}{\funCur{F}(H)}{\funCur{G}(H)}$ for a functor
  $H : \Cat{X} \to \Cat{C}$.
\end{definition}

\begin{example}
  The dependent function space $\Pi_I$ functor is a final, parameterised
  $(G, \pi_1)$-dialgebra, where
  $G, \pi_1 : \Cat{P}_I \times \Cat{P}_{\T} \to \Cat{P}_I$,
  $G = \reidx{!_I} \circ \pi_2$ and
  $\pi_1$ is the product projection.
  This is a consequence of the fact that $\pi_1(X, U) = K_X(U)$,
  $G(X, U) = \reidx{!_I}(U)$, and that for each $X$ the function
  space $\Pi_IX$ is a final $(\reidx{!_I}, K_X)$-dialgebra.
  This allows us to prove that the evaluation $\ev_X$ is natural in $X$
  and that $\Pi_I$ is final in $\DialgC{\funCur{G}, \funCur{\pi_1}}$.
  \qedDef
\end{example}

Let $P : \TCat \to \BCat$ be a cloven fibration, $I \in \BCat$ and $u$
a tuple $u = (u_1, \dotsc, u_n)$ of morphisms $u_k : J_k \to I$ in $\BCat$.
Then for every $\Cat{X}$ there is a functor
$G_u : \Cat{X} \times \Cat{P}_I \to \prod_{k=1}^n \Cat{P}_{J_k}$ given by
\begin{equation}
  \label{eq:signature-reidx-fun}
  G_u = \prodArr{\reidx{u_1}, \dotsc, \reidx{u_n}} \circ \pi_2.
\end{equation}
Now we are in the position to define what it means for a category to have
strictly positive, dependent data types.
\begin{definition}
  \label{def:dt-complete}
  Given a cloven fibration $P : \TCat \to \BCat$ we define by mutual induction
  data type completeness, the class $\SPSig$ of strictly
  positive signatures and the class $\SPDT$ of strictly positive data
  types.

  We say that $P$ is \emph{data type complete}, if for all $(F, u) \in \SPSig$
  an initial $(\funCur{F}, \funCur{G_u})$- and final
  $(\funCur{G_u}, \funCur{F})$-dialgebra exists.
  We denote their carriers by  $\mu (\funCur{F}, \funCur{G_u})$ and
  $\nu (\funCur{G_u}, \funCur{F})$, respectively.
  A pair $(F, u)$ is a \emph{strictly positive signature}, if
  $(F, u) \in \SPSig$ by the first rule in~\iFigRef{dtcc-rules}.
  Finally, a \emph{strictly positive data type} is a functor $F \in \SPDT$,
  as given by the other rules in~\iFigRef{dtcc-rules}.

  \begin{figure*}
    \centering
    \begin{equation*}
      \AxiomC{$\Cat{D} = \prod_{i = 1}^n \Cat{P}_{J_i}$}
      \AxiomC{$F \in \SPDT[\Cat{C} \times \Cat{P}_I \to \Cat{D}]$}
      \AxiomC{$u = (u_1 : J_1 \to I, \dotsc, u_n : J_n \to I)$}
      \TrinaryInfC{$(F, u) \in \SPSig[\Cat{C} \times \Cat{P}_I \to \Cat{D}]$}
      \DisplayProof
    \end{equation*}

  \begin{gather*}
    \AxiomC{$A \in \Cat{P}_J$}
    \UnaryInfC{$K_A^{\Cat{P}_I} \in \SPDT[\Cat{P}_I \to \Cat{P}_J]$}
    \DisplayProof
    \quad
    \AxiomC{$\Cat{C} = \prod_{i=1}^n \Cat{P}_{I_i}$}
    \UnaryInfC{$\pi_k \in \SPDT[\Cat{C} \to \Cat{P}_{I_k}]$}
    \DisplayProof
    \quad
    \AxiomC{$f : J \to I \text{ in } \Cat{B}$}
    \UnaryInfC{$\reidx{f} \in \SPDT[\Cat{P}_I \to \Cat{P}_J]$}
    \DisplayProof
    \quad
    \AxiomC{$F_i \in \SPDT[\Cat{P}_I \to \Cat{P}_{J_i}]$}
    \AxiomC{$i = 1,2$}
    \BinaryInfC{$\prodArr{F_1, F_2}
      \in \SPDT[\Cat{P}_I \to \Cat{P}_{J_1} \times \Cat{P}_{J_2}]$}
    \DisplayProof
    \\[7pt]
    \AxiomC{$F_1 \in \SPDT[\Cat{P}_I \to \Cat{P}_K]$}
    \AxiomC{$F_2 \in \SPDT[\Cat{P}_K \to \Cat{P}_J]$}
    \BinaryInfC{$F_2 \circ F_1 \in \SPDT[\Cat{P}_I \to \Cat{P}_J]$}
    \DisplayProof
    \quad
    \AxiomC{$(F, u) \in \SPSig[\Cat{C} \times \Cat{P}_I \to \Cat{D}]$}
    \UnaryInfC{$\mu (\funCur{F}, \funCur{G_u}) \in \SPDT[\Cat{C} \to \Cat{P}_I]$}
    \DisplayProof
    \quad
    \AxiomC{$(F, u) \in \SPSig[\Cat{C} \times \Cat{P}_I \to \Cat{D}]$}
    \UnaryInfC{$\nu (\funCur{G_u}, \funCur{F}) \in \SPDT[\Cat{C} \to \Cat{P}_I]$}
    \DisplayProof
  \end{gather*}
    \caption{Closure rules for data type complete categories}
    \label{fig:dtcc-rules}
  \end{figure*}
\end{definition}


\section{Typed Syntax}
\label{sec:syntax}

We introduce our type theory through its typing rules, following the categorical
syntax just given.
All definitions of this section are given by mutual induction, which we justify
in \iSecRef{pre-syntax}.

Before we formally introduce the typing rules, let us give an informal
overview of the syntax.
First of all, we will have two kinds of variables: type constructor variables
and term variables.
This leads us to use well-formedness judgements of the form
\begin{equation*}
  \validTypeN{A}[]{i},
\end{equation*}
which states that $A$ is a type in the type constructor variable context $\tyCtx$
and the term variable context $\objCtx_1$.

The type constructor variables in $\tyCtx$ are meant to capture types with terms
as parameters (dependent types), thus we need a means to deal with these
parameter dependencies.
The way we chose to do this here is by introducing \emph{parameter contexts}
and \emph{instantiations} thereof.
So we generalise the above judgement to
\begin{equation*}
  \validTypeN{A}{i},
\end{equation*}
in which $A$ is a type constructor in the combined context
$\ctx{\tyCtx}{\objCtx_1}$ with parameters in $\objCtx_2$.
Suppose that $\objCtx_2 = x_1 : B_1, \dotsc, x_n : B_n$ and that we
are given
$\typed[\objCtx_1]{t_k}{B_k \substN{t_1/x_1, \dotsc, t_{k-1}/x_{k-1}}}$,\footnote{
  Read: In the term variable context $\objCtx_1$, $t_k$ is a term of type
  $B_k$.}
then the instantiation of $A$ with these terms is denoted by
\begin{equation*}
  \validTypeN{A \inst t_1 \inst \dotsb \inst t_n}[]{i}.
\end{equation*}
Note, however, that the arrow $\ctxTo$ is \emph{not} meant to be the function
space in a higher universe, rather parameter contexts are a syntactic tool to
deal elegantly with parameters of type constructors.
We illustrate this with a small example.
Let $\objCtx_2$ and $t_1, \dotsc, t_n$ be as above,
and let $X$ be a type constructor variable.
The type system will allow us to form the judgement
\begin{equation*}
  \validTypeN[X : {\pType[\objCtx_2]}][\objCtx_1]{X}[\objCtx_2]{i},
\end{equation*}
and then instantiate $X$ with the terms $t_1, \dotsc, t_n$ to obtain
\begin{equation*}
  \validTypeN[X : {\pType[\objCtx_2]}][\objCtx_1]
  {X \inst t_1 \inst \dotsb \inst t_n}[]{i}.
\end{equation*}
Besides parameter instantiation, we also allow variables to be moved from the
term variable context into the parameter context by parameter abstraction.
Through these two mechanisms we can deal smoothly with type constructor
variables, which are dependent types with parameters.
As an example we will be able to form
\begin{equation*}
  \validTypeN[X : {\pType[(x : B, y : B)]}][\emptyset]
  {(z).(X \inst z \inst z)}[(z:B)]{i}.
\end{equation*}

Similar to type constructors with parameters, we also have terms with
parameters, and instantiations for them.
A term with parameters will be typed by a \emph{parameterised type} of the
shape $\pTerm[\objCtx_2]{A}$.
A term $s$ with $\typed[\objCtx_1]{s}{\pTerm[\objCtx_2]{A}}$ can be instantiated
with arguments, just like type constructors:
If $\objCtx_2 = x_1 : B_1, \dotsc, x_n : B_n$ and
$\typed[\objCtx_1]{t_k}{B_k \substN{t_1/x_1, \dotsc, t_{k-1}/x_{k-1}}}$ for
$1\leq k \leq n$, then
\begin{equation*}
  \objCtx_1  \vdash s \inst t_1 \inst \dotsb \inst t_n: A[\vec{t}/\vec{x}],
\end{equation*}
where $ A[\vec{t}/\vec{x}]$ denotes the simultaneous substitution of the $t_k$
for the term variables $x_k$.
In the case of terms, however, we do not allow parameter abstraction.
We will rather be able to define the (dependent) function space as
coinductive type, thus we do not need an explicit type constructor for it
and also no explicit $\lambda$-abstraction.

Having set up how we deal with type constructor variables, we come to the heart
of the calculus.
Since it shall have inductive and coinductive types, we give ourselves
type constructors that resemble the initial and final dialgebras for strictly
positive signatures in~\iSecRef{dtcc}.
These type constructors are written as
\begin{equation*}
  \indTy{X}{\objCtx}{\vec{\sigma}}{\vec{A}}
  \qquad \text{and} \qquad
  \coindTy{X}{\objCtx}{\vec{\sigma}}{\vec{A}},
\end{equation*}
where $\vec{\sigma} = \sigma_1, \dotsc, \sigma_n$ are
tuples of terms, which we will use for substitutions, and
$\vec{A} = A_1, \dotsc, A_n$ are types with a free type constructor variable
$X$.
In view of the categorical development, the $\sigma_k$ are the analogue of the
morphisms $u_k$ in the base category that were used for reindexing,
and the types $A_k$ correspond to the projections of the functor $F$.
Thus $(\vec{A}, \vec{\sigma})$ will take the role of a strictly positive
signature in the type theory.

Accordingly, we will associate constructors and a recursion scheme to inductive
types, and destructors and a
corecursion scheme to coinductive types.
Suppose, for example, that we have
$\validTypeN[X : {\pType[\objCtx]}][\objCtx_k]{A_k}[]{i}$,
$\objCtx = x_1 : B_1, \dotsc, x_n : B_n$
and $\sigma_k = (t_1, \dotsc, t_n)$.
The $k$th constructor of $\indTy{X}{\objCtx}{\vec{\sigma}}{\vec{A}}$
will have the type, using the shorthand
$\mu = \indTy{X}{\objCtx}{\vec{\sigma}}{\vec{A}}$,
\begin{equation*}
  \typed[]{\inMu_k}{
    \pTerm[\left(\objCtx_k, z : A_k \tySubst{\mu}{X} \right)]{
      \left( \mu \inst t_1 \inst \dotsb \inst t_m \right)}}.
\end{equation*}
$\inMu_k$ can now be instantiated according to the parameter context:
Suppose, for simplicity, that $\objCtx_k = y : C$ for some type $C$
that does not depend on $X$,\footnote{This is the strict positivity condition
  we will impose.} and that we are
given a term $\typed{s}{C}$.
For a recursive argument
$\typed{u}{A_k \tySubst{\mu}{X} \subst{s}{y}}$, we obtain
$\typed{\inMu_k \inst s \inst u}{
  (\mu \inst t_1 \inst \dotsb \inst t_m )\subst{s}{y}}$.

The rest of the type and term constructors are the standard rules one would
then expect.
It should also be noted that there is a strong similarity in the use
of destructors for coinductive types to those in the copattern language
in~\cite{Abel2013-Copatterns}.
Moreover, the definition scheme for generalised abstract data types
in~\cite{HamanaFiore-GADT} describes the same \emph{inductive}
types.


We now define the well-formed types and terms of the calculus through several
judgements, each of which has its own set of derivations rules.
It is understood that the derivability of these judgments is defined by
simultaneous induction.
So, Definitions \ref{def:wffctx}, \ref{def:ctxmorph}, \ref{def:wfftypecons} and
\ref{def:wfterms} should be seen as one simultaneous definition.

It is assumed that we are given two disjoint, countably infinite sets $\var$ and
$\tyVar$ of term variables and type constructor variables.
Term variables will be denoted by $x, y, z, \dotsc$, whereas type constructor
variables are denoted by capital letters $X, Y, Z, \dotsc$.
The judgements we are going to use are are the following.
\begin{itemize}
\item $\isTyCtx{\tyCtx}$ -- The type constructor variable context $\tyCtx$ is
  well-formed.
\item $\isCtx{\objCtx}$ -- The term variable context $\objCtx$ is well-formed.
\item $\validTypeN[!][\objCtx_1]{A}[\objCtx_2]{i}$
  -- The type constructor $A$ is well-formed in the combined context $\ctx{\tyCtx}{\objCtx_1}$
  and can be \emph{instantiated} with terms according to the
  \emph{parameter context} $\objCtx_2$, where it
  is implicitly assumed that $\tyCtx$, $\objCtx_1$ and $\objCtx_2$ are
  well-formed.
\item $\typed[\objCtx_1]{t}{\pTerm[\objCtx_2]{A}}$ -- The term $t$ is
  well-formed  in the term variable context $\objCtx_1$ and, after instantiating
  it with arguments according to parameter context $\objCtx_2$, is of type $A$
  with the arguments substituted into $A$.
\item $\ctxMor{\sigma}{\objCtx_1}{\objCtx_2}$ -- The context morphism $\sigma$
  is a well-formed substitution for $\objCtx_2$ with terms in context
  $\objCtx_1$.
\end{itemize}

\begin{definition}[Well-formed contexts]
  \label{def:wffctx}
  The judgements for singling out well-formed contexts (type variable contexts
  and term variable contexts) are given by the following rules.
  \begin{gather*}
    \AxiomC{}
    \UnaryInfC{$\isTyCtx{\emptyset}$}
    \bottomAlignProof
    \DisplayProof
    \quad
    \AxiomC{$\isTyCtx{\tyCtx}$}
    \AxiomC{$\isCtx{\objCtx}$}
    \BinaryInfC{$
      \isTyCtx{\tyCtx, X : \pType[\objCtx]}$}
    \bottomAlignProof
    \DisplayProof
    \\
    \AxiomC{}
    \UnaryInfC{$\isCtx{\emptyset}$}
    \bottomAlignProof
    \DisplayProof
    \quad
    \AxiomC{$\validTypeN[\emptyset][\objCtx]{A}[]{i}$}
    \UnaryInfC{$\isCtx{\objCtx, x:A}$}
    \bottomAlignProof
    \DisplayProof
  \end{gather*}
\end{definition}

\begin{remark}
  It is important to note that whenever a term variable declaration is added
  into the context, its type is not allowed to have any free type constructor
  variables, which ensures that all types are strictly positive.
  For example, we are not allowed to form the term context
  $\objCtx = x : X$ in which $X$ occurs freely.
  This prevents us, as we will see in \iExRef{pi-encode}, from forming function
  spaces $X \to A$.
  \qedDef
\end{remark}

Besides the usual notion of context, we also use \emph{parameter contexts},
to bind arguments for which no free variable exists.
We borrow the notation from the built-in dependent function space of Agda,
only changing the regular arrow used there into $\ctxTo$ to emphasise that in
our calculus this is \emph{not} the function space.

\begin{definition}[Context Morphism]
  \label{def:ctxmorph}
  We introduce the notion of context morphisms as a shorthand notation
  for sequences of terms.
  Let $\objCtx_1$ and $\objCtx_2$ be contexts.
  A \emph{context morphism} $\ctxMor{\sigma}{\objCtx_1}{\objCtx_2}$ is given by
  the following two rules.
  \begin{gather*}
    \AxiomC{}
    \UnaryInfC{$\ctxMor{\emptyCtxMor}{\objCtx_1}{\emptyset}$}
    \bottomAlignProof
    \DisplayProof
    \quad
    \AxiomC{$\ctxMor{\sigma}{\objCtx_1}{\objCtx_2}$}
    \AxiomC{$\typed[\objCtx_1]{t}{A [\sigma]}$}
    \BinaryInfC{$\ctxMor{(\sigma, t)}{\objCtx_1}{(\objCtx_2, x : A)}$}
    \bottomAlignProof
    \DisplayProof
  \end{gather*}
  where $\validType[\ctx{\emptyset}{\objCtx_2}]{A}{\univ{i}}$,
  and $A [\sigma]$ denotes the simultaneous substitution of the terms in $\sigma$
  for the corresponding variables, which is often also denoted by
  $A [\sigma] = A \subst{\sigma}{\vec{x}}$.
  \qedDef
\end{definition}

\begin{definition}[Well-formed Type Constructor]
  \label{def:wfftypecons}
  The judgement for type constructors is given inductively by the following
  rules, where it is understood that all involved contexts are well-formed.
  \begin{empheq}[box=\fbox]{gather*}
    \AxiomC{}
    \RightLabel{\TTyI}
    \UnaryInfC{$\validTypeN[][]{\oneT}[]{i}$}
    \DisplayProof
    \\[7pt]
    \AxiomC{$\isTyCtx{\tyCtx}$}
    \AxiomC{$\isCtx{\objCtx}$}
    \RightLabel{\TyVarI}
    \BinaryInfC{$
      \validType[\tyCtx, X : {\pType[\objCtx]} \mid \emptyset]{X}{
        \pType[\objCtx]}$}
    \DisplayProof
    \\[7pt]
    \AxiomC{$\validTypeN{A}{i}$}
    \AxiomC{$\isCtx{\objCtx}$}
    \RightLabel{\TyVarWeak}
    \BinaryInfC{$\validTypeN[\tyCtx, X : {\pType[\objCtx]}]{A}{i}$}
    \DisplayProof
    \\[7pt]
    \AxiomC{$\validTypeN{A}{i}$}
    \AxiomC{$\validTypeN{B}[]{i}$}
    \RightLabel{\TyWeak}
    \BinaryInfC{$\validTypeN[\tyCtx][\objCtx_1, x : B]{A}{i}$}
    \DisplayProof
    \\[7pt]
    \AxiomC{$\validType[\ctx{\tyCtx}{\objCtx_1}]
      {A}{\pType[(x : B, \objCtx_2)]}$}
    \AxiomC{$\typed[\objCtx_1]{t}{B}$}
    \RightLabel{\TyInst}
    \BinaryInfC{$\validTypeN{A \inst t}[\objCtx_2 \subst{t}{x}]{i}$}
    \DisplayProof
    \\[7pt]
    \AxiomC{$\validTypeN[\tyCtx][\objCtx_1, x : A]{B}{i}$}
    \RightLabel{\PAbstr}
    \UnaryInfC{$
      \validType[\ctx{\tyCtx}{\objCtx_1}]
      {\pAbstr{x}{B}}{\pType[(x : A, \objCtx_2)]}$}
    \DisplayProof
    \\[7pt]
    \AxiomC{$\ctxMor{\sigma_k}{\objCtx_k}{\objCtx}$}
    \AxiomC{$
      \validType[\ctx{\tyCtx, X : {\pType[\objCtx]}}{\objCtx_k}]
      {A_k}{\univ{i}}$}
    \RightLabel{\FPTy}
    \BinaryInfC{$\validTypeN[!][\emptyset]{
        \recT{\rho}{X : \pType[\objCtx]}{\vec{\sigma}}{\vec{A}}}[\objCtx]{i}$}
    \DisplayProof
  \end{empheq}
  where in the \FPTy-rule, $\rho \in \{\mu, \nu\}$,
  $\vec{\sigma} = (\sigma_1, \dotsc, \sigma_n)$,
  $\vec{A} = (A_1, \dotsc, A_n)$,
  and $1 \leq k \leq n$.
  \qedDef
\end{definition}

Note that type constructor variables come with a parameter context.
This context determines the parameters of an initial/final dialgebra,
which essentially bundle the \emph{local} context, the domain and the
codomain of their constructors respectively destructors.

This brings us to the rules for term constructions.
To introduce them, we need further notations.
Part of the definition of the typing rules is a conversion rule,
which uses a reduction relation on types.
\begin{definition}
  The reduction relation on types consists of two types of reductions: a
  computation $\reduce$ on terms, which is defined at the end of this
  section, and $\beta$-reduction for parameters.  Essentially,
  parameter abstraction and instantiation for types corresponds to a
  simply typed $\lambda$-calculus on the type level.  Thus the
  $\beta$-reduction for parameter instantiations is given by
  \begin{equation*}
    (\pAbstr{x}{A}) \inst t \paramReduce A \subst{t}{x}.
  \end{equation*}
  The reduction relation on terms is lifted to types by taking the
  compatible closure of reduction of parameters, which is given by
  \begin{equation}
    \label{eq:ty-reduction}
    \AxiomC{$t \reduce t'$}
    \UnaryInfC{$A \inst t \reduce A \inst t'$}
    \DisplayProof
  \end{equation}
  We combine these relations into one reduction relation on types:
  $\tReduce \, = \, \paramReduce \cup \reduce$.
  One-step conversion of types is given by
  \begin{equation}
    A \tBireduce B \iff A \tReduce B \text{ or } B \tReduce A.
  \end{equation}
\end{definition}

In the typing rules for terms, we will use the following notation.
\begin{notation}
  First, we denote the identity context morphism by
  $\id_\objCtx \colonequals (x_1, \dotsc, x_n)$ for
  $\objCtx = x_1 : A_1, \dotsc, x_n : A_n$.
  Second, given a type $A$ with parameter context
  $x_1 : B_1, \dotsc, x_n : B_n$ and a context
  morphism $\sigma = (t_1, \dotsc, t_n)$, we denote by $A \inst \sigma$
  the instantiation $A \inst t_1 \inst \dotsm \inst t_n$.
\end{notation}

We continue with the term constructors.
\begin{definition}[Well-formed Terms]
  \label{def:wfterms}
  The judgement for terms is given by the rules in \iFigRef{wfterms}.
  To improve readability, we use the shorthand
  $\cramped{\rho = \recT{\rho}{X : \pType[\objCtx]}{\vec{\sigma}}{\vec{A}}}$,
  $\rho \in \{\mu, \nu\}$,
  and implicitly assume all involved types and contexts are well-formed.
  \qedDef
\begin{figure*}
  \centering
\begin{empheq}[box=\fbox]{gather*}
  \AxiomC{}
  \RightLabel{\rulelabel{$\oneT$-I}}
  \UnaryInfC{$\typed[]{\unit}{\oneT}$}
  \bottomAlignProof
  \DisplayProof
  \qquad
  \AxiomC{$\typed[\objCtx_1]{t}{\pTerm[(x : A, \objCtx_2)]{B}}$}
  \AxiomC{$\typed[\objCtx_1]{s}{A}$}
  \RightLabel{\Inst}
  \BinaryInfC{$\typed[\objCtx_1]{t \inst s}{
      \pTerm[\objCtx_2 \subst{s}{x}]{B \subst{s}{x}}}$}
  \bottomAlignProof
  \DisplayProof
  \quad
  \AxiomC{$\typed{t}{A}$}
  \AxiomC{$A \tBireduce B$}
  \RightLabel{\rulelabel{Conv}}
  \BinaryInfC{$\typed{t}{B}$}
  \DisplayProof
  \\[7pt]
  \AxiomC{$\validType[\objCtx]{A}{\univ{i}}$}
  \RightLabel{\rulelabel{Proj}}
  \UnaryInfC{$\typed[\objCtx, x:A]{x}{A}$}
  \DisplayProof
  \qquad
  \AxiomC{$\typed[\objCtx_1]{t}{\pTerm[\objCtx_2]{A}}$}
  \AxiomC{$\validTypeN[][\objCtx_1]{B}[]{i}$}
  \RightLabel{\rulelabel{Term-Weak}}
  \BinaryInfC{$\typed[\objCtx_1, x:B]{t}{\pTerm[\objCtx_2]{A}}$}
  \DisplayProof
  \\[7pt]
  \AxiomC{$\validTypeN[][]
    {\indTy{X}{\objCtx}{\vec{\sigma}}{\vec{A}}}[\objCtx]{i}$}
  \AxiomC{$1 \leq k \leq \abs{\vec{A}}$}
  \RightLabel{\rulelabel{Ind-I}}
  \BinaryInfC{$
    \typed[]{\inMuT{\objCtx}{\vec{\sigma}}{\vec{A}}_k}{
      \pTerm[(\objCtx_k, y : A_k\tySubst{\mu}{X})]{\mu \inst \sigma_k}}$}
  \DisplayProof
  \quad
  \AxiomC{$\validTypeN[][]
    {\coindTy{X}{\objCtx}{\vec{\sigma}}{\vec{A}}}[\objCtx]{i}$}
  \AxiomC{$1 \leq k \leq \abs{\vec{A}}$}
  \RightLabel{\rulelabel{Coind-E}}
  \BinaryInfC{$\typed[]{\outNuT{\objCtx}{\vec{\sigma}}{\vec{A}}_k}{
      \pTerm[(\objCtx_k, y : \nu \inst \sigma_k)]{A_k\tySubst{\nu}{X}}}$}
  \DisplayProof
  \\[7pt]
  \AxiomC{$\validTypeN[][]{C}[\objCtx]{i}$}
  \AxiomC{$\typed[\objCtxA, \objCtx_k, y_k : A_k\tySubst{C}{X}]
    {g_k}{(C \inst \sigma_k)}$}
  \AxiomC{$\forall k = 1, \dotsc, n$}
  \RightLabel{\rulelabel{Ind-E}}
  \TrinaryInfC{$
    \typed[\objCtxA]{\recPs{\objCtx_k, y_k}{g_k}}
    {\pTerm[(\objCtx, y : \mu \inst \id_{\objCtx})]{C \inst \id_{\objCtx}}}$}
  \DisplayProof
  \\[7pt]
  \AxiomC{$\validTypeN[][]{C}[\objCtx]{i}$}
  \AxiomC{$\typed[\objCtxA, \objCtx_k, y_k : (C \inst \sigma_k)]
    {g_k}{A_k\tySubst{C}{X}}$}
  \AxiomC{$\forall k = 1, \dotsc, n$}
  \RightLabel{\rulelabel{Coind-I}}
  \TrinaryInfC{$
    \typed[\objCtxA]{\corecPs{\objCtx_k, y_k}{g_k}}
    {\pTerm[(\objCtx, y : C \inst \id_{\objCtx})]{\nu \inst \id_{\objCtx}}}$}
  \DisplayProof
\end{empheq}
  \caption{Judgements for well-formed terms}
  \label{fig:wfterms}
\end{figure*}
\end{definition}

We will often leave out the type information in
the superscript of constructors and destructors.
The domain of a constructor
$\inMuT{\objCtx}{\vec{\sigma}}{\vec{A}}_k$
is determined by $A_k$ and its codomain by the instantiation $\sigma_k$.
Dually, the domain of a destructor $\outNu_k$ is given by the instantiation  $\sigma_k$
and its codomain by $A_k$.


Finally, we come to the reduction relation.
Let us agree on the following notations.
Given a context $\cramped{\objCtx = x_1 : A_1, \dotsc, x_n : A_n}$
and a type $\validTypeN[!][\objCtx]{B}[]{i}$, we denote the
full parametrisation of $B$
by $\pAbstr{\objCtx}{B} \colonequals \pAbstr{x_1}{\dotsm \pAbstr{x_n}{B}}$,
giving us
\begin{equation*}
  \validTypeN[!][\emptyset]{\pAbstr{\objCtx}{B}}[\objCtx]{i}.
\end{equation*}
Moreover, if we are given a sequence $\vec{B}$ of such types, that is,
if $\validTypeN[!][\objCtx_i]{B_i}[]{i}$ for each $B_i$ in that sequence,
we denote by $\vec{\pAbstr{\objCtx_i}{B}}$ the sequence of types that arises
by fully abstracting each type $B_i$ separately.
Finally, we denote by $\vec{B} \tySubst{C}{X}$ the
substitution of $C$ for $X$ in each $B_i$ separately.

For $C$ and $\vec{A}$ with $\validTypeN[!][\objCtx']{C}[\objCtx]{i}$,
where $\tyCtx = X_1 : \pType[\objCtx_1], \dotsc, X_n : \pType[\objCtx_n]$, and
$\validTypeN[][\objCtx_i]{A_i}[]{i}$, we define
\begin{equation*}
  \tyFunc{C}(\vec{A}) =
  C \tySubst{\vec{\pAbstr{\objCtx_i}{A}}}{\vec{X}} \inst \id_{\objCtx}.
\end{equation*}

The definition of the reduction relations requires an action of a type constructor $C$ with free
type constructor variables on terms.
Since this action will be used like a functor, we define it
(in~\iDefRef{type-action}) in such a way that the following typing rule holds.
\begin{equation}
  \label{eq:typing-functor-from-type}
  \AxiomC{$
    \validTypeN[X: {\pType[\objCtx_1]}][\objCtx_2']{C}[\objCtx_2]{i}$}
  \AxiomC{$\typed[\objCtx_1, x : A]{t}{B}$}
  \BinaryInfC{$
    \typed[\objCtx_2', \objCtx_2, x : \tyFunc{C}(A)]
    {\tyFunc{C}(t)}{\tyFunc{C}(B)}$}
  \DisplayProof
\end{equation}
In the definition of the reduction relation we need to compose
context morphisms.
This composition is for
$\objCtx_3 \xctxArr{\sigma} \objCtx_2 \xctxArr{\tau} \objCtx_1$
and $\objCtx_1 = x_1 : A_1, \dotsc, x_n:A_n$ defined by
\begin{equation}
  \label{eq:compose-ctx-mor}
  \tau \bullet \sigma \colonequals (\tau_1[\sigma], \dotsc, \tau_n[\sigma]).
\end{equation}

We can now define the reduction relation on terms.
\begin{definition}
  The reduction relation $\reduce$ on terms is defined as compatible closure of
  the contraction relation $\contract$ given in \iFigRef{term-contraction}.
  We introduce in the definition of contraction a fresh variable $x$,
  for which we immediately substitute (either $u$ or $g_k$).
  This is necessary for the use of the action of types on terms,
  see~\eqRef{typing-functor-from-type}.
  \begin{figure*}
    \vspace*{-0.3cm}
  \begin{align*}
    \recPs{\objCtx_k, y_k}{g_k} \inst \, (\sigma_k \bullet \tau) \inst \,
    (\inMu_k \inst \tau \inst u)
    & \contract g_k
    \subst*{\tyFunc{A_k}(\recPs{\objCtx_k, y_k}{g_k}
      \inst \id_{\objCtx} \inst x)}{y_k}
    [\tau, u] \\
    \outNu_k \inst \tau \inst \,
    (\corecPs{\objCtx_k, y_k}{g_k} \inst \, (\sigma_k \bullet \tau) \inst u)
    & \contract \tyFunc{A_k}\left(
      \corecPs{\objCtx_k, y_k}{g_k} \inst \id_{\objCtx} \inst x\right)
    \subst*{g_k}{x}
    [\tau, u]
  \end{align*}
    \caption{Contraction of Terms.}
    \label{fig:term-contraction}
  \end{figure*}
\end{definition}

\begin{remark}
  On terms, the reduction relation for a destructor-corecursor pair essentially
  emulates the commutativity of the following diagram (in context $\objCtx_k$).
  The dual reduction relation for a recursor-constructor pair emulates the dual
  of this diagram.
\begin{equation*}
  \begin{tikzcd}[row sep=1.5em,column sep=3.5cm]
    C \inst \sigma_k
    \rar{\corecPs{\objCtx_k, y}{g_k} \inst \sigma_k \inst x}
    \dar[swap]{g_k}
    & \nu \inst \sigma_k
    \dar{\outNu_k \inst \id_{\objCtx_k} \inst y} \\
    A_k \tySubst{C}{X}
    \rar{\tyFunc{A_k}(\corecPs{\objCtx_k, y}{g} \inst \sigma_k \inst x)}
    & A_k \tySubst{\nu}{X}
  \end{tikzcd}
\end{equation*}
\end{remark}

This concludes the definition of our proposed calculus.
Note that there are no primitive type constructors for
$\to$-, $\Pi$- or $\exists$-types, all of these are, together with the
corresponding introduction and elimination principles, definable in the above
calculus.

\begin{remark}
  \label{rem:no-induction}
  As the alert reader might have noticed, our calculus does not have dependent
  recursion and corecursion, that is, the type $C$ in \rulelabel{Ind-E} and
  \rulelabel{Coind-I} cannot depend on elements of the corresponding recursive
  type.
  This clearly makes the calculus weaker than if we had the dependent
  version:
  In the case of inductive types we do not have an induction principle,
  c.f. \iExRef{exists-quant-encode}.
  For coinductive types, on the other hand, one cannot even formulate a
  dependent version of \rulelabel{Coind-I}, rather one would expect a
  coinduction rule that turns a bisimulation into an equality proof.
  This would imply that we have an extensional function space,
  see~\iExRef{pi-encode}.
\end{remark}


\section{Examples}
\label{sec:examples}

In this section, we illustrate the calculus given in \iSecRef{syntax} on
a variety of examples.
We begin with a few basic ones, then work our way through the encoding of
logical operators, and finish with data types that are actually recursive.
Those recursive data types are lists indexed by their length (vectors) and
their dual, partial streams indexed by their definition depth.
This last example illustrates how our dependent coinductive types
generalise existing ones.

Before we go through the examples, let us introduce a notation for sequences of
empty context morphisms.
We denote such a sequence of $k$ empty context morphisms by
\begin{equation*}
  \varepsilon_k \colonequals ((), \dotsc, ()).
\end{equation*}

In the first example, we explain the role of the basic type $\oneT$.
\begin{example}[Terminal Object]
  \label{ex:terminal-obj}
  We first note that, in principle, we can encode $\oneT$ as a coinductive
  type by $\T \colonequals \coindTy{X}{}{\varepsilon_1}{X}$:
  \begin{gather*}
    \AxiomC{$\validType[\ctx{X : \pType}{\emptyset}]
      {X}{\univ{i}}$}
    \UnaryInfC{$\validTypeN[][]{
        \coindTy{X}{}{\varepsilon_1}{X}}[]{i}$}
    \DisplayProof
  \end{gather*}
  This gives us the destructor $\outNu_1 : \pTerm[(y : \T)]{\T}$ and
  the inference
  \begin{gather*}
    \AxiomC{$\validTypeN[][]{C}[]{i}$}
    \AxiomC{$\typed[y : C]{y}{C}$}
    \RightLabel{\rulelabel{Coind-I}}
    \BinaryInfC{$\typed[]{\corecP{y}{y}}{\pTerm[(y : C)]{\T}}$}
    \DisplayProof
  \end{gather*}
  So the analogue of the categorical concept of the morphism into a final
  object is given by $!_{C} \colonequals \corecP{y}{y}$.
  Note that it is not possible to define a closed term of type $\T$
  directly, rather we only get one with the help of $\oneT$ by
  $\unit' \colonequals !_{\oneT} \inst \unit$.
  Thus the puropose of $\oneT$ is to allow the formation of closed terms.
  Now, these definitions and $\tyFunc{X}(t) = t$, see \iDefRef{type-action},
  give us the following reduction.
  \begin{align*}
    \outNu_1 \inst \unit'
    & = \outNu_1 \inst \, (\corecP{y}{y} \inst \unit) \\
    & \reduce \tyFunc{X}(\corecP{y}{y} \inst x')
    \subst{y}{x'} \subst{\unit}{y} \\
    & = (\corecP{y}{y} \inst x') \subst{y}{x'} \subst{\unit}{y} \\
    & = \corecP{y}{y} \inst \unit \\
    & = \unit'
  \end{align*}
  Thus $\unit'$ is the canonical element with no observable behaviour.
  \qedDef
\end{example}

Dual to the terminal object $\T$, we can form an initial object.
\begin{example}
  \label{ex:initial-object}
  We put $\inObj \colonequals \bot \colonequals \indTy{X}{}{\varepsilon_1}{X}$,
  dual to the definition of $\T$.
  If we define $E^\bot_C \colonequals \recP{y}{y}$,
  we get the usual elimination principle for falsum:
  \begin{gather*}
    \AxiomC{$\validTypeN[][]{C}[]{i}$}
    \UnaryInfC{$\typed[]{E^\bot_C}{\pTerm[(y : \bot)]{C}}$}
    \DisplayProof
  \end{gather*}
\end{example}

Another example of a basic type are the natural numbers.
\begin{example}
  We can define the type of natural numbers by
  \begin{equation*}
    \NatT \colonequals \indTy{X}{}{\varepsilon_2}{(\T, X)},
  \end{equation*}
  with contexts $\objCtx = \objCtx_1 = \objCtx_2 = \emptyset$.
  We get the usual constructors:
  \begin{equation*}
    0 = \inMu_1^{\NatT} \inst \unit : \NatT
    \quad \text{ and } \quad
    s = \inMu_2^{\NatT} : (y : \NatT) \ctxTo \NatT.
  \end{equation*}
  Moreover, we obtain the usual recursion principle:
  \begin{equation*}
    \AxiomC{$\typed[]{t_0}{C}$}
    \AxiomC{$\typed[y : C]{t_s}{C}$}
    \BinaryInfC{$\typed[]{\recPP{}{t_0}{y}{t_s}}{\pTerm[(y : \NatT)]{C}}$}
    \DisplayProof
  \end{equation*}
\end{example}

Let us now move to logical operators.
\begin{example}[Binary Product and Coproduct]
  Suppose we are given types $\validTypeN[][\objCtx]{A_1, A_2}[]{i}$,
  then their binary product is fully specified by the two projections and
  pairing.
  Thus, we can use the following coinductive type for
  $A_1 \times_{\objCtx} A_2$.
  \begin{gather*}
    \AxiomC{$\validTypeN[][\objCtx]{A_1}[]{i}$}
    \AxiomC{$\validTypeN[][\objCtx]{A_2}[]{i}$}
    \BinaryInfC{$\validTypeN[][\Gamma]{
        \coindTy{X}{\objCtx}{(\id_\objCtx, \id_\objCtx)}{(A_1, A_2)}
        \inst \id_{\Gamma}
      }[]{i}$}
    \DisplayProof
  \end{gather*}
  Let us abbreviate
  $P \colonequals \corecPP{\objCtx, \_}{t_1}{\objCtx, \_}{t_2}$,
  then the projections are then given by
  $\pi_k \colonequals \outNu_k \inst \id_\objCtx$, and
  pairing by
  $(t_1, t_2) \colonequals P \inst \id_{\objCtx} \inst \unit$.
  We have
  \begin{equation*}
    \AxiomC{$\validTypeN[][]{\pAbstr{\objCtx}{\oneT}}[\objCtx]{i}$}
    \AxiomC{$\typed[\objCtx]{t_k}{A_k}$}
    \UnaryInfC{$\typed[\objCtx, \_ : \oneT]{t_k}{A_k}$}
    \BinaryInfC{$
      \typed[]{P}
      {\pTerm[(\objCtx, \_ : \oneT)]{A_1 \times_\objCtx A_2}}$}
    \UnaryInfC{$\typed[\objCtx]{(t_1, t_2)}{A_1 \times_\objCtx A_2}$}
    \DisplayProof
  \end{equation*}
  This setup will give us the expected reduction:
  \begin{align*}
    \pi_k \inst \, (t_1, t_2)
    & = \outNu_k \inst \id_\objCtx
    \inst \, (P \inst \id_{\objCtx} \inst \unit) \\
    & \reduce \tyFunc{A_k}(P \inst \id_{\objCtx} \inst x)
    \subst{t_k}{x} \substN{(\id_\objCtx, \unit)} \\
    & = x \subst{t_k}{x} \substN{(\id_\objCtx, \unit)} \\
    & = t_k,
  \end{align*}
  where the third step is given by $\tyFunc{A_k} = x$,
  since $A_k$ does not use type constructor variables,
  see~\iDefRef{type-action}.

  Dually, the binary coproduct of $A_1$ and $A_2$ is given by
  \begin{equation*}
    A_1 +_{\objCtx} A_2 \colonequals
    \indTy{X}{\objCtx}{(\id_\objCtx, \id_\objCtx)}{(A_1, A_2)}
    \inst \id_{\Gamma},
  \end{equation*}
  the corresponding injections by
  $\kappa_i \colonequals \inMu_i \inst \id_\objCtx$,
  and we can form the case distinction
  $\coprodArr{t_1, t_2} \colonequals \recPP{\objCtx, x}{t_1}{\objCtx, x}{t_2}$
  subject to the following typing rule.
  \begin{equation*}
    \AxiomC{$\validTypeN[][\objCtx]{C}[]{i}$}
    \AxiomC{$\typed[\objCtx, x : A_k]{t_k}{C}$}
    \BinaryInfC{$
      \typed[\objCtx]{\coprodArr{t_1,t_2}}
      {\pTerm[(x : A_1 +_\objCtx A_2)]{C}}$}
    \DisplayProof
  \end{equation*}
  Moreover, we get the expected reduction:
  \begin{equation*}
    \coprodArr{t_1,t_2} \inst \, (\kappa_i \inst s)
    \reduce t_i \subst{s}{x}.
  \end{equation*}
\end{example}

We now show that the function space arises as coinductive type.
\begin{example}[Dependent Product, $\Pi$-types, Function Space]
  \label{ex:pi-encode}
  We use $\emptyset$ as global context and $\objCtx_1 = x : A$ as local
  context, thus
  \begin{gather*}
    \AxiomC{$\validTypeN[\emptyset][x : A]{B}[]{i}$}
    \UnaryInfC{$\validTypeN[\_ : {\pType[]}][x : A]{B}[]{i}$}
    \AxiomC{$\ctxMor{\varepsilon_1}{\objCtx_1}{\emptyset}$}
    \RightLabel{\rulelabel{FP-Ty}}
    \BinaryInfC{$
      \validTypeN[][]{\coindTy{\_}{}{\varepsilon_1}{B}}[]{i}$}
    \DisplayProof
  \end{gather*}
  If we define
  $\cramped{\Pi x:A. B \colonequals \coindTy{\_}{}{\varepsilon_1}{B}}$,
  then $\cramped{\validTypeN[][]{\Pi x:A. B}[]{i}}$.
  We get $\lambda$-abstraction by putting
  $\lambda x. g \colonequals \corecP{x, \_}{g} \inst \unit$:
  \begin{equation*}
    \AxiomC{$\typed[\objCtx, x:A]{g}{B}$}
    \AxiomC{$\validTypeN[][\objCtx,x:A]{\oneT}[]{i}$}
    \RightLabel{\TermWeak}
    \BinaryInfC{$\typed[\objCtx, x:A, \_ : \oneT]{g}{B}$}
    \RightLabel{\rulelabel{Coind-I}}
    \UnaryInfC{$\typed[\objCtx]{\corecP{x, \_}{g}}
      {\pTerm[(y : \oneT)]{\Pi x:A.B}}$}
    \RightLabel{\Inst}
    \UnaryInfC{$\typed[\objCtx]
      {\lambda x. g}
      {\Pi x:A.B}$}
    \DisplayProof
  \end{equation*}
  Application is given by
  $t \, a \colonequals \outNu_1 \inst a \inst t$.
  Since we have that $B \tySubst{\Pi x:A.B}{\_} = B$ and
  $(\Pi x : A.B) [\varepsilon_1] = \Pi x : A.B$, we can derive the usual typing
  rule for application:
  \begin{equation*}
    \AxiomC{$\typed[\objCtx]{a}{A}$}
    \AxiomC{$\typed{t}{\Pi x:A. B}$}
    \BinaryInfC{$\typed[\objCtx]{t \, a}{B[a/x]}$}
    \DisplayProof
  \end{equation*}
  In particular, we have that $(\lambda x. g) \, a$ is well-typed, and
  we can derive the usual $\beta$-reduction:
  \begin{align*}
    (\lambda x. g) \, a
    & = \outNu_1 \inst a \inst \, (\corecP{x, \_{}}{g} \inst \unit) \\
    & \reduce \tyFunc{B}(\corecP{x, \_{}}{g} \inst x')
    \subst{g}{x'} \substN{\unit/\_, a/x} \\
    & = x' \subst{g}{x'} \substN{\unit/\_, a/x} \\
    & = g[a/x],
  \end{align*}
  where we again use that $\_ \not\in \fv(B)$.
  As usual, we can derive from the dependent function space also the
  non-dependent one by
  \begin{equation*}
     A \to B \colonequals \Pi x:A. B \text{ if } x \not\in \fv(B).
  \end{equation*}

  We can now extend the usual correspondence between variables in context and
  terms of function type to parameters as follows.
  First, from $\typed[\objCtx]{r}{\pTerm[(x : A)]{B}}$, we get
  $\typed[\objCtx, x : A]{r \inst x}{B}$.
  Next, for $\typed[\objCtx, x : A]{s}{B}$ we can form $\lambda x. s$, and
  finally a term $\typed[\objCtx]{t}{A \to B}$ gives rise to
  $\typed[\objCtx, x : A]{t \, x}{B}$.
  This situation can be summarised as follows.
  \begin{equation*}
    \AxiomC{$\typed[\objCtx]{r}{\pTerm[(x : A)]{B}}$}
    \UnaryInfC{$\typed[\objCtx, x : A]{s}{B}$}
    \doubleLine \dashedLine
    \UnaryInfC{$\typed[\objCtx]{t}{\Pi x:A. B}$}
    \DisplayProof
  \end{equation*}
  Here, a single line is a downwards correspondence, and the dashed double line
  signifies a two-way correspondence.
  These correspondences allow us, for example, to give the product projections
  the function type $A_1 \times_\objCtx A_2 \to A_k$, and to write
  $\pi_k \, t$ instead of $\pi_k \inst t$.

  Finally, let us illustrate how the type system only
  allows the formation of strictly positive types.
  Suppose we would want to form the type
  $X \to B$ for some variable $X$ and type $B$.
  Recall that
  $X \to B = \cramped{\coindTy{\_}{}{\varepsilon_1}{B}}$ with
  $\ctxMor{\varepsilon_1}{(x : X)}{\emptyset}$.
  However, for this to be a valid context morphism, we would need to derive
  $\isCtx{x:X}$ (note the empty context), which is not possible according to
  \iDefRef{wffctx}.
  Hence, $X \to B$ cannot be formed.
  \qedDef
\end{example}

\begin{example}[Coproducts, Existential Quantifier]
  \label{ex:exists-quant-encode}
  Recall that we do not have dependent recursion, hence
  no induction principle.
  This means that we are not able to encode $\Sigma$-types à la Martin-Löf.
  Instead, we can define intuitionistic existential quantifiers,
  see 11.4.4 and 10.8.2 in~\cite{TroelstraDalen-ConstrMath}.
  In fact, $\exists$-types occur as the dual of $\Pi$-types (\iExRef{pi-encode})
  as follows.

  Let $\validTypeN[][x : A]{B}[]{i}$ and put
  $\exists x:A. B \colonequals \indTy{\_}{}{\varepsilon_1}{B}$.
  The pairing of terms $t$ and $s$ is given by
  $(t, s) \colonequals \inMu_1 \inst t \inst s$.
  One can easily derive that $\validTypeN[][]{\exists x:A. B}[]{i}$ and
  \begin{equation*}
    \AxiomC{$\typed[\objCtx]{t}{A}$}
    \AxiomC{$\typed[\objCtx]{s}{B \subst{t}{x}}$}
    \BinaryInfC{$\typed[\objCtx]{(t,s)}{\exists x:A. B}$}
    \DisplayProof
  \end{equation*}
  from \rulelabel{Ind-I} and \Inst.
  Equally easy is also the derivation that the elimination principle for
  existential quantifiers, defined by
  \begin{equation*}
    \cramped{E_{x,y}^\exists}(t, p) \colonequals \recP{x : A, y : B}{p} \inst t,
  \end{equation*}
  can be formed by the following rule.
  \begin{equation*}
    \AxiomC{$\validTypeN[][]{C}[]{i}$}
    \AxiomC{$\typed[\objCtx, x : A, y : B]{p}{C}$}
    \AxiomC{$\typed[\objCtx]{t}{\exists x:A. B}$}
    \TrinaryInfC{$\typed[\objCtx]{\cramped{E_{x,y}^\exists}(t, p)}{C}$}
    \DisplayProof
  \end{equation*}
  Finally, we get the usual reduction rule
  \begin{equation*}
    \cramped{E_{x,y}^\exists} ((t, s), p) \reduce p \substL{t/x, s/y}.
    \tag*{\qedDef}
  \end{equation*}
\end{example}

\begin{example}[Generalised Dependent Product and Coproduct]
  \label{ex:general-(co)product}
  From a categorical perspective, it makes sense to not just consider product
  and coproducts that bind a variable in a type but also to allow the
  restriction of terms we allow as values for this variable.
  We can achieve this by replacing $\varepsilon_1$ in \iExRef{pi-encode} and
  \iExRef{exists-quant-encode} by an arbitrary term $\typed[x : I]{f}{J}$.
  This gives us type constructors with
  \begin{equation*}
    \textstyle
    \validTypeN[][y : J]{\coprod_f A}[]{i}
    \quad \text{ and } \quad
    \validTypeN[][y : J]{\prod_f A}[]{i}
  \end{equation*}
  that are weakly adjoint $\coprod_f \dashv \reidx{f} \dashv \prod_f$,
  where $\reidx{f}$ substitutes $f$.
  Similarly, propositional equality arises as left adjoint to
  contraction $\reidx{\delta}$, where
  $\ctxMor{\delta}{(x : A)}{(x : A, y : A)}$ is
  the diagonal substitution $\delta \colonequals (x, x)$,
  c.f.~\cite[Def. 10.5.1]{Jacobs1999-CLTT}.
  \qedDef
\end{example}

The next example is a standard inductive dependent type.
\begin{example}[Vectors]
  We define vectors $\VecT A : (n : \NatT) \ctxTo \univ{0}$, which are lists
  over $A$ indexed by their length, by
  \begin{align*}
    & \VecT A \colonequals
    \indTy{X}{\objCtx}{(\sigma_1, \sigma_2)}{(\T, A \times X \inst k)} \\
    & \objCtx = n : \NatT
    \quad \text{ and } \quad \objCtx_1 = \emptyset
    \quad \text{ and } \quad \objCtx_2 = k : \NatT \\
    & \ctxMor{\sigma_1 = (0)}{\objCtx_1}{(n : \NatT)}
    \quad \text{ and } \quad
    \ctxMor{\sigma_2 = (s \inst k)}{\objCtx_2}{(n : \NatT)} \\
    & \validTypeN[X : {\pType[(n : \NatT)]}][\objCtx_1]{\T}[]{0} \\
    & \validTypeN[X : {\pType[(n : \NatT)]}][\objCtx_2]
    {A \times X \inst k}[]{0}
  \end{align*}
  This yields the usual constructors $\nil \colonequals \inMu_1 \inst \unit$
  and $\cons \colonequals \inMu_2$, which have the expected types, namely
  $\inMu_1 : \T \ctxTo \VecT A \; 0$ and
  $\inMu_2 : \pTerm[(k:\NatT, y : A \times \VecT A \; k)]{\VecT A}
  \inst \, (s \inst k)$.
  The induced recursion scheme is then also the expected one.
  \qedDef
\end{example}

The dependent coinductive types of the present calculus differ from other
calculi in that destructors can be restricted in the
domain they may be applied to.
We illustrate this by defining partially defined streams, which are the dual of
vectors.
A preliminary definition is necessary though.
\begin{example}
  \label{ex:extended-nats}
  The \emph{extended naturals}, which are to be thought of as natural numbers
  extendend with an element $\infty$, are given by the following coinductive
  type.
  \begin{equation*}
    \EN = \coindTy{X}{}{\varepsilon_1}{\oneT + X} : \univ{0}
  \end{equation*}
  On this type, we can define the successor $s_\infty : y : \EN \to \EN$
  by primitive corecursion.
\end{example}

Using the extended naturals, we can define partial streams, which are streams
that might not be fully defined.
\begin{example}
  Intuitively, we define partial streams as coinductive type indexed by the
  definition depth, and destructors that can only be applied to streams that
  are defined in at least the first position:
  \begin{lstlisting}[language=Agda,mathescape=true,columns=flexible]
codata PStr (A : Set) : (n : $\EN$) $\to$ Set where
  hd : (k : $\EN$) $\to$ PStr A ($s_\infty$ k) $\to$ A
  tl : (k : $\EN$) $\to$ PStr A ($s_\infty$ k) $\to$ PStr A k
  \end{lstlisting}
  This co-datatype translates into our language by putting
  \begin{equation*}
    \PStr A \colonequals
    \coindTy{X}{\Gamma}{(s_\infty \, k, s_\infty \, k)}{(A, X \inst k)},
  \end{equation*}
  for contexts $\objCtx = n : \EN$ and $\objCtx_1 = \objCtx_2 = k : \EN$,
  context morphisms $\ctxMor{(s_\infty \, k)}{\objCtx_i}{\objCtx}$ for
  $i=1,2$, and destructor codomains
  $\validTypeN[X : {\pType[\Gamma]}][\objCtx_1]{A}[]{i}$
  and  $\validTypeN[X : {\pType[\Gamma]}][\objCtx_1]{X \inst k}[]{i}$.
  \qedDef
\end{example}


\section{Pre-Types and Pre-Terms}
\label{sec:pre-syntax}

The pre-types and pre-terms, we introduce in this section, have two purposes:
First, they allow us to justify the simultaneous definition of the
typing judgement and the reduction relation in \iSecRef{syntax}.
Second, we use them as a tool in \iSecRef{sn} to prove strong normalisation.

Pre-types and -terms are introduced as follows.
First, we define raw terms that mix types and terms, and raw contexts
whose only purpose is to ensure that arities for instantiations match.
These raw terms can then be split into pre-types and pre-terms.

\begin{definition}[Raw Syntax]
  The raw contexts and terms are given by the follows grammars.
  \begin{align*}
    \objCtx & \colonequals \emptyset \mid \objCtx, x \quad x \in \var \\
    \tyCtx & \colonequals \emptyset \mid \tyCtx, X : \pType[\objCtx] \\
    M, N & \colonequals
    \begin{aligned}[t]
    & \oneT \mid \unit
    \mid x \in \var
    \mid M \inst N
    \mid \pAbstr{x}{M}
    \mid X \in \tyVar \\
    & \mid \inMu_k
    \mid \outNu_k
    \mid \recT{\rho}{X : \pType[\objCtx]}{\vec{\sigma}}{
      \vec{M}}, \; \rho \in \{\mu, \nu\} \\
    &
    \mid \rec^{\indTy{X}{\objCtx}{\vec{\sigma}}{\vec{M}}}
    \vec{\pAbstr{\objCtx_k, y_k}{N_k}} \\
    &
    \mid \corec^{\coindTy{X}{\objCtx}{\vec{\sigma}}{\vec{M}}}
    \vec{\pAbstr{\objCtx_k, y_k}{N_k}}
    \end{aligned}
  \end{align*}
\end{definition}

Pre-types and pre-terms are defined through two judgements
\begin{equation*}
  \validTypeN{A}[\objCtx_2]{i}
  \qquad \text{ and } \qquad
  \preTerm[\objCtx_1, x]{t}[\objCtx_2].
\end{equation*}
The rules for these judgements follow essentially those in
\iSecRef{syntax}, only that the type for terms is erased.
For that reason, we leave their definition out.
However, this definition and that of the reduction
relation below have been implemented in Agda~\cite{Basold:code}.

Recall that the contraction of terms, see~\iFigRef{term-contraction}, requires
an action of (pre-)types with free type variables on terms.
We define it so that
\eqRef{typing-functor-from-type} on page~\pageref{eq:typing-functor-from-type}
holds, see~\iLemRef{correctness-type-action}.
In fact, the definition for recursive types just follows how functors
arise from parameterised initial and final dialgebras,
see~\iDefRef{parameterised-dialg} and~\cite{Basold-DepCoindFibDialg}.
\begin{definition}[Type action]
  \label{def:type-action}
  Let $\preType[!][\objCtx']{C}[\objCtx]$ be a pre-type with
  $\tyCtx = X_1 : \pType[\objCtx_1] , \dotsc, X_n : \pType[\objCtx_n]$,
  $\vec{A}$ and $\vec{B}$ be sequences of pre-types with
  $\preType[][\objCtx_i]{A_i}[]$ and $\preType[][\objCtx_i]{B_i}[]$ for all
  $1 \leq i \leq n$.
  For a sequence $\vec{t}$ of pre-terms with
  $\preTerm[\objCtx_i, x]{t}$ for all $1 \leq i \leq n$, we define
  $\tyFunc[\emptyset]{C}(\vec{t})$ as follows.
  If $n = 0$, we simply put $\tyFunc{C}(\varepsilon) = x$.
  If $n > 0$, we define $\tyFunc[\emptyset]{C}(\vec{t})$ by
  induction in the derivation of $C$.
  \begin{align*}
    & \begin{aligned}
      & \tyFunc[\tyCtx, X_{n+1}]{C}(\vec{t}, t_{n+1})
      = \tyFunc[\tyCtx]{C}(\vec{t})
      & & \text{for } \TyVarWeak \\
      & \tyFunc[\tyCtx]{X_i}(\vec{t}) = t_i \\
      & \tyFunc[\tyCtx]{C' \inst s}(\vec{t})
      = \tyFunc[\tyCtx]{C'}(\vec{t}) \subst{s}{y},
      & & \text{for }
      \validTypeN[!][\objCtx']{C'}[(y, \objCtx)]{i} \\
      & \tyFunc[\tyCtx]{\pAbstr{y}{C'}}(\vec{t}) = \tyFunc[\tyCtx]{C'}(\vec{t}),
      & & \text{for }
      \validTypeN[!][(\objCtx',y)]{C'}[\objCtx]{i} \\
    \end{aligned} \\
    & \tyFunc[\tyCtx]{\indTy{Y}{\objCtx}{\vec{\sigma}}{\vec{D}}}(\vec{t})
      = \recPs[R_A]{\objCtxA_k, x}{g_k} \inst \id_{\objCtx} \inst x, \\
      & \quad \text{with } g_k =
      \inMu_k
      \inst \id_{\objCtxA_k} \inst
      \left(\tyFunc[\tyCtx, Y]{D_k}\left(
          \vec{t}, y
        \right)\right) \\
      & \quad \text{and } R_{A} =
      \indTy{Y}{\objCtx}{\vec{\sigma}}{\vec{D} \tySubst{\vec{\pAbstr{\objCtx_i}{A}}}{\vec{X}}} \\
      & \quad \text{for }
      \validTypeN[\tyCtx, Y : {\pType[\objCtx]}][\Delta_k]{D_k}[]{i} \\
    & \tyFunc[\tyCtx]{\coindTy{Y}{\objCtx}{\vec{\sigma}}{\vec{D}}}(\vec{t})
      = \corecPs[R_{B}]{\objCtxA_k, x}{g}
      \inst \id_{\objCtx} \inst x, \\
      & \quad \text{with } g_k =
      \tyFunc[\tyCtx,Y]{D_k}\left( \vec{t}, x \right)
      \subst*{(\outNu_k \inst \id_{\objCtxA_k} \inst x)}{x} \\
      & \quad \text{and } R_{B} =
      \coindTy{Y}{\objCtx}{\vec{\sigma}}{\vec{D} \tySubst{\vec{\pAbstr{\objCtx_i}{B}}}{\vec{X}}}
  \end{align*}

\end{definition}


\section{Meta properties}
\label{sec:meta-prop}

\subsection{Subject Reduction}
\label{sec:subject-reduction}

The proof of subject reduction is based on the following key lemma,
which essentially states that the action of types on terms acts like a
functor.
\begin{lemma}[Type correctness of type action]
  \label{lem:correctness-type-action}
  Given the action of types on terms, see \iDefRef{type-action},
  the following inference rule holds.
  \begin{equation*}
    \AxiomC{$
      \validTypeN[X: {\pType[\objCtx_1]}][\objCtx_2']{C}[\objCtx_2]{i}$}
    \AxiomC{$\typed[\objCtx_1, x : A]{t}{B}$}
    \BinaryInfC{$
      \typed[\objCtx_2', \objCtx_2, x : \tyFunc{C}(A)]
      {\tyFunc{C}(t)}{\tyFunc{C}(B)}$}
    \DisplayProof
  \end{equation*}
\end{lemma}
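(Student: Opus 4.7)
The plan is to prove the statement by structural induction on the derivation of $C$, tracking all the clauses of Definition~\ref{def:type-action}. Since the recursive $\rho$-case extends the type-variable context, the induction is most naturally carried out on a more general statement with $n$ type variables $\tyCtx = X_1 : \pType[\objCtx_1], \dotsc, X_n : \pType[\objCtx_n]$ and sequences $\vec A, \vec B, \vec t$ satisfying $\typed[\objCtx_i, x : A_i]{t_i}{B_i}$ for each $i$; the stated lemma is the $n = 1$ instance. The cases in which $C$ contains no occurrence of any $X_i$ (such as $C = \oneT$ or a type imported by \rulelabel{TyVarWeak} / \rulelabel{TyWeak} from outside the type-variable scope) reduce, by the $n = 0$ clause of the definition, to $\tyFunc{C}(\vec t) = x$ with $\tyFunc{C}(\vec A) = \tyFunc{C}(\vec B) = C$, so the goal follows from the projection rule. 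When $C = X_i$, one has $\tyFunc{X_i}(\vec t) = t_i$, $\tyFunc{X_i}(\vec A) = A_i$ and $\tyFunc{X_i}(\vec B) = B_i$, so the conclusion is exactly the hypothesis on $t_i$.

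For the parameter rules, if $C = C' \inst s$ then the definition gives $\tyFunc{C}(\vec t) = \tyFunc{C'}(\vec t)\subst{s}{y}$; applying the induction hypothesis types $\tyFunc{C'}(\vec t)$ in a context containing $y$, and a standard term-level substitution lemma --- substituting the premise $\typed{s}{\cdot}$ of \rulelabel{TyInst} for $y$ --- yields the result. The case $C = \pAbstr{y}{C'}$ is immediate from the IH, since the action ignores the abstraction and the type-level $\beta$-rule $(\pAbstr{y}{C'}) \inst \cdot \paramReduce C' \subst{\cdot}{y}$ accounts for the matching of $\tyFunc{\pAbstr{y}{C'}}(\vec A)$ with $\tyFunc{C'}(\vec A)$ up to $\tBireduce$, discharged via \rulelabel{Conv}.

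The core of the argument is the \rulelabel{FPTy} case. Take the inductive subcase $C = \indTy{Y}{\objCtx}{\vec{\sigma}}{\vec{D}}$; the definition yields
\[
\tyFunc[\tyCtx]{C}(\vec t) \;=\; \recPs[R_{\vec A}]{\objCtxA_k, x}{g_k} \inst \id_{\objCtx} \inst x,
\]
with $R_{\vec A}, R_{\vec B}$ the inductive types obtained by substituting the fully-abstracted $A_i$, resp.\ $B_i$, for each $X_i$ in the $D_k$, and $g_k = \inMu_k^{R_{\vec B}} \inst \id_{\objCtxA_k} \inst \tyFunc[\tyCtx, Y]{D_k}(\vec t, y)$. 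The induction hypothesis, applied to $D_k$ in the extended type context $\tyCtx, Y$ and with the new slot filled by $y$, provides a typing of $\tyFunc[\tyCtx, Y]{D_k}(\vec t, y)$ whose target matches the domain of the constructor $\inMu_k^{R_{\vec B}}$; hence $g_k$ inhabits $R_{\vec B} \inst \sigma_k$ in the context required by \rulelabel{Ind-E}. Applying \rulelabel{Ind-E} and then instantiating along $\id_{\objCtx}$ and $x$ yields the desired type $\tyFunc{C}(\vec B) = R_{\vec B} \inst \id_{\objCtx}$ in the context extended by $x : \tyFunc{C}(\vec A) = R_{\vec A} \inst \id_{\objCtx}$. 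The coinductive subcase is entirely dual: the generators destruct by $\outNu_k$ and are post-composed with the IH-typed action on $D_k$, and \rulelabel{Coind-I} assembles them into a corecursor.

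The substantive obstacle is bookkeeping rather than mathematical depth: one must track how the simultaneous type substitution $\tySubst{\vec{\pAbstr{\objCtx_i}{A}}}{\vec X}$ defining $R_{\vec A}$ (and $R_{\vec B}$) interacts with the local morphisms $\sigma_k$ and $\id_{\objCtx}$, and then verify that the type produced by the IH agrees with the \rulelabel{Ind-E}/\rulelabel{Coind-I} premise up to parameter $\beta$-reduction, so that \rulelabel{Conv} can bridge any syntactic mismatch. Supporting lemmas that will be needed --- and which themselves are proved by straightforward induction along with the main statement --- are weakening of all five judgement forms, a term-level substitution lemma, and a commutation property between the type action and term substitution.
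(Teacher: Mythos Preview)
Your proposal is correct and follows essentially the same route as the paper's proof: generalise to an arbitrary type-variable context $\tyCtx = X_1, \dotsc, X_n$, induct on the derivation of $C$, handle the variable and weakening cases directly, push the instantiation case through the substitution lemma (the paper's Proposition~\ref{prop:derived-structural-rules}), use \rulelabel{Conv} for parameter abstraction, and in the $\mu$/$\nu$ cases apply the IH to each $D_k$ in the extended context $\tyCtx, Y$ with the identity in the new slot before assembling via \rulelabel{Ind-E}/\rulelabel{Coind-I}. One small wording correction: the ``no free $X_i$'' cases do not literally invoke the $n=0$ clause; rather, the \rulelabel{TyVarWeak} clause of Definition~\ref{def:type-action} peels off one type variable at a time until the context is empty, at which point the $n=0$ clause applies---your conclusion is right, but the mechanism is the explicit weakening clause, not a direct appeal to $n=0$.
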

\begin{proof}
  We leave the proof details out.
  Let us only mention that the proof works by
  generalising the statement to types in arbitrary type constructor contexts
  $\tyCtx$, and then proceeding by induction in $C$.
\end{proof}

The following is now an easy consequence of \iLemRef{correctness-type-action}.
\begin{theorem}[Subject reduction]
  If $\cramped{\typed{t_1}{A}}$ and $t_1 \reduce t_2$,
  then $\typed{t_2}{A}$.
\end{theorem}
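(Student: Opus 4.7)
The plan is to proceed by induction on the derivation of $t_1 \reduce t_2$. Since $\reduce$ is defined as the compatible closure of the contraction relation $\contract$, the induction naturally splits into two kinds of cases: the congruence cases (where a reduction happens inside a subterm) and the two base contraction cases from \iFigRef{term-contraction}.

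The congruence cases are handled uniformly. For each syntactic constructor we would perform inversion on $\typed{t_1}{A}$ (absorbing any intervening uses of the \rulelabel{Conv} rule by appealing to confluence/preservation of $\tBireduce$ at the type level, which follows from rule~\eqRef{ty-reduction}), apply the induction hypothesis to the reducing subterm, and reassemble the typing derivation with the same rule, closing with \rulelabel{Conv} if the ambient type has itself reduced via~\eqRef{ty-reduction}. This reassembly requires the standard auxiliary facts: a weakening lemma for both $\tyCtx$ and $\objCtx$ (immediate from \TyVarWeak and \rulelabel{Term-Weak}) and a substitution lemma stating that if $\ctxMor{\sigma}{\objCtx_1}{\objCtx_2}$ and $\typed[\objCtx_2]{t}{A}$, then $\typed[\objCtx_1]{t[\sigma]}{A[\sigma]}$, proved by induction on the typing derivation of $t$.

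The heart of the argument is the two base contraction cases. Consider the recursor contraction
\begin{equation*}
\recPs{\objCtx_k, y_k}{g_k} \inst (\sigma_k \bullet \tau) \inst (\inMu_k \inst \tau \inst u)
\;\contract\; g_k \subst*{\tyFunc{A_k}(\recPs{\objCtx_k, y_k}{g_k} \inst \id_{\objCtx} \inst x)}{y_k}[\tau, u].
\end{equation*}
By inversion on the LHS we extract $\typed[\objCtxA]{\recPs{\objCtx_k, y_k}{g_k}}{\pTerm[(\objCtx, y : \mu \inst \id_{\objCtx})]{C \inst \id_{\objCtx}}}$, that $\ctxMor{\tau}{\objCtxA}{\objCtx_k}$, and that $\typed[\objCtxA]{u}{A_k\tySubst{\mu}{X}[\tau]}$; so the LHS has type $(C \inst \sigma_k)[\tau] = C \inst (\sigma_k \bullet \tau)$. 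On the RHS, the premise of \rulelabel{Ind-E} gives $\typed[\objCtxA, \objCtx_k, y_k : A_k\tySubst{C}{X}]{g_k}{C \inst \sigma_k}$. Instantiating the recursor with $\id_{\objCtx}$ and $x$ gives a term of type $C \inst \id_\objCtx$ from $y : \mu \inst \id_\objCtx$, i.e.\ a morphism from $\mu$ to $C$ in the relevant fibre; \iLemRef{correctness-type-action} applied to this term and to $A_k$ (with free variable $X$) yields exactly $\typed[\objCtx_k, x:A_k\tySubst{\mu}{X}]{\tyFunc{A_k}(\text{rec} \inst \id_\objCtx \inst x)}{A_k\tySubst{C}{X}}$. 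Substituting this for $y_k$ and then $[\tau, u]$ into $g_k$ — using the substitution lemma twice — yields a term of type $(C \inst \sigma_k)[\tau] = C \inst (\sigma_k \bullet \tau)$, matching the LHS. The dual corecursor/destructor case is entirely symmetric, with the roles of constructor and destructor, and of $\mu$ and $\nu$, interchanged.

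I expect the main obstacle to be bookkeeping around simultaneous substitution and the compatibility between $\sigma_k \bullet \tau$ on the term side and the corresponding reindexing on the type side, together with the fact that types themselves reduce: one has to verify that $(C \inst \sigma_k)[\tau]$ and $C \inst (\sigma_k \bullet \tau)$ coincide up to $\tBireduce$ so that \rulelabel{Conv} can bridge any mismatch, and that the parameter $\beta$-reductions $(\pAbstr{x}{A}) \inst t \paramReduce A\subst{t}{x}$ interact cleanly with the instantiations buried inside $\tyFunc{A_k}(-)$. Once the substitution and type-action calculations are lined up, the conclusion $\typed{t_2}{A}$ (possibly followed by a \rulelabel{Conv} step) is immediate.
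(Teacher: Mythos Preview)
Your approach is the paper's: the theorem is presented there as an immediate consequence of \iLemRef{correctness-type-action}, and your two base cases unfold precisely that argument, while the congruence cases are the routine bookkeeping the paper omits. One caution: you justify inversion through \rulelabel{Conv} by appealing to confluence of $\tBireduce$, but the paper explicitly lists confluence of the reduction relation as an \emph{open} problem (\iSecRef{conclusion}); fortunately your argument does not actually need it, since a standard generation lemma---proved by induction on the typing derivation, with the \rulelabel{Conv} case merely lengthening the conversion chain---together with the compatibility rule~\eqRef{ty-reduction} already suffices to extract the subderivations and match up the types.
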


\subsection{Strong Normalisation}
\label{sec:sn}

This section is devoted to show that all terms $\typed{t}{A}$ are strongly
normalising, which is intuitively what one would expect, given that we
introduced the reduction relation by following the homomorphism property
of (co)recursion for initial and final dialgebras.

The proof uses the saturated sets approach, see~\cite{Geuvers:SN-CoC-SAT}, as
follows.
First, we define what it means for a set of pre-terms to be saturated, where,
most importantly, all terms in a saturated set are strongly normalising.
Next, we give an interpretation $\sem{A}$ of dependent types $A$ as families of
saturated sets.
Finally, we show that if $\typed{t}{A}$, then for all assignments $\rho$ of
terms to variables in $\objCtx$, we have $t \in \sem{A}(\rho)$.
Since $\sem{A}(\rho) \subseteq \SN$, strong normalisation for all typed terms
follows.

We begin with a few simple definitions.
\begin{definition}
  We use the following notations.
  \begin{itemize}
  \item $\Terms$ is the set of pre-terms.
  \item $\SN$ is the set of strongly normalising pre-terms.
  \item $\vars{\Gamma}$ is the set of variables in context $\Gamma$.
  \end{itemize}
\end{definition}

For simplicity, we identify context morphisms
$\ctxMor{\sigma}{\objCtx_1}{\objCtx_2}$ and valuations
$\rho : \vars{\objCtx_2} \to \Terms$, if we know that the terms of $\rho$ live
in $\objCtx_1$.
This allows us to write $\sigma(x)$ for $x \in \vars{\objCtx_2}$, and
$M \inst \rho$ for pre-terms $M$.
It is helpful though to make the action of context morphisms on
valuations, essentially given by composition, explicit and write
\begin{equation}
  \label{eq:sem-ctx-mor}
  \begin{aligned}
    & \sem{\ctxMor{\sigma}{\objCtx_1}{\objCtx_2}} :
    \Terms^{\vars{\Gamma_1}} \to \Terms^{\vars{\Gamma_2}} \\
    & \sem{\ctxMor{\sigma}{\objCtx_1}{\objCtx_2}}(\gamma)(y)
    = \sigma(y)[\gamma].
  \end{aligned}
\end{equation}

Saturated sets are defined by containing certain open terms (base terms)
and by being closed under key reductions.
We introduce these two notions in the following two definitions.
\begin{definition}[Base Terms]
  The set of \emph{base terms} $\base$ is defined inductively by the following
  three closure rules.
  \begin{itemize}
  \item $\var \subseteq \base $
  \item
    $\rec \vec{\pAbstr{\objCtx_k, x}{N_k}} \inst \sigma \inst M \in \base$,
    provided that $M \in \base$, $N_k \in \SN$ and $\sigma \in \SN$.
  \item
    $\outNu_k^{\coindTy{X}{\objCtx}{\vec{\tau}}{\vec{A}}} \inst \sigma \inst M
    \in \base$,
    provided that $M \in \base$, $\sigma \in \SN$ and
    $\exists \gamma. \left(\sigma = \sem{\tau_k}(\gamma) \right)$.
  \end{itemize}
\end{definition}

\begin{definition}[Key Redex]
  A pre-term $M$ is a \emph{redex}, if there is a $P$ with $M \contract P$.
  $M$ is the \emph{key redex}
  \begin{enumerate}
  \item of $M$ itself, if $M$ is a redex,
  \item of $\rec \vec{\pAbstr{\objCtx_k, y_k}{N_k}} \inst \sigma \inst N$,
    if $M$ the key redex of $N$, or
  \item of $\outNu_k \inst \sigma \inst N$,
    if $M$ the key redex of $N$.
  \end{enumerate}
  We denote by $\kred{M}$ the term that is obtained by contracting the key
  redex of $M$.
\end{definition}

\begin{definition}[Saturated Sets]
  A set $X \subseteq \Terms$ is \emph{saturated}, if
  \begin{enumerate}
  \item $X \subseteq \SN$
  \item $\base \subseteq X$
  \item If $\kred{M} \in X$ and $M \in \SN$, then $M \in X$.
  \end{enumerate}
  We denote by $\sat$ the set of all saturated sets.
\end{definition}

It is easy to see that $\SN \in \sat$, and that every saturated set is
non-empty. Moreover, it is easy to show that $\sat$ is a complete
lattice with set inclusion as order.
Besides these standard facts, we will use the following constructions
on saturated sets.

\begin{definition}
  Let $\objCtx$ be a context.
  We define a form of semantical context extension (comprehension)
  of pairs $(E, U)$ with $E \subseteq \Terms^{\vars{\objCtx}}$ and
  $U : E \to \sat$ with respect to a given variable $x \not\in \vars{\objCtx}$
  by
  \begin{equation}
    \label{eq:sat-comprehension}
    \compr{(E, U)}_{x} =
    \setDef*{\rho[x \mapsto M]}{
      \rho \in E \text{ and }
      M \in U(\rho)},
  \end{equation}
  where $\rho[x \mapsto M] : \vars{\objCtx} \cup \{x\} \to \Terms$
  extends $\rho$ by mapping $x$ to $M$.
  Moreover, we define a semantical version of the typing judgement:
  \begin{equation}
    \label{eq:sat-typing}
    E \Vdash U = \setDef{M}{\forall \gamma \in E. \; M[\gamma] \in U(\gamma)}.
  \end{equation}
\end{definition}

We now show that we can give a model of well-formed types by means of saturated
sets.
To achieve this, we define simultaneously an interpretation of contexts and the
interpretation of types.
The intention is that we have that
\begin{itemize}
\item if $\isCtx{\objCtx}$, then
  $\sem{\objCtx} \subseteq \Terms^{\vars{\objCtx}}$,
\item if $\isTyCtx{\tyCtx}$, then
  $\sem{\tyCtx}(X) \in \sat^{\sem{\objCtx}}$ for
  all $X : \pType[\objCtx]$ in $\tyCtx$, and
\item if $\validTypeN{A}{i}$, then
  $\sem{A} : \sem{\tyCtx} \times \sem{\objCtx_1, \objCtx_2} \to \sat$.
\end{itemize}

\begin{definition}[Interpretations]
  \label{def:interpretation}
  We interpret type variable contexts, term variable contexts and
  types simultaneously.
  First, we assign to each term context is a set of allowed possible
  valuations:
  \begin{align*}
    \sem{\emptyset} & = \{! : \emptyset \to \Terms\} \\
    \sem{\objCtx, x : A} & =
    \compr{(\sem{\objCtx}, \sem{A})}_x \\
    & = \setDef{\rho[x \mapsto M]}{
      \rho \in \sem{\objCtx} \text{ and } M \in \sem{A}(\rho)}
  \end{align*}
  For $\tyCtx = X_1 : \pType[\objCtx_1], \dotsc, X_n : \pType[\objCtx_n]$ we
  define
  \begin{align*}
    \sem{\tyCtx} = \prod_{X_i \in \vars{\tyCtx}} \tyVal{\objCtx_i},
  \end{align*}
  where $I_\objCtx$ is the set of valuations that respect convertibility:
  \begin{equation*}
    \tyVal{\objCtx} =
    \setDef{U : \sem{\objCtx} \to \sat}{
      \forall \rho, \rho'. \; \rho \tReduce \rho' \Rightarrow U(\rho) = U(\rho')}
  \end{equation*}
  Finally, we define in \iFigRef{type-interpretation} the interpretation of
  types as families of term sets.
  \begin{figure*}
  \begin{align*}
    \sem{\validTypeN[][]{\oneT}[]{i}}(\delta, \rho)
    & = \bigcap \setDef{X \in \sat}{\unit \in X} \\
    \sem{\validTypeN[\tyCtx, X : {\pType[\objCtx]}][\emptyset]{X}[\objCtx]{i}}
    (\delta, \rho)
    & = \delta(X)(\rho) \\
    \sem{\validTypeN[\tyCtx, X]{A}{i}}(\delta, \rho)
    & = \sem{\validTypeN{A}{i}}(\restr{\delta}{\vars{\tyCtx}}, \rho) \\
    \sem{\validTypeN[!][\objCtx_1, x : B]{A}{i}}(\delta, \rho)
    & = \sem{\validTypeN{A}{i}}(\delta, \restr{\rho}{\vars{\objCtx_1}}) \\
    \sem{\validTypeN{A \inst t}[\objCtx_2 \subst{t}{x}]{i}}(\delta, \rho)
    & = \sem{\validTypeN{A}[(x : B, \objCtx_2)]{i}}(\delta, \rho[x \mapsto t[\rho]]) \\
    \sem{\validTypeN{\pAbstr{x}{A}}[(x : B, \objCtx_2)]{i}}(\delta, \rho)
    & = \sem{\validTypeN[!][\objCtx_1, x : B]{A}{i}}(\delta, \rho) \\
    \sem{\validTypeN[!][\emptyset]{
        \indTy{X}{\objCtx}{\vec{\sigma}}{\vec{A}}}[\objCtx]{i}}(\delta, \rho)
    & = 
    \begin{aligned}[t]
      \{M \mid \;
      & \forall U \in \tyVal{\objCtx}. \; \forall \objCtxA. \;
      \forall k. \;
      \forall N_k \in
      \compr{\sem{\objCtx_k}, \sem{A_k^\objCtxA}(\delta[X \mapsto U])}_y
      \Vdash \reidx{\sem{\sigma_k \bullet \pi}}(U). \\
      & \quad
      \rec \vec{\pAbstr{\objCtx_k, y}{N_k}} \inst \rho \inst M \in U(\rho) \}
    \end{aligned} \\
    \sem{\validTypeN[!][\emptyset]{
        \coindTy{X}{\objCtx}{\vec{\sigma}}{\vec{A}}}[\objCtx]{i}}(\delta, \rho)
    & = 
      \{M \mid \;
      \exists U \in \tyVal{\objCtx}. \;
      \forall k. \;
      \forall \gamma \in \sem{\sigma_k}^{-1}(\rho). \;
      \outNu_k \inst \gamma \inst M \in \sem{A_k}(\delta[X \mapsto U], \gamma)\}
  \end{align*}
    \caption{Interpretation of types as families of saturated sets}
    \label{fig:type-interpretation}
  \end{figure*}
  In the clause for inductive types, $A_k^\objCtxA$ denotes the type that
  is obtained by weakening $\validTypeN[][\objCtx_k]{A_k}[]{i}$ to
  $\validTypeN[][\objCtxA, \objCtx_k]{A_k^\objCtxA}[]{i}$,
  $\ctxMor{\pi}{(\objCtx_k, y : A_k^\Delta)}{\objCtx_k}$ projects $y$ away, and
  $\reidx{\sem{\sigma_k \bullet \pi}}(U) \colonequals
  U \circ \sem{\sigma_k \bullet \pi}$
  is the reindexing for set families.
  \qedDef
\end{definition}

Before we continue stating the key results about this interpretation of types,
let us briefly look at an example.
\begin{example}
  Suppose $A, B$ are closed types.
  Recall that the function space was
  $A \to B = \coindTy{X}{}{\ctxMor{\varepsilon_1}{(x:A)}{\emptyset}}{B}$,
  and that application was defined by $t \, a = \outNu_1 \inst a \inst t$.
  Note that the conditition $\gamma \in \sem{\varepsilon_1}^{-1}(\rho)$ reduces
  to $\gamma(x) \in \sem{A}$ because
  $\sem{\varepsilon_1}(\gamma)(y \in \emptyset) = \rho(y)[\varepsilon_1]$ holds
  for any $\gamma \in \sem{(x : A)}$.
  So we write $N$ instead of $\gamma(x)$.
  We further note that, since $A$, $B$ and thus $A \to B$ are closed,
  we can leave out the valuation $\delta$ for the type variables.
  Taking all of this into account, we have
  \begin{align*}
    \sem{A \to B}(\gamma)
    & = \setDef{M}{\forall N \in \sem{A}. \,
      \outNu_1 \inst \gamma \inst M \in \sem{B}} \\
    & = \setDef{M}{\forall N \in \sem{A}. \, M \, N \in \sem{B}},
  \end{align*}
  which is the usual definition definition, see~\cite{Geuvers:SN-CoC-SAT}.
  \qedDef
\end{example}

\begin{remark}
  One interesting result, used to prove the following
  lemmas, is that the interpretation of types is monotone in $\delta$,
  and that the interpretation of coinductive types is the largest set
  closed under destructors.  This suggests that it might be possible
  to formulate the definition of the interpretation in categorical
  terms.
\end{remark}

We just state here the key lemmas and leave their proofs out.
\begin{lemma}[Soundness of type action]
  \label{lem:ty-functor-sound}
  Suppose $C$ is a type with $\validTypeN[\tyCtx][\objCtx]{C}[\objCtx']{i}$
  and $\tyCtx = X_1 : \pType[\objCtx_1], \dotsc, X_n : \pType[\objCtx_n]$,
  such that for all parameters $\typed[\objCtxA]{r}{C'}$ occurring in $C$
  and $\ctxMor{\tau}{\objCtxA'}{\objCtxA}$, we have
  $r[\tau] \in \sem{C'}(\tau)$.
  Let $\delta_A, \delta_B \in \sem{\tyCtx}$
  and $\typed[\objCtx_i, x : A_i]{t_i}{B_i}$, such that for all
  $\sigma \in \sem{\objCtx_i}$,
  $t_i \in \delta_B(X_i)(\tau)$.
  Then for all contexts $\objCtxA$,
  all $\ctxMor{\sigma}{\objCtxA}{\objCtx, \objCtx'}$ and all
  $s \in \sem{C}(\delta_A, \sigma)$
  \begin{equation}
    \tyFunc{C}(\vec{t})\substN{\sigma, s} \in \sem{C}(\delta_B, \sigma).
  \end{equation}
\end{lemma}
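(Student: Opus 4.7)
The plan is to proceed by induction on the derivation of $\validTypeN[\tyCtx][\objCtx]{C}[\objCtx']{i}$, matching the case distinction in the recursive definition of the type action $\tyFunc{C}$ in \iDefRef{type-action}. In each case, we unfold both the clause defining $\tyFunc{C}$ and the corresponding clause in \iFigRef{type-interpretation}, and reduce the goal to an application of the induction hypothesis on the immediate subtype.

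For the base and structural cases the argument is essentially bookkeeping. When $C = \oneT$, or $C$ arises by $\TyVarWeak$ from a type in which the variable in question does not occur, $\tyFunc{C}(\vec{t})$ is just the fresh variable $x$, so $\tyFunc{C}(\vec{t})\substN{\sigma, s} = s$; because the interpretation is then independent of $\delta$, we have $s \in \sem{C}(\delta_A, \sigma) = \sem{C}(\delta_B, \sigma)$. When $C = X_i$, the action is $t_i$, and the conclusion is exactly the hypothesis on $t_i$. For $\TyInst$, $\PAbstr$, and the reindexing clauses, one unfolds the matching clause of the interpretation, applies the IH to the immediate subtype, and uses standard substitution lemmas to align the valuations; the assumption on parameters $r$ occurring in $C$ is exactly what keeps these substituted terms inside the relevant interpretations.

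The main work, and the main obstacle, is in the $\mu$ and $\nu$ cases. For $C = \indTy{Y}{\objCtx}{\vec{\sigma}}{\vec{D}}$, the type action is a recursor into $R_B$ whose branches $g_k$ wrap $\inMu_k$ around $\tyFunc[\tyCtx, Y]{D_k}(\vec{t}, y)$. To show $\tyFunc{C}(\vec{t})[\sigma, s] \in \sem{R_B}(\delta_B, \sigma)$, one unfolds the $\mu$-clause of the interpretation of $R_B$: for every $U \in \tyVal{\objCtx}$ and every branch family $N_k$ landing in the appropriate reindexed set, the outer recursor must end in $U(\sigma)$. The idea is to instantiate $U \colonequals \sem{R_B}(\delta_B, -)$ and apply the IH to each $D_k$ under the extended type valuation $\delta_B[Y \mapsto U]$, so that $g_k$ builds an element of $\sem{R_B}$ via its constructor. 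The $\nu$ case is dual: the interpretation is existential, so we exhibit $\sem{R_B}(\delta_B, -)$ itself as the witness $U$ and use the IH on each $D_k$, together with the destructor step, to verify the required closure condition.

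The delicate point in both recursive cases is checking that $\sem{R_B}(\delta_B, -)$ actually lies in $\tyVal{\objCtx}$, i.e. respects convertibility of valuations; this relies on subject reduction from \iSecRef{subject-reduction}, on closure of saturated sets under key expansion, and on the monotonicity of $\sem{-}$ in $\delta$ noted in the remark preceding the lemma. Once this is established, the remainder of the $\mu$/$\nu$ arguments is a symmetric exploitation of the universal property of the respective interpretations (smallest set closed under introduction for $\mu$, largest closed under destruction for $\nu$), and the only genuinely new content beyond the structural cases is the careful alignment of substitutions, which the induction hypothesis on each $D_k$ is designed to absorb.
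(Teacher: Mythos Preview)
Your plan is essentially the paper's proof: induction on the derivation of $C$, trivial base/structural cases, and in the recursive cases instantiate $U$ with the interpretation of $R_B$ and invoke the IH on each $D_k$ together with the constructor/corecursor closure lemmas. Two small corrections. First, in the $\mu$ case you should unfold the \emph{hypothesis} $s \in \sem{\mu}(\delta_A,\sigma)$, not the goal: that hypothesis is the universal statement over $U$ and $N_k$, and instantiating it at $U = \sem{R_B}$ and $N_k = g_k$ directly yields the recursor in $U(\sigma)$, once the IH on $D_k$ (plus \iLemRef{constructor-closure}) shows the $g_k$ are admissible branches. Second, the fact that $\sem{R_B}(\delta_B,-) \in \tyVal{\objCtx}$ does not use subject reduction; it is \iLemRef{interpret-resp-reduction} (interpretations are invariant under reduction of valuations), proved independently by induction on the type.
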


\begin{lemma}
  \label{lem:interpret-sat}
  The interpretation of types $\sem{-}$ given in \iDefRef{interpretation}
  is well-defined and
  $\sem{A}(\delta, \rho) \in \sat$ for all $A, \delta, \rho$.
\end{lemma}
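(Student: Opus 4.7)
The plan is to proceed by induction on the derivation of $\validTypeN{A}{i}$, simultaneously verifying that $\sem{\objCtx}$ and $\sem{\tyCtx}$ are well-defined whenever they are referenced and that the set assigned to $A$ lies in $\sat$. The easy cases are dispatched as follows. For $\oneT$ the clause produces the smallest saturated set containing $\unit$, which lies in $\sat$ because $\sat$ is a complete lattice closed under intersection. For a type variable $X$, the hypothesis $\delta \in \sem{\tyCtx}$ entails $\delta(X) \in \tyVal{\objCtx}$, so $\delta(X)(\rho) \in \sat$ by definition. The weakening, instantiation, and parameter-abstraction clauses reduce directly to the induction hypothesis for a smaller type; for instantiation one additionally checks that $\rho[x \mapsto t[\rho]]$ lies in the extended context, using $t[\rho] \in \sem{B}(\rho)$, which is obtained in parallel via \iLemRef{ty-functor-sound}.

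For an inductive type $\mu = \indTy{X}{\objCtx}{\vec{\sigma}}{\vec{A}}$ I would verify the three saturation clauses directly. To show membership implies strong normalisation, instantiate the universal quantifier with $U$ the constantly-$\SN$ family (which lies in $\tyVal{\objCtx}$) and with base-term witnesses for the $N_k$ supplied by the induction hypothesis; the resulting recursor application then lies in $U(\rho) \subseteq \SN$, and $M \in \SN$ follows by subterm closure. Containment of $\base$ is immediate from the base-term rule for $\rec$-terms, using that $\rho \in \SN$ because its entries come from saturated sets. Closure under inverse key reduction exploits the crucial fact that the key redex of $\rec \vec{\pAbstr{\objCtx_k, y}{N_k}} \inst \rho \inst M$ is the key redex of $M$, so the saturation clause for $U(\rho)$ transfers membership from $\kred{M}$ back to $M$.

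The coinductive case $\nu = \coindTy{X}{\objCtx}{\vec{\tau}}{\vec{A}}$ is dual. Strong normalisation follows from any chosen witness $U$, since each destructed term $\outNu_k \inst \gamma \inst M$ lies in $\sem{A_k} \subseteq \SN$, forcing $M \in \SN$ by subterm closure. Base-term containment requires picking some witness $U$ and invoking the base-term rule for $\outNu_k$; the side-condition $\exists \gamma'. \sigma = \sem{\tau_k}(\gamma')$ on that rule is designed precisely to match the quantifier $\forall \gamma \in \sem{\tau_k}^{-1}(\rho)$ in the clause for $\sem{\nu}$, which is the essential design choice making the case go through. Closure under inverse key reduction proceeds exactly as in the inductive case, using that the key redex of $\outNu_k \inst \gamma \inst M$ is the key redex of $M$.

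The main obstacle I anticipate lies not in any individual case but in carrying out the mutual verification cleanly. The $\nu$ clause needs a witness $U \in \tyVal{\objCtx}$, and the $\mu$ clause quantifies universally over all such $U$; consequently, one must prove in parallel that every $\sem{A}(\delta, -)$ itself respects type convertibility, so that it can legitimately serve as such a witness (the natural choice for coinductive types, interpreted as the greatest family closed under destructors, as noted in the remark preceding the lemma). This invariance in turn rests on subject reduction, on a substitution lemma for the interpretation, and on monotonicity of $\sem{A}$ in $\delta$. Assembling these subsidiary claims and keeping the mutual induction well-founded, rather than any single saturation check, is the bulk of the work.
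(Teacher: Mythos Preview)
The paper does not actually supply a proof of this lemma. In the main text it says ``We just state here the key lemmas and leave their proofs out,'' and while the appendix contains proofs of \iLemRef{ty-functor-sound} and \iLemRef{sat-sound} together with a battery of auxiliary results, no argument for \iLemRef{interpret-sat} appears anywhere. Your proposal therefore cannot be compared line-by-line against an existing proof.

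That said, your sketch is well aligned with the infrastructure the paper does set up. The remark immediately preceding the lemma flags monotonicity in $\delta$ and the greatest-fixed-point description of $\sem{\nu}$ as the key ingredients, and the appendix states exactly these as \iLemRef{monotonicity-interpretation} and \iLemRef{coind-is-largest-fp}, together with \iLemRef{interpret-resp-reduction} for invariance under reduction. Your plan for the inductive and coinductive cases---instantiating the universal $U$ with the constant-$\SN$ family, resp.\ exhibiting a witness $U$ and exploiting the side-condition on the $\outNu$ base-term rule---is the natural route given those lemmas, and your diagnosis that the hard part is the mutual well-foundedness rather than any single saturation check matches the paper's organisation, where all three lemmas rest on the same shared pool of auxiliary results.

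One small correction: in the instantiation case, the fact $t[\rho] \in \sem{B}(\rho)$ that you need to place $\rho[x \mapsto t[\rho]]$ in the extended context is an instance of \iLemRef{sat-sound} (soundness of the term interpretation), not of \iLemRef{ty-functor-sound}. This does not break your argument---on the contrary, it confirms your closing point that well-definedness, saturation, and soundness must be established as one simultaneous induction.
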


\begin{lemma}[Soundness]
  \label{lem:sat-sound}
  If $\typed[\objCtx]{t}{A}$, then for all $\rho \in \sem{\objCtx}$ we have
  $t[\rho] \in \sem{A}(\rho)$.
\end{lemma}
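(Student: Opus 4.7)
The plan is to proceed by induction on the derivation of $\typed[\objCtx]{t}{A}$, case-splitting on the last rule used. The structural cases are routine: the $\oneT$-introduction and Proj cases are immediate from the corresponding clauses in \iDefRef{interpretation}; Term-Weak follows because restricting $\rho$ to the smaller context does not affect the IH; Inst relies on a substitution lemma relating $(t \inst s)[\rho]$ with $t[\rho]$ evaluated on the extended valuation $\rho[x \mapsto s[\rho]]$, matching the clause for $A \inst t$ in \iFigRef{type-interpretation}; and Conv follows because $\sem{A}$ factors through $\tReduce$-equivalence, which holds by construction of $\tyVal{\objCtx}$.

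The interesting cases are the four introduction/elimination rules for the recursive types. For Ind-E, each $g_k$ inhabits $\sem{C \inst \sigma_k}$ by IH, which is precisely the hypothesis demanded by the clause defining $\sem{\indTy{X}{\objCtx}{\vec{\sigma}}{\vec{A}}}$ when we instantiate its universally quantified $U$ with $\sem{C}$; this clause then directly yields that $\rec \vec{\pAbstr{\objCtx_k, y_k}{g_k}} \inst \rho \inst M \in \sem{C}(\rho)$ for any $M$ supplied by the IH. For Ind-I, take $M = \inMu_k \inst \tau \inst u$ and check, given arbitrary $U$ and arguments $N_k$ from the universal clause of $\sem{\mu}$, that the outer $\rec \ldots \inst M[\rho]$ key-reduces via the $\mu$-contraction to $N_k$ instantiated by $\tau$, $u[\rho]$, and the type action applied to the recursive call; the hypothesis on the $N_k$ together with \iLemRef{ty-functor-sound} places the reduct in $U$, and axiom (3) of saturated sets (closure under backward key reduction) then lifts membership back to $M[\rho]$, using strong-normalization of $M[\rho]$ obtained from the IHs on $\tau$ and $u$.

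The Coind-E case is dual to Ind-I. The Coind-I case, which I expect to be the main obstacle, proceeds as follows: to show the corecursor inhabits $\sem{\nu \inst \id_\objCtx}$, witness the existential in the $\nu$-clause with $U = \sem{C}$. For any $k$ and $\gamma$ in the fibre of $\sem{\sigma_k}$ over the current valuation, we must show $\outNu_k \inst \gamma \inst (\corecPs{\objCtx_k, y_k}{g_k} \inst \gamma \inst M[\rho]) \in \sem{A_k}(\delta[X \mapsto \sem{C}], \gamma)$. By the $\nu$-contraction, this term key-reduces to $\tyFunc{A_k}(\corecPs{}{} \inst \id_\objCtx \inst x)\subst{g_k}{x}[\gamma, M[\rho]]$. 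The IH on each $g_k$ gives $g_k \in \sem{C \inst \sigma_k}$, and Lemma \ref{lem:ty-functor-sound} then lifts this through $A_k$ to yield membership in $\sem{A_k}(\delta[X \mapsto \sem{C}], \gamma)$. Backward key-reduction closure finishes, once strong-normalization of the outer term is verified.

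The main technical obstacle is the Coind-I case, specifically the correct orchestration of \iLemRef{ty-functor-sound}: one must take $\delta_A = \delta$ (ambient type variables) and $\delta_B = \delta[X \mapsto \sem{C}]$, and verify both that the parameters of $C$ evaluate to elements of their interpretations (needed to apply the functoriality premise) and that every $g_k$ satisfies the required membership for the lifted action to land in $\sem{A_k}$. A secondary complication, threading through all recursive/corecursive cases, is that backward key-redex closure only applies to pre-terms that are themselves strongly normalizing, so at each step one must separately argue $\SN$-membership from the IHs on the subterms plus the fact that saturated sets lie inside $\SN$.
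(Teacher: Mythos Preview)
Your overall strategy and the treatment of the structural rules, Ind-E, and Ind-I (which inlines the paper's \iLemRef{constructor-closure}) match the paper's approach. There is, however, a genuine gap in the Coind-I case, and it is precisely the step you flag as the ``main obstacle''.

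You propose to discharge the existential $U$ in the clause for $\sem{\nu}$ with $\sem{C}$. After the $\nu$-contraction the reduct is $\tyFunc{A_k}(\corecPs{\objCtx_k,y_k}{g_k} \inst \id_{\objCtx} \inst x')\subst{g_k}{x'}[\gamma,u]$. To invoke \iLemRef{ty-functor-sound} you must exhibit a $\delta_B$ such that the corecursor term (the $t$ of that lemma) lands in $\delta_B(X)$ at every index; the lemma then places the reduct in $\sem{A_k}(\delta_B,\gamma)$. Taking $\delta_B(X)=\sem{C}$ would require the corecursor to produce elements of $\sem{C}$, which it does not. Taking $\delta_B(X)=\sem{\nu}$ assumes the very conclusion you are proving. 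No single witness $U$ breaks this circularity, because the corecursor reappears after one destructor step. (Incidentally, the IH on $g_k$ gives $g_k[\rho']\in\sem{A_k\tySubst{C}{X}}(\rho')$, not $g_k\in\sem{C\inst\sigma_k}$ as you wrote.)

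The paper resolves this by isolating the coinductive content into a separate lemma, \iLemRef{corec-closure}, whose proof rests on the greatest-post-fixed-point characterisation of $\sem{\nu}$ (\iLemRef{coind-is-largest-fp}) together with monotonicity of the interpretation in the type-variable valuation (\iLemRef{monotonicity-interpretation}): one exhibits a family $V$ containing all the relevant corecursor terms, shows $V$ is closed under destructors, and concludes $V\subseteq\sem{\nu}$ pointwise. These same two lemmas are also needed in the Coind-E case: the existential witness $U$ coming from the IH on $t$ is a priori an arbitrary element of $I_\objCtx$, and one must pass from $\sem{A_k}(\delta[X\mapsto U],\gamma)$ up to $\sem{A_k}(\delta[X\mapsto\sem{\nu}],\gamma)$ via $U\subseteq\sem{\nu}$ and monotonicity. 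Your ``dual to Ind-I'' glosses over this; there is no key-redex contraction in Coind-E to exploit.
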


From the soundness, we can easily derive strong normalisation.
\begin{theorem}
  All well-typed terms are strongly normalising, that is,
  if $\typed[\objCtx_1]{t}{\pTerm[\objCtx_2]{A}}$ then
  $t \in \SN$.
\end{theorem}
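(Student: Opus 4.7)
The plan is to derive the theorem directly from the Soundness Lemma (Lemma~\ref{lem:sat-sound}), the well-definedness of the type interpretation (Lemma~\ref{lem:interpret-sat}), and the definitional property that every saturated set is contained in $\SN$. The key trick, standard in the saturated-sets method, is to use the \emph{identity valuation} $\rho_{\mathrm{id}}$ sending each variable to itself, so that the substitution on $t$ becomes a no-op.

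First, I would exhibit such a $\rho_{\mathrm{id}} \in \sem{\objCtx_1, \objCtx_2}$. Working by induction on the combined context, for the base case $\sem{\emptyset}$ contains the empty valuation. For the inductive step $\objCtx, x : B$, given $\rho_{\mathrm{id}} \in \sem{\objCtx}$ by the induction hypothesis, I need to show $\rho_{\mathrm{id}}[x \mapsto x] \in \compr{(\sem{\objCtx}, \sem{B})}_{x}$. This amounts to verifying $x \in \sem{B}(\rho_{\mathrm{id}})$. But $x \in \var \subseteq \base$, and by Lemma~\ref{lem:interpret-sat} the set $\sem{B}(\rho_{\mathrm{id}})$ is saturated, so it contains all base terms; in particular it contains $x$.

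Next, I apply the Soundness Lemma to $\typed[\objCtx_1]{t}{\pTerm[\objCtx_2]{A}}$ with $\rho = \rho_{\mathrm{id}}$. This yields
\begin{equation*}
  t[\rho_{\mathrm{id}}] \in \sem{A}(\rho_{\mathrm{id}}).
\end{equation*}
Since $\rho_{\mathrm{id}}$ is the identity substitution, $t[\rho_{\mathrm{id}}] = t$. By Lemma~\ref{lem:interpret-sat} the interpretation $\sem{A}(\rho_{\mathrm{id}})$ is a saturated set, and by the very definition of saturation (clause~1) every saturated set is contained in $\SN$. Therefore $t \in \SN$, as required.

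The main obstacle is conceptual rather than technical: one has to recognise that all the real work already sits inside Lemma~\ref{lem:sat-sound} (which in turn relies on Lemma~\ref{lem:ty-functor-sound} to handle the action of types on terms during reductions of recursors and corecursors). The only subtlety left for this theorem is the identity-valuation argument, and that reduces immediately to the design invariant $\var \subseteq \base \subseteq X$ for every $X \in \sat$. A minor care-point is that a term $t$ with nontrivial parameter context $\objCtx_2$ must be viewed through the combined context $\objCtx_1, \objCtx_2$ so that the interpretation and the identity valuation are well-matched; this is straightforward because the type interpretation in Figure~\ref{fig:type-interpretation} already treats parameter contexts uniformly via restriction and extension of $\rho$.
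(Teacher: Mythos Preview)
Your core argument is exactly the paper's: instantiate Lemma~\ref{lem:sat-sound} with the identity valuation, invoke Lemma~\ref{lem:interpret-sat} to know $\sem{A}(\rho_{\mathrm{id}})$ is saturated, and conclude via clause~1 of the definition of $\sat$. Your explicit construction of $\rho_{\mathrm{id}}$ from $\var \subseteq \base$ is precisely what the paper leaves implicit in the phrase ``$\rho$ being the identity''.

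There is one genuine mismatch in how you treat a non-empty parameter context $\objCtx_2$. Lemma~\ref{lem:sat-sound} is stated and proved only for judgements $\typed[\objCtx]{t}{A}$ with \emph{empty} parameter context; its proof even opens with ``Since $t$ does not have any parameters\dots''. So you cannot apply it directly to $\typed[\objCtx_1]{t}{\pTerm[\objCtx_2]{A}}$, and ``viewing $t$ through the combined context $\objCtx_1,\objCtx_2$'' does not produce the judgement $\typed[\objCtx_1,\objCtx_2]{t}{A}$: that holds for $t \inst \id_{\objCtx_2}$, which is a syntactically different term. The paper instead does a case split: for $\objCtx_2 = \emptyset$ it runs your argument, and for $\objCtx_2 \neq \emptyset$ it observes that such a $t$ does not reduce and is therefore trivially in $\SN$. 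If you prefer to avoid that observation, a clean repair of your route is to apply Soundness to $t \inst \id_{\objCtx_2}$ (which does have empty parameter context in $\objCtx_1,\objCtx_2$) and then note that any reduction step of $t$ induces one of $t \inst \id_{\objCtx_2}$ by compatible closure, so $t \inst \id_{\objCtx_2} \in \SN$ forces $t \in \SN$.
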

\begin{proof}
  We first note that terms only reduce if $\objCtx_2 = \emptyset$.
  In that case we can apply we can apply \iLemRef{sat-sound} with
  $\rho$ being the identity, so that $t \in \sem{A}(\rho)$.
  Thus, by \iLemRef{interpret-sat} and the definition of saturated sets,
  we can conclude that $t \in \SN$.
  Since $t$ does not reduce if $\objCtx_2$ is non-trivial, we also have
  in that case that $t \in \SN$.
  Hence every well-typed term is strongly normalising.
\end{proof}


\section{Conclusion}
\label{sec:conclusion}

We have introduced a type theory that is solely based on inductive and
coinductive types, in contrast to other type theories that usually have
separate type constructors for, for example, the function space.
This results in a a small set of rules for the judgements of the theory
and the corresponding reduction relation.
To justify the use of our type theory as logic, we also proved that the
reduction relation preserves types and is strongly normalising on well-typed
terms.
Combining the present theory with that in~\cite{norell:thesis}
would give us a justification for a large part Agda's current type system,
especially including coinductive types.

There are still some open questions, regarding the present type theory,
that we wish to settle in the future.
First of all, a basic property of the reduction relation that is still missing
is confluence.
Second, we have constructed the type theory with certain categorical
structures in mind, and it is easy to interpret the types and terms in a data
type closed category, see~\cite{Basold-DepCoindFibDialg}.
However, one has to be careful in showing the soundness of such an
interpretation, that is, if $A \tBireduce B$ then we better have
$\sem{A} = \sem{B}$ because of the conversion rule.
We can indeed define a generalised Beck-Chevalley condition,
c.f.~\cite{Jacobs1999-CLTT}, but it remains to be checked if this condition is
strong enough to ensure soundness.

In \remRef{no-induction}, we mentioned already that we have no dependent
recursion, i.e., no induction.
This could be added to the theory, using a lifting of the types $A_k$ in
$\indTy{X}{\objCtx}{\vec{\sigma}}{\vec{A}}$ to predicates, and we think that
the proof of strong normalisation can be adopted accordingly.
We did not develop this in the present paper in order to keep matters simple for
the time being.

Moreover, we would like to investigate at the same time the principle dual to
induction, namely proving an equality by means of a bisimulation.
It is, however, not clear how this can be properly integrated.
There are several options, all of which are not completely satisfactory:
Equip all types with bisimilarity $\sim$ as specific equality (setoid approach),
which makes the theory hard to use; add a generalised replacement rule that,
given a proof $p : t \sim s$, allows us to infer $r : P \inst s$ from
$r : P \inst t$, but this makes type checking undecidable; add a cast operator
like in observational type theory, but this introduces non-canonical
terms~\cite{Altenkirch2007-ObsEq}; treat coinductive
types similar to higher inductive types~\cite{hottbook}, but it is not clear how
such a system can be set up, though it seems to be the most promising approach.

Finally, certain acceptable facts are not provable in our theory, since we do
not have universes.
Another common feature that is missing from the type theory, is predicative
polymorphism, which is in fact justified and desirable from the categorical
point of view.


\bibliographystyle{abbrvnat}
\bibliography{FibDialg}

\begin{thebibliography}{29}
\providecommand{\natexlab}[1]{#1}
\providecommand{\url}[1]{\texttt{#1}}
\expandafter\ifx\csname urlstyle\endcsname\relax
  \providecommand{\doi}[1]{doi: #1}\else
  \providecommand{\doi}{doi: \begingroup \urlstyle{rm}\Url}\fi

\bibitem[{Abel} and {Pientka}(2013)]{Abel2013}
A.~{Abel} and B.~{Pientka}.
\newblock Wellfounded recursion with copatterns: a unified approach to
  termination and productivity.
\newblock In \emph{{{ICFP}}}, pages 185--196, 2013.
\newblock \doi{10.1145/2500365.2500591}.

\bibitem[{Abel} et~al.(2013){Abel}, {Pientka}, {Thibodeau}, and
  {Setzer}]{Abel2013-Copatterns}
A.~{Abel}, B.~{Pientka}, D.~{Thibodeau}, and A.~{Setzer}.
\newblock Copatterns: {{Programming Infinite Structures}} by {{Observations}}.
\newblock In \emph{Proc. of {{POPL}}}, pages 27--38. {ACM}, 2013.
\newblock \doi{10.1145/2429069.2429075}.

\bibitem[{Aczel}(1988)]{Aczel:nonwfs}
P.~{Aczel}.
\newblock \emph{Non-well-founded {{Sets}}}.
\newblock Number~14 in Lecture Notes. {Center for the Study of Language and
  Information, Stanford University}, 1988.

\bibitem[{Agda}(2015)]{Agda:system}
{Agda}.
\newblock Agda {{Documentation}}.
\newblock Technical report, Programming Logic group, Chalmers and Gothenburg
  University, 2015.
\newblock URL \url{http://wiki.portal.chalmers.se/agda/}.
\newblock Version 2.4.2.5.

\bibitem[{Ahrens} et~al.(2015){Ahrens}, {Capriotti}, and
  {Spadotti}]{Ahrens:MTypes-HoTT}
B.~{Ahrens}, P.~{Capriotti}, and R.~{Spadotti}.
\newblock Non-wellfounded trees in {{Homotopy Type Theory}}.
\newblock \emph{ArXiv150402949 Cs Math}, Apr. 2015.

\bibitem[{Altenkirch} et~al.(2007){Altenkirch}, {McBride}, and
  {Swierstra}]{Altenkirch2007-ObsEq}
T.~{Altenkirch}, C.~{McBride}, and W.~{Swierstra}.
\newblock Observational {{Equality}}, {{Now}}!
\newblock In \emph{Proc. of {{PLPV}} '07}, pages 57--68. {ACM}, 2007.
\newblock \doi{10.1145/1292597.1292608}.

\bibitem[{Basold}(2015)]{Basold-DepCoindFibDialg}
H.~{Basold}.
\newblock Dependent {{Inductive}} and {{Coinductive Types}} are {{Fibrational
  Dialgebras}}.
\newblock In R.~{Matthes} and M.~{Mio}, editors, \emph{Proceedings of {{FICS}}
  '15}, volume 191 of \emph{EPTCS}, pages 3--17. {Open Publishing Association},
  Sept. 2015.
\newblock \doi{10.4204/EPTCS.191.3}.

\bibitem[{Basold}(2016)]{Basold:code}
H.~{Basold}.
\newblock {Code Repository}, 2016.
\newblock URL \url{http://cs.ru.nl/~hbasold/code/}.

\bibitem[{Basold} and {Hansen}(2015)]{BasoldHansen:Welldef-Equiv-CoInd}
H.~{Basold} and H.~H. {Hansen}.
\newblock Well-definedness and {{Observational Equivalence}} for
  {{Inductive-Coinductive Programs}}.
\newblock \emph{To Appear in JLC}, 2015.

\bibitem[{Bertot} and {Cast{\'e}ran}(2004)]{BertotC04}
Y.~{Bertot} and P.~{Cast{\'e}ran}.
\newblock \emph{Interactive {{Theorem Proving}} and {{Program Development}} -
  {{Coq}}'{{Art}}: {{The Calculus}} of {{Inductive Constructions}}}.
\newblock Texts in Theoretical Computer Science. An EATCS Series. {Springer},
  2004.

\bibitem[{Bizjak} et~al.(2016){Bizjak}, {Clouston}, {Grathwohl},
  {M{\o}gelberg}, and {Birkedal}]{Bizjak:GuardedDepRecTypes}
A.~{Bizjak}, R.~{Clouston}, H.~B. {Grathwohl}, R.~E. {M{\o}gelberg}, and
  L.~{Birkedal}.
\newblock Guarded {{Dependent Type Theory}} with {{Coinductive Types}}.
\newblock In \emph{Proceedings of {{FOSSACS}}'16}, 2016.

\bibitem[{cLab}(2016)]{cLab:FinalChain-TT}
{cLab}.
\newblock Type {{Theoretic Interpretation}} of the {{Final Chain}}, 2016.
\newblock URL
  \url{https://coalg.org/clab/Type_Theoretic_Interpretation_of_the_Final_Chain}.

\bibitem[Constable(1997)]{Constable:NuprlTT}
R.~L. Constable.
\newblock The {{Structure}} of {{Nuprl}}'s {{Type Theory}}.
\newblock In \emph{Logic of {{Computation}}}, number 157 in NATO ASI Series,
  pages 123--155. {Springer Berlin Heidelberg}, 1997.

\bibitem[{Coq Development Team}(2012)]{Coq:manual}
{Coq Development Team}.
\newblock The {{Coq}} proof assistant reference manual.
\newblock Technical report, LogiCal Project, 2012.
\newblock URL \url{http://coq.inria.fr}.
\newblock Version 8.4.

\bibitem[{Geuvers}(1994)]{Geuvers:SN-CoC-SAT}
H.~{Geuvers}.
\newblock A short and flexible proof of strong normalization for the calculus
  of constructions.
\newblock In P.~{Dybjer}, B.~{Nordstr{\"o}m}, and J.~{Smith}, editors,
  \emph{Types for {{Proofs}} and {{Programs}}}, number 996 in LNCS, pages
  14--38. {Springer Berlin Heidelberg}, June 1994.
\newblock \doi{10.1007/3-540-60579-7_2}.

\bibitem[{Gim{\'e}nez}(1995)]{Gimenez-RecursiveSchemes}
E.~{Gim{\'e}nez}.
\newblock Codifying {{Guarded Definitions}} with {{Recursive Schemes}}.
\newblock In \emph{Selected {{Papers}} from the {{TYPES}} '94 {{Workshop}}},
  pages 39--59, London, UK, 1995. {Springer-Verlag}.

\bibitem[{Hagino}(1987)]{Hagino-Dialg}
T.~{Hagino}.
\newblock A typed lambda calculus with categorical type constructors.
\newblock In \emph{Category {{Theory}} in {{Computer Science}}}, pages
  140--157, 1987.

\bibitem[{Hamana} and {Fiore}(2011)]{HamanaFiore-GADT}
M.~{Hamana} and M.~{Fiore}.
\newblock A {{Foundation}} for {{GADTs}} and {{Inductive Families}}:
  {{Dependent Polynomial Functor Approach}}.
\newblock In \emph{Proceedings of the {{Seventh WGP}}}, WGP '11, pages 59--70,
  New York, NY, USA, 2011. {ACM}.
\newblock \doi{10.1145/2036918.2036927}.

\bibitem[{Jacobs}(1999)]{Jacobs1999-CLTT}
B.~{Jacobs}.
\newblock \emph{Categorical {{Logic}} and {{Type Theory}}}.
\newblock Number 141 in Studies in Logic and the Foundations of Mathematics.
  {North Holland}, Amsterdam, 1999.

\bibitem[{Lambek} and {Scott}(1988)]{LambekScott-HOCatLog}
J.~{Lambek} and P.~J. {Scott}.
\newblock \emph{Introduction to {{Higher-Order Categorical Logic}}}.
\newblock {Cambridge University Press}, Mar. 1988.

\bibitem[{Martin-L{\"o}f}(1975)]{MartinLof-ModelsITT}
P.~{Martin-L{\"o}f}.
\newblock About {{Models}} for {{Intuitionistic Type Theories}} and the
  {{Notion}} of {{Definitional Equality}}.
\newblock In \emph{3rd {{Scandinavian Logic Symposium}}}, pages 81--109. {North
  Holland and American Elsevier}, 1975.

\bibitem[{Norell}(2007)]{norell:thesis}
U.~{Norell}.
\newblock \emph{Towards a practical programming language based on dependent
  type theory}.
\newblock PhD thesis, Chalmers University of Technology, G{\"o}teborg, Sweden,
  Sept. 2007.

\bibitem[Paulin{-}Mohring(1993)]{Paulin-Mohring93}
C.~Paulin{-}Mohring.
\newblock Inductive definitions in the system {C}oq - rules and properties.
\newblock In M.~Bezem and J.~F. Groote, editors, \emph{International Conference
  on Typed Lambda Calculi and Applications, {TLCA}, Proceedings}, volume 664 of
  \emph{LNCS}, pages 328--345. Springer, 1993.

\bibitem[{Sacchini}(2013)]{Sacchini-TypeBasedProductivity}
J.~{Sacchini}.
\newblock Type-{{Based Productivity}} of {{Stream Definitions}} in the
  {{Calculus}} of {{Constructions}}.
\newblock In \emph{2013 28th {{Annual IEEE}}/{{ACM Symposium}} on {{Logic}} in
  {{Computer Science}} ({{LICS}})}, pages 233--242, June 2013.
\newblock \doi{10.1109/LICS.2013.29}.

\bibitem[{The Univalent Foundations Program}(2013)]{hottbook}
{The Univalent Foundations Program}.
\newblock \emph{Homotopy {{Type Theory}}: {{Univalent Foundations}} of
  {{Mathematics}}}.
\newblock {http://homotopytypetheory.org/book}, Institute for Advanced Study,
  2013.

\bibitem[{Troelstra} and {van Dalen}(1988)]{TroelstraDalen-ConstrMath}
A.~S. {Troelstra} and D.~{van Dalen}.
\newblock \emph{Constructivism in {{Mathematics}}: {{An Introduction}}}.
\newblock {North-Holland}, 1988.

\bibitem[{van den Berg}(2006)]{vanDenBerg:thesis}
B.~{van den Berg}.
\newblock \emph{Predicative topos theory and models for constructive set
  theory}.
\newblock Phd, University of Utrecht, 2006.

\bibitem[{van den Berg} and {De Marchi}(2007)]{vdBerg-Non-wellfoundedTrees}
B.~{van den Berg} and F.~{De Marchi}.
\newblock Non-well-founded trees in categories.
\newblock \emph{Annals of Pure and Applied Logic}, 146\penalty0 (1):\penalty0
  40--59, Apr. 2007.
\newblock ISSN 0168-0072.
\newblock \doi{10.1016/j.apal.2006.12.001}.

\bibitem[{Werner}(1994)]{Werner1994}
B.~{Werner}.
\newblock \emph{Une th{\'e}orie des {{Constructions Inductives}}}.
\newblock Phd, Universit{\'e} Paris VII, 1994.

\end{thebibliography}

\clearpage
\appendix
\section{Examples}
\label{app:examples}

\begin{example}
  In \iExRef{initial-object}, we have defined the bottom type (initial object)
  to be $\bot = \inObj = \indTy{X}{}{\varepsilon_1}{X}$.
  Note that this gave us the elimination principle
  \begin{gather*}
    \AxiomC{$\validTypeN[][]{C}[]{i}$}
    \UnaryInfC{$\typed[]{E^\bot_C}{\pTerm[(y : \bot)]{C}}$}
    \DisplayProof
  \end{gather*}
  in which $C$ is not allowed to have any dependencies.
  So, to make $\bot$ usable with dependent types, we need to switch to
  fibred initial objects:
  \begin{equation*}
    \bot_{\objCtx} \colonequals
    \indTy{X}{\objCtx}{\id_{\objCtx}}{X \inst \id_{\objCtx}} \inst \id_{\objCtx},
  \end{equation*}
  which corresponds to the following generalised algebraic data type.
  \begin{lstlisting}[language=Agda,mathescape=true,columns=flexible]
data $\bot_{\objCtx}$ : $(x_1 : A_1)$ $\dotsm$ $(x_n : A_n)$ $\to$ Set where
  $\inMu$ : $(x_1 : A_1)$ $\dotsm$ $(x_n : A_n)$ $\to$ $\bot_{\objCtx}$ $x_1$ $\dotsm$ $x_n$ $\to$ $\bot_{\objCtx}$ $x_1$ $\dotsm$ $x_n$
  \end{lstlisting}
  Similar to before, we define
  $E^\bot_C \colonequals \recP{\objCtx, y}{y} \inst \id_\objCtx$,
  to obtain the elimination principle for falsum in context $\objCtx$:
  \begin{gather*}
    \AxiomC{$\validTypeN[][\objCtx]{C}[]{i}$}
    \UnaryInfC{$\typed[\objCtx]{E^\bot_C}{\pTerm[(x : \bot_{\objCtx})]{C}}$}
    \DisplayProof
  \end{gather*}
  The corresponding derivation is given as follows.
  \begin{gather*}
    \AxiomC{$\validTypeN[][\objCtx]{C}[]{i}$}
    \UnaryInfC{$\validTypeN[][]{\pAbstr{\objCtx}{C}}[\objCtx]{i}$}
    \AxiomC{$\typed[\objCtx, y : C]{y}{C}$}
    \RightLabel{\IndE}
    \BinaryInfC{$\typed[]{\recP{\objCtx, y}{y}}{
        \pTerm[(\objCtx, y : \bot_\objCtx)]
        {(\pAbstr{\objCtx}{C}) \inst \id_\objCtx}}$}
    \RightLabel{\Inst}
    \UnaryInfC{$\typed[\objCtx]
      {\recP{\objCtx, y}{y} \inst \id_\objCtx}
      {\pTerm[(x : \bot_\objCtx)]{
          (\pAbstr{\objCtx}{C}) \inst \id_\objCtx}}$}
    \RightLabel{\Conv}
    \UnaryInfC{$\typed[\objCtx]
      {\recP{\objCtx, y}{y} \inst \id_\objCtx}{
        \pTerm[(x : \bot_\objCtx)]{C}}$}
    \DisplayProof
  \end{gather*}
  Note that the abstraction and instantiation steps consist of several
  steps, one for each variable in $\objCtx$.
  \qedDef
\end{example}

\begin{example}[Propositional Equality]
  In \iExRef{general-(co)product}, we remarked that the propositional equality
  arises as left adjoint to the contraction $\reidx{\delta}$, hence can
  be represented in the current type system.
  Let us elaborate this a bit more.
  First, note that the common way of implementing propositional equality in
  Agda is by means of the following data type.
  \begin{lstlisting}[language=Agda,mathescape=true,columns=flexible]
data $\Eq_A$ : $A \to A \to$ Set where
  $\refl$ : $(x : A)$ $\to$ $\Eq_A$ $x$ $x$
  \end{lstlisting}
  So it is clear that this type can be represented in our type system by
  \begin{equation*}
    \Eq_A(x,y)  \colonequals
    \indTy{X}{(x : A, y : A)}{\delta}{\top} \inst x \inst y,
  \end{equation*}
  where $\ctxMor{\delta}{(x : A)}{(x : A, y : A)}$ is the diagonal
  $\delta = (x, x)$.
  This type has the usual constructor
  $\refl \colonequals \inMu_1 : \pTerm[(x : A)]{\Eq_A(x,x)}$ and
  (weak) elimination rule
  \begin{gather*}
    \AxiomC{$\validTypeN[][x : A, y : A]{C}[]{i}$}
    \AxiomC{$\typed[x : A]{p}{C \substN{x/x, x/y}}$}
    \BinaryInfC{$\typed[x : A, y : A, z : \Eq_A(x,y)]{
        E^{\Eq}_{x,y}(p,z)}{C}$}
    \DisplayProof
  \end{gather*}
  This is of course not the full J-rule (or path induction) but it is already
  strong enough to prove, for example, the replacement (or substitution or
  transport) rule:
  \begin{gather*}
    \insertBetweenHyps{\hskip .1in}
    \AxiomC{$\validTypeN[][x : A]{P}[]{i}$}
    \AxiomC{$\typed[x : A]{p}{P}$}
    \AxiomC{$\typed[x, y : A]{t}{\Eq_A(x,y)}$}
    \TrinaryInfC{$\typed[y : A]{
        \repl(p,t)}{P \subst{y}{x}}$}
    \DisplayProof
  \end{gather*}
  by using $C = P\subst{y}{x}$, weakening this type to
  $\validTypeN[][x,y : A]{C}[]{i}$, and putting
  \begin{equation*}
    \repl(p,t) \colonequals E^{\Eq}_{x,y}(p,t).
    \tag*{\qedDef}
  \end{equation*}
\end{example}

\begin{example}[Primitive corecursion and successor]
  In \iExRef{extended-nats}, we claimed that the successor map
  $s_\infty : \pTerm[(y : \EN)]{\EN}$ on the extended naturals can be defined
  by means of primitive corecursion.
  Let us introduce this principle and the definition of $s_\infty$.

  First, by primitive corecursion (for $\EN$) we mean that for any $C$
  and $d$, we can find an $h$ as in the following diagram
  \begin{equation*}
    \begin{tikzcd}[column sep=2cm]
      C \rar{h} \dar{d} & \EN \dar{\outNu} \\
      (\oneT + C) + \EN \rar{\coprodArr{\id_\oneT + h, \outNu}}
      & \oneT + \EN
    \end{tikzcd}
  \end{equation*}
  We can derive this principle as follows.
  Note that if we define
  \begin{equation*}
    a = \coprodArr{
      \coprodArr{\kappa_1 \inst x, \kappa_2 \inst \kappa_1 \inst x},
      \kappa_2 \inst \kappa_2 \inst x
    }
  \end{equation*}
  we have $a : (\oneT + C) + \EN \ctxTo \oneT + (C + \EN)$.
  Now suppose we are given $\typed[\objCtxA, y : C]{d}{(\oneT + C) + \EN}$,
  if we define $d' = a \inst \, (\coprodArr{d,\kappa_2 \inst y} \inst y)$,
  then we find
  \begin{equation*}
    \AxiomC{$\typed[\objCtxA, y : \EN]{\kappa_2 \inst y}{(\oneT + C) + \EN}$}
    \UnaryInfC{$
      \typed[\objCtxA, y : C + \EN]{
        \coprodArr{d,\kappa_2 \inst y} \inst y}{(\oneT + C) + \EN}$}
    \UnaryInfC{$\typed[\objCtxA, y : C + \EN]{d'}{\oneT + (C + \EN)}$}
    \DisplayProof
  \end{equation*}
  This gives us, by $\EN$ being a coinductive type,
  \begin{equation*}
    \typed[\objCtxA]{\corecP{y}{d'}}{\pTerm[y:C + \EN]{\EN}}
  \end{equation*}
  and thus we can define $h = \corecP{y}{d'} \inst \, (\kappa_1 \inst y)$
  to get
  \begin{equation*}
    \typed[\objCtxA, y : C]{h}{\EN}.
  \end{equation*}

  Finally, we can use primitive corecursion to define $s_\infty$ by
  taking the extension of
  \begin{equation*}
    \coprodArr{\kappa_1 \circ \kappa_2 \circ \kappa_2, \kappa_2}
    : \EN + \EN \to (\oneT + (\EN + \EN)) + \EN,
  \end{equation*}
  giving us $h : \EN + \EN \to \EN$.
  Thus we put $s_\infty = h \circ \kappa_1$.
  \qedDef
\end{example}

\section{Meta Properties}
\label{app:basic-props}

\subsection{Basic Meta Properties}

\begin{proposition}
  \label{prop:derived-structural-rules}
  The following rules holds for the calculus given in \iSecRef{syntax}.
  \begin{itemize}
  \item Substitution
    \begin{gather*}
      \AxiomC{$\validTypeN[!][\objCtx_1, x:A, \objCtx_2]{B}[\objCtx_3]{i}$}
      \AxiomC{$\typed[\objCtx_1]{t}{A}$}
      \BinaryInfC{$
        \validTypeN[!][\objCtx_1, \objCtx_2 \subst{t}{x}]
        {B \subst{t}{x}}[\objCtx_3 \subst{t}{x}]{i}$}
      \DisplayProof
      \\[7pt]
      \AxiomC{$\typed[\objCtx_1, x:A, \objCtx_2]{s}{T}$}
      \AxiomC{$\typed[\objCtx_1]{t}{A}$}
      \BinaryInfC{
        $\typed[\objCtx_1, \objCtx_2 \subst{t}{x}]
        {s \subst{t}{x}}
        {T \subst{t}{x}}$}
      \DisplayProof
    \end{gather*}
  \item Exchange
     \begin{gather*}
      \AxiomC{$\validTypeN[!][\objCtx_1, x:A, y : B, \objCtx_2]{C}[\objCtx_3]{i}$}
      \AxiomC{$x \not \in \fv(B)$}
      \BinaryInfC{$
        \validTypeN[!][\objCtx_1, y : B, x:A, \objCtx_2]{C}[\objCtx_3]{i}$}
      \DisplayProof
      \\[7pt]
      \AxiomC{$
        \typed[\objCtx_1, x:A, y : B, \objCtx_2]{t}{\pTerm[\objCtx_3]{C}}$}
      \AxiomC{$x \not \in \fv(B)$}
      \BinaryInfC{$\typed[\objCtx_1, y:B, x:A, \objCtx_2]
        {t}{\pTerm[\objCtx_3]{C}}$}
      \DisplayProof
    \end{gather*}
  \item Contraction
    \begin{gather*}
      \AxiomC{$\validTypeN[!][\objCtx_1, x:A, y:A, \objCtx_2]{C}[\objCtx_3]{i}$}
      \UnaryInfC{$\validTypeN[!][\objCtx_1, x:A, \objCtx_2 \subst{x}{y}]
        {C \subst{x}{y}}[\objCtx_3 \subst{x}{y}]{i}$}
      \DisplayProof
      \\[7pt]
      \AxiomC{$\typed[\objCtx_1, x:A, y:A, \objCtx_2]{t}{\pTerm[\objCtx_3]{C}}$}
      \UnaryInfC{$\typed[\objCtx_1, x:A, \objCtx_2 \subst{x}{y}]
        {t\subst{x}{y}}{\pTerm[\objCtx_3 \subst{x}{y}]{C \subst{x}{y}}}$}
      \DisplayProof
    \end{gather*}
  \end{itemize}
\end{proposition}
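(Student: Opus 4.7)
The plan is to prove the six rules by simultaneous induction on derivations, extended to cover context morphisms. Since types, terms, contexts, and context morphisms are defined mutually in Definitions~\ref{def:wffctx}--\ref{def:wfterms}, the induction must proceed in parallel over all these judgement forms. For this to go through, I first strengthen each claim by a clause for context morphisms: for substitution, if $\ctxMor{\sigma}{\objCtx_1, x : A, \objCtx_2}{\objCtx_3}$ and $\typed[\objCtx_1]{t}{A}$, then $\ctxMor{\sigma\subst{t}{x}}{\objCtx_1, \objCtx_2\subst{t}{x}}{\objCtx_3\subst{t}{x}}$, and analogously for exchange and contraction. This strengthened statement is precisely what the inductive hypothesis needs when handling the \TyInst, \Inst, and \FPTy rules.

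First I would dispatch the easy base cases, namely the unit type rule, the type variable rule \TyVarI, and the projection rule. The weakening rules \TyVarWeak, \TyWeak, and \rulelabel{Term-Weak} follow immediately by applying the inductive hypothesis to the premise and then re-applying weakening. The \TyInst rule uses the syntactic identity $(A \inst r)\subst{t}{x} = A\subst{t}{x} \inst r\subst{t}{x}$; the term rule \Inst is handled in the same way. The \PAbstr rule requires $\alpha$-renaming the parameter variable away from $x$ and from $\fv(t)$ but otherwise goes through unchanged.

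The recursive type formation rule \FPTy and the recursor/corecursor rules \rulelabel{Ind-E} and \rulelabel{Coind-I} involve several pieces simultaneously: the indexed context morphisms $\sigma_k$, the component types or bodies $A_k$ and $g_k$, and the fresh local contexts $\objCtx_k$. Here the extended induction hypothesis on context morphisms delivers well-typedness of $\sigma_k\subst{t}{x}$, while the hypothesis on types, resp.\ terms, handles $A_k$, resp.\ $g_k$. Care must be taken to $\alpha$-rename the bound variables in $\objCtx_k$ so as to avoid capture with $x$ or with the free variables of $t$, but this is routine.

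The hard part will be the conversion rule \rulelabel{Conv}: to apply the inductive hypothesis we must know that the reduction relation $\tBireduce$ on types is preserved by the structural operation. For substitution this amounts to the compatibility lemma $A \tBireduce B \Rightarrow A\subst{t}{x} \tBireduce B\subst{t}{x}$, which reduces in turn to the fact that term reduction is preserved under substitution by a well-typed term. The latter follows by inspecting the two contraction schemes of Figure~\ref{fig:term-contraction}, using that the type action $\tyFunc{A_k}$ commutes with substitution on its argument, a direct consequence of the clauses in \iDefRef{type-action}. Exchange and contraction then follow by essentially the same induction, the only additional ingredient being the corresponding compatibility of reduction, which is proved in the same way; the side condition $x \notin \fv(B)$ in the exchange rule is preserved in every subderivation because no introduction rule creates occurrences of $x$ in $B$.
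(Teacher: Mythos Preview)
Your proposal is correct and takes the same approach as the paper, namely a simultaneous induction over the type and term derivations. The paper's own argument is far terser—it merely notes that the induction is straightforward and singles out the rules whose conclusion has a non-empty term-variable context (instantiation, weakening, projection) as the only interesting cases—so your explicit strengthening to context morphisms and your handling of the \rulelabel{Conv} case via compatibility of reduction with substitution go beyond what the paper spells out but remain entirely in the same spirit.
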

\begin{proof}
  In each case, the rules are straightforwardly proved by simultaneous
  induction over types and terms.
  It should be noted that for types only the instantiation and weakening
  rules appear as cases, since the other rules have only types without
  free variables in the conclusion.
  Similarly, only terms constructed by means of the the projection, weakening
  or the instantiation rule appear as cases in the proofs.
\end{proof}

Analogously, the substitution, exchange and contraction rules for type variables
are valid in the calculus, as well.

\subsection{Subject Reduction}

\begin{proof}[Proof of \lemRef{correctness-type-action}]
  Recall that we have to prove
  \begin{equation*}
    \AxiomC{$
      \validTypeN[X: {\pType[\objCtx_1]}][\objCtx_2']{C}[\objCtx_2]{i}$}
    \AxiomC{$\typed[\objCtx_1, x : A]{t}{B}$}
    \BinaryInfC{$
      \typed[\objCtx_2', \objCtx_2, x : \tyFunc{C}(A)]
      {\tyFunc{C}(t)}{\tyFunc{C}(B)}$}
    \DisplayProof
  \end{equation*}
  We want to prove this by induction in the derivation of $C$,
  thus we need to generalise the statement to arbitrary type constructor
  contexts $\tyCtx$.
  So let
  $\tyCtx = X_1 : \pType[\objCtx_1], \dotsc, X_n : \pType[\objCtx_n]$
  be a context,
  $\validTypeN[!][\objCtx']{C}[\objCtx]{j}$
  a type and
  $\typed[\objCtx_i, x : A_i]{t_i}{B_i}$ terms for $i = 1, \dotsc, n$.
  We show that
  $\typed[\objCtx', \objCtx, x : \tyFunc{C}(\vec{A})]{\tyFunc{C}(\vec{t})}{\tyFunc{C}(\vec{B})}$
  holds by induction in the derivation of $C$.

  The induction base has two cases.
  First, it is clear that if $\tyCtx = \emptyset$, then $\vec{A} = \varepsilon$
  and $\tyFunc{C}(\vec{A}) = C$, thus the definition
  is thus well-typed.
  Second, if $C = X_i$ for some $i$, then we immediately have
  \begin{equation*}
    \tyFunc{C}(\vec{A})
    = C \tySubst{\vec{\pAbstr{\objCtx_i}{A}}}{\vec{X}} \, \id_{\objCtx_i}
    = (\pAbstr{\objCtx_i}{A_i}) \, \id_{\objCtx_i}
    \paramReduce A_i,
  \end{equation*}
  thus, by \rulelabel{Conv} and the type of $t_i$, we have
  $\typed[\objCtx_i, x : \tyFunc{C}(\vec{A})]{t_i}{\tyFunc{C}(\vec{B})}$ as
  required.

  In the induction step, we have five cases for $C$.
  \begin{itemize}
  \item The type correctness for $\tyFunc{C}$ in case $C$ has been constructed
    by Weakening for type and term variables is immediate by
    induction and the definition of $F$ in these cases.
  \item $C = C' \inst s$ and $\objCtx = \objCtxA \subst{s}{y}$ with
    \begin{equation*}
      \AxiomC{$\validTypeN[\tyCtx][\objCtx']
        {C'}[(y : D, \objCtxA)]{i}$}
      \AxiomC{$\typed[\objCtx']{s}{D}$}
      \BinaryInfC{$\validTypeN[\tyCtx][\objCtx']
        {C' \inst s}[\objCtxA \subst{s}{y}]{i}$}
      \DisplayProof
    \end{equation*}
    By induction we have then that
    $\typed[\objCtx', y : D, \objCtxA, x : \tyFunc{C'}(\vec{A})]
    {\tyFunc{C'}(\vec{t})}{\tyFunc{C'}(\vec{A})}$,
    thus, since
    \begin{align*}
      \tyFunc{C'}(\vec{A})
      & = C' \tySubst{\vec{\pAbstr{\objCtx_1}{A}}}{\vec{X}}
      \inst \id_{y : D, \objCtxA} \\
      & = C' \tySubst{\vec{\pAbstr{\objCtx_1}{A}}}{\vec{X}}
      \inst y \inst \id_{\objCtxA},
    \end{align*}
    we get by \iPropRef{derived-structural-rules}
    \begin{align*}
      & \objCtx', \objCtxA \subst{s}{y},
      x : C' \tySubst{\vec{\pAbstr{\objCtx_1}{A}}}{\vec{X}}
      \inst s \inst \id_{\objCtxA \subst{s}{y}} \\
      & \qquad \quad \vdash
      {\tyFunc{C'}(\vec{t})\subst{s}{y}}
      :
      {C' \tySubst{\vec{\pAbstr{\objCtx_1}{B}}}{\vec{X}}
        \inst s \inst \id_{\objCtxA \subst{s}{y}}}.
    \end{align*}
    As we now have
    \begin{align*}
      F_{C' \inst s}(\vec{A})
      & = (C' \inst s) \tySubst{\vec{\pAbstr{\objCtx_1}{A}}}{\vec{X}}
      \inst \id_{\objCtxA  \subst{s}{y}} \\
      & = C' \tySubst{\vec{\pAbstr{\objCtx_1}{A}}}{\vec{X}}
      \inst s \inst \id_{\objCtxA  \subst{s}{y}}
    \end{align*}
    and $F_{C' \inst s}(\vec{t}) = \tyFunc{C'}(\vec{t})\subst{s}{y}$,
    we find that
    \begin{equation*}
      \typed[\objCtx', \objCtxA \subst{s}{y},
      x : F_{C' \inst s}(\vec{A})]
      {F_{C' \inst s}(\vec{t})}
      {F_{C' \inst s}(\vec{B})}
    \end{equation*}
    as expected.
  \item $C = \pAbstr{y}{C'}$ with
    $\validTypeN[!][\Gamma', y : D]{C'}[\Gamma]{i}$.
    This gives us, by induction,
    $\typed[\Gamma', y : D, \Gamma, x : \tyFunc{C'}^\tyCtx(\vec{A})]
    {\tyFunc{C'}^\tyCtx(\vec{t})}{\tyFunc{C'}^\tyCtx(\vec{B})}$.
    Now we observe that
    \begin{align*}
      \tyFunc{C'}^\tyCtx(\vec{A})
      & = C' \tySubst{\vec{\pAbstr{\objCtx_1}{A}}}{\vec{X}} \inst \id_\objCtx \\
      & \paramReduceInv (
      \pAbstr{y}{C' \tySubst{\vec{\pAbstr{\objCtx_1}{A}}}{\vec{X}}})
      \inst y \inst \id_\objCtx \\
      & = C \tySubst{\vec{\pAbstr{\objCtx_1}{A}}}{\vec{X}}
      \inst \id_{y : D, \Gamma} \\
      & = F_{C}^\tyCtx(\vec{A}),
    \end{align*}
    which gives us, by \rulelabel{Conv}, that
    $\typed[\Gamma', y : D, \Gamma, x : F_{C}^\tyCtx(\vec{A})]
    {\tyFunc{C'}^\tyCtx(\vec{t})}{F_{C}^\tyCtx(\vec{B})}$.
    Thus the definition
    $F_{\pAbstr{x}{C'}}^\tyCtx(\vec{t}) = \tyFunc{C'}^\tyCtx(\vec{t})$ is
    well-typed.
    \vspace*{5pt}
  \item $C = \indTy{Y}{\objCtx}{\vec{\sigma}}{\vec{D}}$ with
    \begin{equation*}
      \AxiomC{$
        \validTypeN[\tyCtx, Y : {\pType[\objCtx]}][\objCtxA_k]
        {D_k}[]{i}$}
      \AxiomC{$\ctxMor{\sigma_k}{\objCtxA_k}{\objCtx}$}
      \BinaryInfC{$
        \validTypeN[\tyCtx][\emptyset]
        {\indTy{Y}{\objCtx}{\vec{\sigma}}{\vec{D}}}[\objCtx]{i}$}
      \DisplayProof
    \end{equation*}
    For brevity, we define
    $R_{\vec{B}} =
    \indTy{Y}{\objCtx}{\vec{\sigma}}{\vec{D} \tySubst{\vec{B}}{\vec{X}}}$.
    Then, by induction, we have
    \begin{equation*}
      \typed[\objCtx, x : F_{D_k}^{\tyCtx, Y}(\vec{A}, R_{\vec{B}})]
      {F_{D_k}^{\tyCtx,Y}(\vec{t}, x)}
      {F_{D_k}^{\tyCtx, Y}(\vec{B}, R_{\vec{B}})}
    \end{equation*}
    Now we note that
    $F_{D_k}^{\tyCtx, Y}(\vec{A}, R_{\vec{B}})
    = D_k \tySubst{\vec{\pAbstr{\objCtx_i}{A}}}{
      \vec{X}} \tySubst{R_{\vec{B}}}{Y}$.%
    \footnote{
      Note that the second substitution does not contain a parameter
      abstraction, as $R_{\vec{B}}$ is closed.}
    If we define
    \begin{equation*}
      g_k =
      \inMu_k
      \inst \id_{\objCtxA_k} \inst
      \left(F_{D_k}^{\tyCtx, Y}\left(
          \vec{t},\id_{R_{\vec{B}}}
        \right)\right),
    \end{equation*}
    where $\inMu_k$ refers to
    $\inMuT[Y]{\objCtx}{\vec{\sigma}}{\vec{D}
      \tySubst{\vec{\pAbstr{\objCtx_i}{B}}}{\vec{X}}}_k$
    (see the definition of $F$), then we can derive the following.
    \begin{equation*}
      \AxiomC{$
        \typed[]{\inMu_k}
        {\appCtx{\objCtxA_k}(F_{D_k}^{\tyCtx, Y}(\vec{A}, R_{\vec{B}})
          \ctxTo R_{\vec{B}} \inst \sigma_k)}$}
      \RightLabel{\Inst}
      \UnaryInfC{$
        \typed[\objCtxA_k]{\inMu_k \inst \id_{\objCtxA_k}}{
          F_{D_k}^{\tyCtx, Y}(\vec{A}, R_{\vec{B}})
          \ctxTo R_{\vec{B}} \inst \sigma_k}$}
      \RightLabel{\Inst}
      \UnaryInfC{$
        \typed[\objCtxA_k,
          x : D_k \tySubst{\vec{\pAbstr{\objCtx_i}{A}}}{
            \vec{X}} \tySubst{R_{\vec{B}}}{Y}]
        {g_k}{R_{\vec{B}}}$}
      \RightLabel{\rulelabel{Ind-E}}
      \UnaryInfC{$
        \typed{\rec \, x \, \vec{g} \inst \id_{\objCtx}}
        {R_{\vec{A}} \inst \id_{\objCtx} \ctxTo R_{\vec{B}} \inst \id_{\objCtx}}$}
      \RightLabel{\Inst}
      \UnaryInfC{$
        \typed[\objCtx, x : R_{\vec{A}} \inst \id_{\objCtx}]
        {\rec \, x \, \vec{g} \inst \id_{\objCtx} \inst x}
        {R_{\vec{B}} \inst \id_{\objCtx}}$}
      \DisplayProof
    \end{equation*}
    Finally, we have
    \begin{align*}
      \tyFunc{C}^{\tyCtx}(\vec{A})
      & = \indTy{Y}{\objCtx}{\vec{\sigma}}{\vec{D}}
        \tySubst{\vec{\pAbstr{\objCtx_i}{A}}}{\vec{X}} \inst \id_{\objCtx} \\
      & = R_{\vec{A}} \inst \id_{\objCtx},
    \end{align*}
    which implies, by the above derivitations, that we indeed have
    \begin{equation*}
      \typed[\objCtx, x : \tyFunc{C}^{\tyCtx}(\vec{A})]
      {\tyFunc{C}^{\tyCtx}(\vec{t})}
      {\tyFunc{C}^{\tyCtx}(\vec{B})}.
    \end{equation*}
  \item $C = \coindTy{Y}{\objCtx}{\vec{\sigma}}{\vec{D}}$.
    This case is treated analogously to that for inductive types.
  \end{itemize}
  This concludes the induction, thus \eqRef{typing-functor-from-type}
  indeed holds for all $C$, $\vec{A}$, $\vec{B}$ and $\vec{t}$.
\end{proof}

\section{Strong Normalisation}
\label{app:sn}

\subsection*{Pre-Types and -Terms}

\begin{definition}[Pre-Types]
  See \iFigRef{pre-types}
  \begin{figure*}
  \begin{gather*}
    \AxiomC{}
    \RightLabel{\rulelabel{PT}-$\oneT$}
    \UnaryInfC{$\validTypeN[][]{\oneT}[]{i}$}
    \DisplayProof
    \\[7pt]
    \AxiomC{}
    \RightLabel{\rulelabel{PT-TyVar}}
    \UnaryInfC{$
      \validType[\tyCtx, X : {\pType[\objCtx]} \mid \emptyset]{X}{
        \pType[\objCtx]}$}
    \DisplayProof
    \\[7pt]
    \AxiomC{$\validTypeN{A}{i}$}
    \RightLabel{\rulelabel{PT-TyWeak}}
    \UnaryInfC{$\validTypeN[\tyCtx, X : {\pType[\objCtx]}]{A}{i}$}
    \DisplayProof
    \\[7pt]
    \AxiomC{$\validTypeN{A}{i}$}
    \RightLabel{\rulelabel{PT-Weak}}
    \UnaryInfC{$\validTypeN[\tyCtx][\objCtx_1, x]{A}{i}$}
    \DisplayProof
    \\[7pt]
    \AxiomC{$\validType[\ctx{\tyCtx}{\objCtx_1}]
      {A}{\pType[(x, \objCtx_2)]}$}
    \AxiomC{$\preTerm[\objCtx_1]{t}$}
    \RightLabel{\rulelabel{PT-Inst}}
    \BinaryInfC{$\validTypeN{A \inst t}[\objCtx_2]{i}$}
    \DisplayProof
    \\[7pt]
    \AxiomC{$\validTypeN[\tyCtx][\objCtx_1, x : A]{B}{i}$}
    \RightLabel{\rulelabel{PT-Param-Abstr}}
    \UnaryInfC{$
      \validType[\ctx{\tyCtx}{\objCtx_1}]
      {\pAbstr{x}{B}}{\pType[(x, \objCtx_2)]}$}
    \DisplayProof
    \\[7pt]
    \AxiomC{$
      \validType[\ctx{\tyCtx, X : {\pType[\objCtx]}}{\objCtx_k}]
      {A_k}{\univ{i}}$}
    \AxiomC{$\ctxMor{\sigma_k}{\objCtx_k}{\objCtx}$}
    \AxiomC{$k = 1, \dotsc, n \qquad \rho \in \{\mu, \nu\}$}
    \RightLabel{\rulelabel{PT-FP}}
    \TrinaryInfC{$
      \validTypeN[!][\emptyset]{
        \recT{\rho}{X : \pType[\Gamma]}{\vec{\sigma}}{\vec{A}}}[\Gamma]{i}$}
    \DisplayProof
  \end{gather*}
    \caption{Pre-Types}
    \label{fig:pre-types}
  \end{figure*}
\end{definition}

\begin{definition}[Pre-Terms]
  See \iFigRef{pre-terms}.
  \begin{figure*}
  \begin{gather*}
    \AxiomC{}
    \RightLabel{\rulelabel{PO-$\oneT$-I}}
    \UnaryInfC{$\preTerm[]{\unit}$}
    \bottomAlignProof
    \DisplayProof
    \qquad
    \AxiomC{$\preTerm[\objCtx_1]{t}[(x, \objCtx_2)]$}
    \AxiomC{$\preTerm[\objCtx_1]{s}$}
    \RightLabel{\rulelabel{PO-Inst}}
    \BinaryInfC{$\preTerm[\objCtx_1]{t \inst s}[\objCtx_2]$}
    \bottomAlignProof
    \DisplayProof
    \\[7pt]
    \AxiomC{$x \in \var$}
    \RightLabel{\rulelabel{PO-Proj}}
    \UnaryInfC{$\preTerm[\objCtx, x]{x}$}
    \DisplayProof
    \qquad
    \AxiomC{$\preTerm[\objCtx_1]{t}[\objCtx_2]$}
    \RightLabel{\rulelabel{PO-Weak}}
    \UnaryInfC{$\preTerm[\objCtx_1, x]{t}[\objCtx_2]$}
    \DisplayProof
    \\[7pt]
    \AxiomC{$k \in \N$}
    \RightLabel{\rulelabel{PO-Ind-I}}
    \UnaryInfC{$\preTerm[]{\inMu_k}[\objCtx, x]$}
    \DisplayProof
    \quad
    \AxiomC{$k \in \N$}
    \RightLabel{\rulelabel{PO-Coind-E}}
    \UnaryInfC{$\preTerm[]{\outNu_k}[\objCtx, x]$}
    \DisplayProof
    \\[7pt]
    \AxiomC{$\preType[][]{\indTy{X}{\objCtx}{\vec{\sigma}}{\vec{A}}}[\objCtx]$}
    \AxiomC{$\forall k = 1, \dotsc, n. \, \left( \preTerm[\objCtxA, \objCtx_k, y_k]{g_k} \right)$}
    \RightLabel{\rulelabel{PO-Ind-E}}
    \BinaryInfC{$
      \preTerm[\objCtxA]{
        \rec^{\indTy{X}{\objCtx}{\vec{\sigma}}{\vec{A}}} \vec{\pAbstr{\objCtx_k, y_k}{N_k}}}
      [(\objCtx, y)]$}
    \DisplayProof
    \\[7pt]
    \AxiomC{$\preType[][]{\coindTy{X}{\objCtx}{\vec{\sigma}}{\vec{A}}}[\objCtx]$}
    \AxiomC{$\forall k = 1, \dotsc, n. \, \left( \preTerm[\objCtxA, \objCtx_k, y_k]{g_k} \right)$}
    \RightLabel{\rulelabel{PO-Coind-I}}
    \BinaryInfC{$
      \preTerm[\objCtxA]{
        \corec^{\coindTy{X}{\objCtx}{\vec{\sigma}}{\vec{A}}} \vec{\pAbstr{\objCtx_k, y_k}{N_k}}}
      [(\objCtx, y)]$}
    \DisplayProof
  \end{gather*}
    \caption{Pre-Terms}
    \label{fig:pre-terms}
  \end{figure*}
\end{definition}

\begin{remark}
  The intuition for \iDefRef{type-action} can be better understood in terms of
  the diagrams that correspond to, for example, the definition on initial
  dialgebras.
  Put
  $R_{\vec{A}} = \indTy{Y}{\objCtx}{\vec{\sigma}}{\vec{D}
    \tySubst{\vec{\pAbstr{\objCtx_i}{A}}}{\vec{X}}}$
  and analogous for $R_{\vec{B}}$.
  Then $\tyFunc{\indTy{Y}{\objCtx}{\vec{\sigma}}{\vec{D}}}(\vec{t})$ is defined
  as the morphism $h$ in the following diagram.
  \begin{equation*}
    \begin{tikzcd}[column sep=2cm]
       \tyFunc{D_k} \left(\vec{A}, R_{\vec{A}} \right)
       \rar{\tyFunc{D_k}\left(\vec{\id}, h\right)}
       \arrow{dd}{\inMu_k}
       & \tyFunc{D_k}\left(\vec{A}, R_{\vec{B}} \right)
       \dar{\tyFunc{D_k}(\vec{t}, \id)} \\
       & \tyFunc{D_k}\left(\vec{B}, R_{\vec{B}} \right)
       \dar{\alpha_k} \\
       R_{\vec{A}} \inst \sigma_k
       \rar{h [\sigma_k]}
       & R_{\vec{B}} \inst \sigma_k
    \end{tikzcd}
  \end{equation*}
\end{remark}

\subsection{Soundness proof for saturated sets model}

\begin{lemma}
  \label{lem:interpret-ctx-mor-composition}
  For all $\ctxMor{\sigma}{\objCtx_1}{\objCtx_2}$ and
  $\ctxMor{\tau}{\objCtx_2}{\objCtx_3}$ we have
  $\sem{\tau \bullet \sigma} = \sem{\tau} \circ \sem{\sigma}$.
\end{lemma}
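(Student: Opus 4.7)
The plan is to verify the equality pointwise. Both sides are functions $\Terms^{\vars{\Gamma_1}} \to \Terms^{\vars{\Gamma_3}}$, so I would fix an arbitrary valuation $\gamma \in \Terms^{\vars{\Gamma_1}}$ together with a variable $z \in \vars{\Gamma_3}$ and compare the two resulting pre-terms.

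Unfolding via \eqRef{sem-ctx-mor} and \eqRef{compose-ctx-mor}, the left-hand side becomes $\sem{\tau \bullet \sigma}(\gamma)(z) = (\tau \bullet \sigma)(z)[\gamma] = \tau(z)[\sigma][\gamma]$, while the right-hand side becomes $(\sem{\tau} \circ \sem{\sigma})(\gamma)(z) = \sem{\tau}(\sem{\sigma}(\gamma))(z) = \tau(z)[\sem{\sigma}(\gamma)]$. The lemma therefore reduces to the standard substitution-composition identity: for every pre-term $M$ well-formed in $\Gamma_2$,
\begin{equation*}
M[\sigma][\gamma] \;=\; M[\sem{\sigma}(\gamma)].
\end{equation*}

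I would prove this identity by structural induction on $M$ in the raw syntax of \iSecRef{pre-syntax}. The variable base case is immediate from the defining equation $\sem{\sigma}(\gamma)(y) = \sigma(y)[\gamma]$: for $y \in \vars{\Gamma_2}$ we have $y[\sigma][\gamma] = \sigma(y)[\gamma] = \sem{\sigma}(\gamma)(y) = y[\sem{\sigma}(\gamma)]$. The constants $\unit$ and $\oneT$, as well as the symbols $\inMu_k$ and $\outNu_k$, are closed under substitution and so are trivial. For the composite shapes $M \inst N$, $\pAbstr{x}{M}$, $\recT{\rho}{X:\pType[\Gamma]}{\vec{\sigma}}{\vec{M}}$, and the (co)recursor terms, both substitutions distribute through the outer constructor and the induction hypotheses on the subterms conclude the step.

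The only point that needs genuine care, and therefore the main obstacle, is binder management in the (co)recursor forms $\rec^{\mu}\vec{\pAbstr{\Gamma_k, y_k}{N_k}}$ and their duals. For each branch I would $\alpha$-rename so that the variables in $\Gamma_k$ and $y_k$ are fresh for the codomains of $\sigma$ and $\gamma$, and then check that extending both substitutions by the identity on those fresh names commutes with $\sem{-}$, that is, the extended $\sem{\sigma}(\gamma)$ and the extension of $\sigma$ followed by the extension of $\gamma$ still agree on the bound variables. This is entirely routine given the pre-term machinery, but it is the step where the induction would go wrong if freshness were not handled carefully. Once this bookkeeping is discharged, applying the induction hypothesis to each body $N_k$ completes the argument and the lemma follows.
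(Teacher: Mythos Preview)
Your proposal is correct and follows the same pointwise strategy as the paper: fix a valuation and a variable in $\vars{\Gamma_3}$, unfold both sides, and reduce to the substitution-composition identity $M[\sigma][\gamma] = M[\sem{\sigma}(\gamma)]$. The only difference is that the paper treats this identity as a one-line fact and does not spell out the structural induction on raw terms that you provide; your extra detail on binder hygiene in the (co)recursor cases is sound but goes beyond what the paper records.
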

\begin{proof}
  For all $\rho : \vars{\Gamma_1} \to \Terms$ and $x \in \vars{\objCtx_3}$
  we have
  \begin{align*}
    \sem{\tau \bullet \sigma}(\rho)(x)
    & = (\tau \bullet \sigma)(x)[\rho]
    = \tau(x)[\sigma][\rho] \\
    & = \sem{\tau}(\sem{\sigma}(\rho))(x)
    = (\sem{\tau} \circ \sem{\sigma})(\rho)(x)
  \end{align*}
  as required.
\end{proof}

\begin{lemma}
  \label{lem:move-inst-to-valuation}
  If $\validTypeN[]{A}{i}$,
  $\ctxMor{\sigma}{\objCtx_1}{\objCtx_2}$
  and $\rho \in \sem{\objCtx_1}$, then
  $\sem{A \inst \sigma}(\rho)
  = \sem{A}(\coprodArr{\rho, \sem{\sigma}(\rho)})$,
  where $\coprodArr{\rho, \sem{\sigma}(\rho)} \in \sem{\objCtx_1, \objCtx_2}$ is
  given by
  \begin{equation*}
    \coprodArr{\rho, \sem{\sigma}(\rho)}(x) =
    \begin{cases}
      \rho(x), & x \in \vars{\objCtx_1} \\
      \sem{\sigma}(\rho)(x), & x \in \vars{\objCtx_2}
    \end{cases}.
  \end{equation*}
\end{lemma}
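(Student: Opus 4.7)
The plan is to proceed by induction on the length $n$ of the context morphism $\sigma$, equivalently on the length of the parameter context $\objCtx_2$. In the base case $n = 0$, the morphism $\sigma$ is empty, $A \inst \sigma$ equals $A$ itself, and $\coprodArr{\rho, \sem{\sigma}(\rho)}$ coincides with $\rho$, so both sides of the claimed equality reduce to $\sem{A}(\rho)$.

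For the inductive step, I would decompose $\sigma = (t_1, \sigma')$ where $\objCtx_2 = (x_1 : B_1, \objCtx_2')$. By the definition of iterated instantiation, $A \inst \sigma = (A \inst t_1) \inst \sigma'$, and since the parameter context of $A \inst t_1$ has length $n-1$, the induction hypothesis applies and gives
\[
\sem{(A \inst t_1) \inst \sigma'}(\rho) = \sem{A \inst t_1}\bigl(\coprodArr{\rho, \sem{\sigma'}(\rho)}\bigr).
\]
Applying the instantiation clause from the definition of $\sem{-}$ (Figure~\ref{fig:type-interpretation}) to the right-hand side, and using that $t_1$ is a term in $\objCtx_1$ so that $t_1[\coprodArr{\rho, \sem{\sigma'}(\rho)}] = t_1[\rho]$, this rewrites to $\sem{A}\bigl(\coprodArr{\rho, \sem{\sigma'}(\rho)}[x_1 \mapsto t_1[\rho]]\bigr)$. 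I would then verify pointwise that the resulting valuation agrees with $\coprodArr{\rho, \sem{\sigma}(\rho)}$, which boils down to computing $\sem{\sigma}(\rho)(x_i) = t_i[\rho]$ on each variable of $\objCtx_2$ via Equation~\ref{eq:sem-ctx-mor}.

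The main obstacle I anticipate concerns the syntactic substitution $\subst{t_1}{x_1}$ that propagates through the tail of the parameter context once the first component is peeled off: the tail $\objCtx_2'$ becomes $\objCtx_2' \subst{t_1}{x_1}$, so that well-typedness of the extended valuation in $\sem{\objCtx_1, \objCtx_2}$ demands a substitution property of the form $\sem{B_i \subst{t_1}{x_1}}(\rho) = \sem{B_i}(\rho[x_1 \mapsto t_1[\rho]])$ on each subsequent $B_i$. This is itself an instance of the present lemma, so the cleanest route is either to strengthen the induction hypothesis to make all relevant instances available simultaneously, or to establish the required substitution property for $\sem{-}$ as a separate preliminary lemma by induction on the derivation of the type.
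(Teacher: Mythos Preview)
Your approach is correct and is exactly what the paper does: its proof reads in full ``Simply by repeatedly applying the case of the semantics of type instantiations,'' which your induction on the length of $\sigma$ formalises. The obstacle you flag concerns only the side claim that $\coprodArr{\rho,\sem{\sigma}(\rho)} \in \sem{\objCtx_1,\objCtx_2}$, not the equality of saturated sets itself; the paper does not spell this out either, and the substitution property you would need is recorded separately as Lemma~\ref{lem:interpret-subst-lemma}.
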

\begin{proof}
  Simply by repeatedly applying the case of the semantics of type
  instantiations.
\end{proof}

The following four lemmas
\ref{lem:interpret-subst-lemma}-\ref{lem:monotonicity-interpretation}
are easily proved by induction in the derivation of the corresponding type $A$.
\begin{lemma}
  \label{lem:interpret-subst-lemma}
  If $\validTypeN[][\objCtx_1, x : B, \objCtx_2]{A}[]{i}$ and
  $\typed[\objCtx_1]{t}{B}$, then
  for all $\rho \in \sem{\objCtx_1, \objCtx_2 \subst{t}{x}}$ we have
  $\sem{A\subst{t}{x}} = \sem{A}(\rho[x \mapsto t])$.
\end{lemma}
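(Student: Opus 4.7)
The proof is by structural induction on the derivation of $A$, generalised to allow an arbitrary parameter context $\objCtx_3$ on the right of $A$, and keeping the type-constructor context $\tyCtx$ and valuation $\delta$ as parameters. The base cases, namely the rule for $\oneT$ and the type-variable introduction rule \TyVarI, are immediate because the corresponding semantic clauses do not depend on the term valuation at $x$. The two weakening rules \TyVarWeak and \TyWeak simply discard either type- or term-variable information from the valuation, so the result follows from the induction hypothesis after restricting appropriately.

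The key case is the instantiation rule \TyInst, where $A = A' \inst s$ with $s$ a pre-term possibly containing $x$. On the right-hand side, apply Lemma \ref{lem:move-inst-to-valuation} to obtain $\sem{A'}$ evaluated at $\rho[x \mapsto t]$ extended by the interpretation of $s$ under that valuation. On the left-hand side, note that $(A' \inst s)\subst{t}{x} = (A' \subst{t}{x}) \inst (s \subst{t}{x})$ and apply the same lemma again to get $\sem{A' \subst{t}{x}}$ evaluated at $\rho$ extended by the interpretation of $s\subst{t}{x}$ under $\rho$. The induction hypothesis for $A'$ identifies the two underlying semantic values, so it remains to show that the two extensions agree, i.e.\ that $(s\subst{t}{x})[\rho] = s[\rho[x \mapsto t]]$. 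This is the standard substitution lemma for pre-terms, established by a routine induction on the structure of $s$.

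The parameter-abstraction case \PAbstr is similar but simpler, as the semantic clause for $\pAbstr{y}{A'}$ unfolds directly to that of $A'$, so the induction hypothesis applies immediately. The fixed-point formation rule \FPTy cannot directly contribute, since its conclusion carries an empty term context on the left of the bar, so $x$ cannot occur free in a type formed by this rule. Such types enter our setting only through a prior application of \TyWeak, which is already handled above.

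The main obstacle is the instantiation case, where one must carefully align the type-level semantic extensions with the term-level substitution of $t$ for $x$. Once Lemma \ref{lem:move-inst-to-valuation} reduces this alignment to the substitution lemma for pre-terms, the remainder of the argument is bookkeeping.
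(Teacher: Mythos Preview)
Your proposal is correct and follows exactly the approach the paper indicates: the paper merely states that this lemma (together with the three following it) is ``easily proved by induction in the derivation of the corresponding type $A$,'' without spelling out any cases. Your case analysis and the generalisation to arbitrary parameter context and type-constructor context are precisely what is needed to make that induction go through, and your handling of the \TyInst{} case via the semantic clause for instantiation (equivalently Lemma~\ref{lem:move-inst-to-valuation}) together with the term-level substitution lemma is the expected argument.
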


\begin{lemma}
  \label{lem:interpret-move-subst}
  If $\validTypeN[\tyCtx][\objCtx_2]{A}[]{i}$ and
  $\ctxMor{\sigma}{\objCtx_1}{\objCtx_2}$, then
  for all $\delta \in \sem{\tyCtx}$ and $\rho \in \sem{\objCtx_1}$ we have
  $\sem{A [\sigma]}(\delta, \rho) = \sem{A}(\delta, \sem{\sigma}(\rho))$.
\end{lemma}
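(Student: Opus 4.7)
The plan is to prove the statement by induction on the derivation of $\validTypeN[\tyCtx][\objCtx_2]{A}[]{i}$. Because parameter abstraction and the fixed-point-type rule introduce intermediate judgements with non-empty parameter contexts, I would first generalise the claim to arbitrary $\objCtx'$, showing $\sem{A[\sigma]}(\delta, \rho) = \sem{A}(\delta, \sem{\sigma}(\rho))$ for all $\validTypeN[\tyCtx][\objCtx_2]{A}[\objCtx']{i}$ with $\rho$ ranging over valuations of the combined context $\objCtx_1, \objCtx'[\sigma]$; the empty-parameter-context lemma as stated is then an immediate specialisation.

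The base cases are routine: for $\oneT$ the interpretation is constant in $\rho$, and for a type variable $A = X$ the interpretation selects $\delta(X)$ on the empty valuation and so is unaffected by $\sigma$. The two weakening rules reduce to the inductive hypothesis after noting that the interpretation already restricts the extra variable away and that $\sigma$ does not touch it. For a term-instantiation $A = A' \inst t$ I would apply Lemma~\ref{lem:interpret-subst-lemma} to move $t[\sigma]$ into the valuation, invoke the inductive hypothesis on $A'$ (in extended parameter context), and match the two extended valuations via the term-level identity $t[\sigma][\rho] = t[\sem{\sigma}(\rho)]$, which is essentially Lemma~\ref{lem:interpret-ctx-mor-composition} read at the level of terms. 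The parameter-abstraction case is then automatic, since the interpretation simply strips the abstraction.

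The main obstacle, though largely bookkeeping, is the fixed-point case $\recT{\rho}{X : \pType[\objCtx]}{\vec{\sigma'}}{\vec{A'}}$ (and its coinductive dual). The key observation is that by rule \FPTy these types have empty \emph{term}-variable context, so $\sigma$ reaches them only through an outer instantiation already treated above. It then remains to verify that inside the semantic clause the quantification over $U \in \tyVal{\objCtx}$, the internal substitutions $\sigma_k' \bullet \pi$, and the sets $\sem{A_k^\Delta}(\delta[X \mapsto U])$ are entirely independent of $\sigma$, and that the only site at which $\rho$ and $\sem{\sigma}(\rho)$ differ — namely the outer valuation of the recursor term $\rec \vec{\pAbstr{\objCtx_k, y}{N_k}} \inst (\cdot) \inst M$ and the dual destructor expression — composes correctly, which is precisely Lemma~\ref{lem:interpret-ctx-mor-composition} pushed through one layer of the definition.

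The real care throughout is simply keeping the three notions of substitution properly distinguished: term-variable substitution $[\sigma]$, parameter instantiation $\inst$, and type-constructor substitution $[-/X]$. Once a generalised statement is chosen that explicitly tracks parameter contexts alongside term contexts, each case amounts to a one-line appeal to the induction hypothesis plus one of the previously stated interpretation lemmas; this is why the authors rightly describe the proof as routine.
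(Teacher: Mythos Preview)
Your proposal is correct and follows essentially the same approach as the paper, which merely states that this lemma (together with the surrounding Lemmas~\ref{lem:interpret-subst-lemma}--\ref{lem:monotonicity-interpretation}) is ``easily proved by induction in the derivation of the corresponding type $A$''. Your write-up supplies considerably more detail than the paper does, including the generalisation to non-empty parameter contexts and the observation that the \FPTy\ case is trivial because the term-variable context there is empty; one small imprecision is your remark that for $A = X$ the interpretation uses ``the empty valuation'' --- it is rather that the \emph{term} context in \TyVarI\ is empty, so $\sigma$ is forced to be the empty context morphism, while $\delta(X)$ is still applied to the (generally non-empty) parameter-context valuation.
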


\begin{lemma}
  \label{lem:interpret-move-ty-subst}
  If $\validTypeN[\tyCtx_1, X : {\pType[\objCtx]}, \tyCtx_2]{A}[]{i}$
  and $\validTypeN[][]{B}[\objCtx]{i}$, then
  $\sem{A \tySubst{B}{X}}(\delta, \rho)
  = \sem{A}(\delta [X \mapsto \sem{B}], \rho)$.
\end{lemma}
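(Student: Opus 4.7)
My plan is to proceed by induction on the derivation of $\validTypeN[\tyCtx_1, X : {\pType[\objCtx]}, \tyCtx_2]{A}[]{i}$, handling each of the rules in \iDefRef{wfftypecons} in turn, and leaning on the clauses of the interpretation listed in \iFigRef{type-interpretation}. Several cases are immediate. For $A = \oneT$, the interpretation is independent of $\delta$, so both sides coincide. For the TyVarWeak and TyWeak cases, the interpretation discards the extra variable, and the claim reduces to the induction hypothesis applied to the smaller derivation. For the TyInst case $A = A' \inst t$, the interpretation clause rewrites both sides to $\sem{A'}(-, \rho[x \mapsto t[\rho]])$ after noting that substitution on types commutes with parameter instantiation, and the induction hypothesis on $A'$ closes the case. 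The PAbstr case is analogous.

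The TyVar case requires a small case split on whether the variable equals $X$. If $A = X$, then $A\tySubst{B}{X}$ unfolds (up to the bookkeeping for the parameter context $\objCtx$) to $B$, so $\sem{A\tySubst{B}{X}}(\delta, \rho) = \sem{B}(\rho)$, while on the right we read off $\sem{A}(\delta[X \mapsto \sem{B}], \rho) = (\delta[X \mapsto \sem{B}])(X)(\rho) = \sem{B}(\rho)$. If $A = Y \neq X$, the substitution is trivial on $A$ and $\delta[X \mapsto \sem{B}](Y) = \delta(Y)$, giving equality directly.

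The main obstacle is the FPTy case, where $A = \recT{\rho}{Y : \pType[\objCtx]}{\vec{\sigma}}{\vec{A'}}$ with $\rho \in \{\mu, \nu\}$. The interpretation of both inductive and coinductive types in \iFigRef{type-interpretation} is a universal or existential quantifier over $U \in \tyVal{\objCtx}$ together with a condition that refers to $\sem{A'_k}(\delta[Y \mapsto U], -)$ (and, in the inductive clause, also a comprehension built from these sets). By freshness of the binder, $Y$ is distinct from $X$, so the valuation extensions commute: $(\delta[X \mapsto \sem{B}])[Y \mapsto U] = (\delta[Y \mapsto U])[X \mapsto \sem{B}]$. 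Applying the induction hypothesis to each $A'_k$ in the enlarged type variable context $\tyCtx_1, X : \pType[\objCtx], \tyCtx_2, Y : \pType[\objCtx]$ and with valuation $\delta[Y \mapsto U]$, one gets
\[
  \sem{A'_k \tySubst{B}{X}}(\delta[Y \mapsto U], -) \;=\; \sem{A'_k}\bigl((\delta[X \mapsto \sem{B}])[Y \mapsto U], -\bigr),
\]
so the two set-builder conditions coincide pointwise in $U$, and the two interpretations are equal as saturated sets. The same argument works uniformly for $\mu$ and $\nu$; the only technical care needed is to verify that the auxiliary data appearing in the inductive clause (namely the comprehension $\compr{\sem{\objCtx_k}, \sem{A_k^\objCtxA}(\delta[Y \mapsto U])}_y$ and the reindexing $\reidx{\sem{\sigma_k \bullet \pi}}(U)$) are built from $\sem{A'_k}$ in a way compatible with the rewriting above, which is a direct consequence of \iLemRef{interpret-move-subst}.
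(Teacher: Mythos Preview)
Your proposal is correct and takes essentially the same approach as the paper: the paper simply states that this lemma, together with the three surrounding ones, is ``easily proved by induction in the derivation of the corresponding type $A$'' and gives no further detail. Your case analysis is a faithful elaboration of that induction, including the appropriate generalisation to non-empty parameter contexts needed for the \textsc{TyInst}/\textsc{PAbstr} cases and the commutation of valuation extensions in the \textsc{FP-Ty} case.
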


\begin{lemma}
  \label{lem:monotonicity-interpretation}
  If $\validTypeN[\tyCtx][\objCtx]{A}[]{i}$ and
  $\delta, \delta' \in \sem{\tyCtx}$ with $\delta \sqsubseteq \delta'$
  (point-wise order), then for all $\rho \in \sem{\tyCtx}$
  \begin{equation*}
    \sem{A}(\delta, \rho) \subseteq \sem{A}(\delta', \rho).
  \end{equation*}
\end{lemma}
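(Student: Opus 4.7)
The plan is to proceed by structural induction on the derivation of $\validTypeN{A}{i}$ using the rules of \iDefRef{wfftypecons}. The base cases and non-recursive closure rules reduce directly to the induction hypothesis. For $A = \oneT$, the interpretation is independent of $\delta$, so equality holds. For $A = X$ a type constructor variable in $\tyCtx$, we have $\sem{X}(\delta, \rho) = \delta(X)(\rho) \subseteq \delta'(X)(\rho) = \sem{X}(\delta', \rho)$ directly from the point-wise assumption $\delta \sqsubseteq \delta'$. The type-variable and term-variable weakening rules reduce to the induction hypothesis after restricting $\delta, \delta'$, or $\rho$ appropriately. For instantiation $A \inst t$ and parameter abstraction $\pAbstr{x}{A}$, the interpretation is modified only in the $\rho$-component, so monotonicity in $\delta$ follows immediately from the induction hypothesis applied at the modified term valuation.

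For a coinductive type $\coindTy{X}{\objCtx}{\vec{\sigma}}{\vec{A}}$, the argument is direct. If $M \in \sem{\coindTy{X}{\objCtx}{\vec{\sigma}}{\vec{A}}}(\delta, \rho)$ is witnessed by some $U \in \tyVal{\objCtx}$, I claim the same $U$ also witnesses $M \in \sem{\coindTy{X}{\objCtx}{\vec{\sigma}}{\vec{A}}}(\delta', \rho)$. Indeed, $\delta \sqsubseteq \delta'$ implies $\delta[X \mapsto U] \sqsubseteq \delta'[X \mapsto U]$ point-wise, so the induction hypothesis applied to each $A_k$ yields $\sem{A_k}(\delta[X \mapsto U], \gamma) \subseteq \sem{A_k}(\delta'[X \mapsto U], \gamma)$ for every $\gamma \in \sem{\sigma_k}^{-1}(\rho)$. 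Therefore $\outNu_k \inst \gamma \inst M$ remains in the larger set, as required by the definition.

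The inductive type case is the main obstacle, because the relevant set is built by a double contravariance. Membership of $M$ in $\sem{\indTy{X}{\objCtx}{\vec{\sigma}}{\vec{A}}}(\delta, \rho)$ is a universal statement over $U$, $\Delta$, $k$, and over $N_k$ ranging in the family $\compr{\sem{\objCtx_k}, \sem{A_k^\Delta}(\delta[X \mapsto U])}_y \Vdash \reidx{\sem{\sigma_k \bullet \pi}}(U)$. By the induction hypothesis, the comprehension $\compr{\sem{\objCtx_k}, \sem{A_k^\Delta}(\delta[X \mapsto U])}_y$ grows as $\delta$ grows; but $\Vdash$ is antitone in its first argument, so the collection of admissible $N_k$ in fact shrinks when passing from $\delta$ to $\delta'$. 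Consequently, any $M$ that satisfies the membership condition for the larger family of $N_k$ (the $\delta$ case) a fortiori satisfies it for the smaller family (the $\delta'$ case), and we conclude $\sem{\indTy{X}{\objCtx}{\vec{\sigma}}{\vec{A}}}(\delta, \rho) \subseteq \sem{\indTy{X}{\objCtx}{\vec{\sigma}}{\vec{A}}}(\delta', \rho)$. Sorting out this interleaved monotonicity is the technical heart of the proof; once the direction of contravariance of $\Vdash$ is tracked carefully, the remaining bookkeeping is routine.
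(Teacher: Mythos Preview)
Your proposal is correct and follows exactly the approach the paper indicates: the paper merely states that the lemma is ``easily proved by induction in the derivation of the corresponding type $A$'' without spelling out cases, and your case analysis is a faithful and accurate unfolding of that induction. In particular, your treatment of the inductive-type case---tracking that the comprehension grows with $\delta$ while $\Vdash$ is antitone in its left argument, so the range of admissible $N_k$ shrinks and the universal condition becomes easier---is the right way to discharge the only non-trivial step.
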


\begin{lemma}
  \label{lem:constructor-closure}
  Let $\mu = \indTy{X}{\objCtx}{\vec{\sigma}}{\vec{A}}$ where we have
  $\validTypeN[!][\emptyset]{\mu}[\objCtx]{i}$.
  If $\delta \in \sem{\tyCtx}$, $\rho \in \sem{\objCtx_k}$ and
  $P \in \sem{A_k}(\delta[X \mapsto \sem{\mu}(\delta)], \rho)$, then
  \begin{equation*}
    \inMu_k \inst \rho \inst P \in \sem{\mu}(\delta, \sem{\sigma_k}(\rho)).
  \end{equation*}
\end{lemma}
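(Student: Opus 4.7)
The plan is to unfold the definition of $\sem{\mu}(\delta, \sem{\sigma_k}(\rho))$ and verify directly that $\inMu_k \inst \rho \inst P$ satisfies the universal condition. Fix $U \in \tyVal{\objCtx}$, a context $\objCtxA$, an index $j \in \{1,\dotsc,n\}$, and a family of terms $N_j$ lying in the semantic typing set $\compr{\sem{\objCtx_j}, \sem{A_j^\objCtxA}(\delta[X \mapsto U])}_y \Vdash \reidx{\sem{\sigma_j \bullet \pi}}(U)$. Writing $r \colonequals \rec \vec{\pAbstr{\objCtx_j, y}{N_j}}$, we must show that
\begin{equation*}
  r \inst \sem{\sigma_k}(\rho) \inst (\inMu_k \inst \rho \inst P) \in U(\sem{\sigma_k}(\rho)).
\end{equation*}

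The key step is to apply the contraction rule for recursors. Viewing $\rho$ as a context morphism, $\sem{\sigma_k}(\rho) = \sigma_k \bullet \rho$ by \iLemRef{interpret-ctx-mor-composition}, so this term is a key redex contracting to
\begin{equation*}
  N_k \subst*{\tyFunc{A_k}(r \inst \id_{\objCtx} \inst x)}{y_k} [\rho, P].
\end{equation*}
To place this reduct inside $U(\sem{\sigma_k}(\rho))$, I will instantiate the semantic typing judgement for $N_k$ at the valuation $\gamma \colonequals \rho[y \mapsto Q]$, where $Q \colonequals \tyFunc{A_k}(r \inst \id_\objCtx \inst x)\subst{P}{x}$. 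Two checks are then required. First, $\gamma$ must belong to $\compr{\sem{\objCtx_k}, \sem{A_k^\objCtxA}(\delta[X \mapsto U])}_y$, i.e.\ $Q \in \sem{A_k}(\delta[X \mapsto U], \rho)$; and second, $\sem{\sigma_k \bullet \pi}(\gamma) = \sem{\sigma_k}(\rho)$, which is immediate because $\pi$ projects $y$ away and $\sigma_k$ does not mention $y$.

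The first check is the real work, and where I expect the main obstacle to lie. The hypothesis gives $P \in \sem{A_k}(\delta[X \mapsto \sem{\mu}(\delta)], \rho)$, so we need to transport $P$ from the fibre over $\sem{\mu}(\delta)$ to the fibre over $U$ via the action $\tyFunc{A_k}$. This is exactly what \iLemRef{ty-functor-sound} is designed for: taking $\delta_A = \delta[X \mapsto \sem{\mu}(\delta)]$, $\delta_B = \delta[X \mapsto U]$, and $t = r \inst \id_\objCtx \inst x$, it suffices to show that for all $\tau \in \sem{\objCtx}$ and all $M \in \sem{\mu}(\delta, \tau)$, the substituted term $r \inst \tau \inst M$ lies in $U(\tau)$. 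This latter fact is exactly the universal property defining membership in $\sem{\mu}(\delta, \tau)$ applied to $U$ itself, so it follows immediately from $M \in \sem{\mu}(\delta, \tau)$. Applying \iLemRef{ty-functor-sound} then gives $Q \in \sem{A_k}(\delta_B, \rho)$ as required.

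It remains to verify strong normalisation of the original term so we can invoke clause (3) of the saturated set definition (backward closure under key reduction). Since $U(\sem{\sigma_k}(\rho)) \subseteq \SN$, the reduct is SN; $\rho$ and $P$ are SN by virtue of living in saturated sets (using $P \in \sem{A_k}(\ldots) \subseteq \SN$ and $\rho \in \sem{\objCtx_k}$), and each $N_j$ is SN by assumption. A standard argument, combining SN of all immediate subterms with the fact that the only new redex is the key redex we already contracted, shows the whole term is SN; then backward closure yields membership in $U(\sem{\sigma_k}(\rho))$, completing the verification.
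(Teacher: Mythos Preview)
Your proposal is correct and follows essentially the same route as the paper's own proof: unfold the definition of $\sem{\mu}$, contract the recursor--constructor key redex, use \iLemRef{ty-functor-sound} to show the transported argument $Q$ lands in $\sem{A_k}(\delta[X \mapsto U], \rho)$, and conclude by backward closure under key reduction. You are in fact slightly more careful than the paper in making the $K \in \SN$ side condition of saturation clause (3) explicit; the paper just writes ``by $V \in \sat$ it suffices to prove $K' \in V$'' and leaves that check implicit. One small presentational slip: you say ``fix an index $j$ and a family of terms $N_j$'', but of course you need the whole family $(N_j)_{j=1}^n$ simultaneously to form the recursor $r$; the rest of your argument treats it correctly.
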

\begin{proof}
  Let $\delta$, $\rho$ and $P$ be given as in the lemma, and
  put $M = \inMu_k \inst \rho \inst P$.
  We need to show for any choice of $U \in I_\objCtx$ and
  \begin{equation*}
    N_k \in \compr{\sem{\objCtx_k}, \sem{A_k^\objCtxA}(\delta[X \mapsto U])}_y
    \Vdash \reidx{\sem{\sigma_k \bullet \pi}}(U)
  \end{equation*}
  that
  \begin{equation*}
    K = \rec \vec{\pAbstr{\objCtx_k, y}{N_k}}
    \inst \, (\sigma_k \bullet \rho) \inst M
  \end{equation*}
  is in $U(\sem{\sigma_k}(\rho))$.
  Now we define $r = \rec \vec{\pAbstr{\objCtx_k, y}{N_k}}$ and
  \begin{equation*}
    K' =
    N_k \subst*{\tyFunc{A_k}(r \inst \id_\objCtx \inst x')}{y} \substN*{\rho, P}
  \end{equation*}
  so that $K \contract K'$.
  Let us furthermore put
  \begin{equation*}
    V = U(\sem{\sigma_k}(\rho)).
  \end{equation*}
  By $V \in \sat$, it suffices to prove
  that $K' \in V$.
  Note that we can rearrange the substitution in $K'$ to get
  $K' = N_k \substN{\rho, P'}$ with
  $P' = \tyFunc{A_k}(r \inst \id_\objCtx \inst x') \substN{\rho, P}$.

  We get $K' \in V$ from
  $N_k \in \compr{\sem{\objCtx_k}, \sem{A_k}(\delta[X \mapsto U])}_y
  \Vdash \reidx{\sem{\sigma_k \bullet \pi}}(U)$,
  provided that $\rho \in \sem{\objCtx_k}$ and
  $P' \in \sem{A_k}(\delta[X \mapsto U], \rho)$.
  The former is given from the assumption of the lemma.
  The latter we get from \iLemRef{ty-functor-sound},
  since we have assumed soundness for the components of $\mu$ and
  $P \in \sem{\tyFunc{A_k}}(\rho)$.
  Thus we have $K' = N_k \substN{\rho, P'} \in V$.

  So by saturation we have $K \in V = U(\sem{\sigma_k}(\rho))$ for any choice
  of $U$ and $N_k$, thus if follows that
  $M \in \sem{\mu}(\delta, \sem{\sigma_k}(\rho))$.
\end{proof}

\begin{lemma}
  \label{lem:coind-is-largest-fp}
  Let $\nu = \coindTy{X}{\objCtx}{\vec{\sigma}}{\vec{A}}$ where we have
  $\validTypeN[!][\emptyset]{\nu}[\objCtx]{i}$.
  If $U \in I_{\objCtx}$ and $\delta \in \sem{\tyCtx}$, such that for all
  $M \in U(\rho)$, all $k$,
  all $\rho \in \sem{\objCtx}$ and all
  $\gamma \in \sem{\sigma_k}^{-1}(\rho)$,
  $\outNu_k \inst \gamma \inst M \in \sem{A_k}(\delta[X \mapsto U], \gamma)$,
  then
  \begin{equation*}
    \forall \rho. \, U(\rho) \subseteq \sem{\nu}(\delta, \rho).
  \end{equation*}
\end{lemma}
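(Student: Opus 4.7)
The plan is to observe that this lemma is essentially immediate from the definition of $\sem{\nu}$, since the interpretation of a coinductive type is defined as the union over all $U \in \tyVal{\objCtx}$ satisfying the destructor closure property. The hypothesis of the lemma hands us exactly such a $U$, so we simply use it as the existential witness.

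In more detail, fix an arbitrary $\rho \in \sem{\objCtx}$ and an arbitrary $M \in U(\rho)$. Unfolding the clause for coinductive types in \iFigRef{type-interpretation}, to prove $M \in \sem{\nu}(\delta, \rho)$ it suffices to exhibit some $U' \in \tyVal{\objCtx}$ such that for every index $k$ and every $\gamma \in \sem{\sigma_k}^{-1}(\rho)$,
\begin{equation*}
  \outNu_k \inst \gamma \inst M \in \sem{A_k}(\delta[X \mapsto U'], \gamma).
\end{equation*}
I take $U' = U$. This choice is admissible because $U \in \tyVal{\objCtx}$ by hypothesis, and the required destructor condition holds by the standing assumption of the lemma applied to $M \in U(\rho)$, $k$, and $\gamma$. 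Hence $M \in \sem{\nu}(\delta, \rho)$, and since $M$ and $\rho$ were arbitrary, $U(\rho) \subseteq \sem{\nu}(\delta, \rho)$ for all $\rho$.

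There is no real obstacle here, so strictly speaking this lemma is a direct unpacking of definitions. The conceptual content is that $\sem{\nu}$ has been set up precisely as the union of all such $U$, so any family of sets closed under destructors in the sense required by the hypothesis is automatically contained in $\sem{\nu}$; this is the coinductive analogue of the closure property established for inductive types in \iLemRef{constructor-closure}, and justifies the informal remark that $\sem{\nu}$ is the largest family closed under destructors.
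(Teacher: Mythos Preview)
Your proof is correct and follows exactly the same approach as the paper's own proof: both simply instantiate the existential in the definition of $\sem{\nu}$ with the given $U$ and observe that the hypothesis supplies precisely the destructor condition required.
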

\begin{proof}
  This follows immediately from the definition of $\sem{\nu}$, just instantiate
  the definition with the given $U$.
  Then all all $M \in U(\rho)$ are in $\sem{\nu}(\delta, \rho)$.
\end{proof}

\begin{lemma}
  \label{lem:corec-closure}
  Let $\nu = \coindTy{X}{\objCtx}{\vec{\sigma}}{\vec{A}}$ where we have
  $\validTypeN[!][\emptyset]{\nu}[\objCtx]{i}$.
  If $\delta \in \sem{\tyCtx}$, $U \in I_{\objCtx}$,
  $\rho \in \sem{\objCtx_k}$ and
  \begin{equation*}
     N_k \in
     \compr{\sem{\objCtx_k}, \reidx{\sem{\sigma_k}}(U)}_y
     \Vdash \sem{A_k}(\delta[X \mapsto U])
  \end{equation*}
  for $k = 1, \dotsc, n$,
  then
  \begin{equation*}
    \corecPs{\objCtx_k, y}{N_k} \inst \rho \inst M
    \in \sem{\nu}(\delta, \rho).
  \end{equation*}
\end{lemma}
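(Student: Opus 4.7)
The plan is to apply \iLemRef{coind-is-largest-fp} to a carefully chosen $V \in I_\objCtx$ that contains all the terms of the form $\corecPs{\objCtx_k, y}{N_k} \inst \rho \inst M$; that lemma will then give $V(\rho) \subseteq \sem{\nu}(\delta, \rho)$, which is the desired conclusion. Concretely, for each $\rho \in \sem{\objCtx}$ I would take $V(\rho)$ to be the least saturated set containing the generating set $T(\rho) = \{\corecPs{\objCtx_k, y}{N_k} \inst \rho \inst M \mid M \in U(\rho)\}$; this is well-defined because $\sat$ is a complete lattice and a routine argument, using that each $N_k$ lives in a saturated set after any valuation and that $U(\rho) \subseteq \SN$, shows every element of $T(\rho)$ is strongly normalising. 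The verification $V \in I_\objCtx$ is immediate from the analogous property of $U$ together with the fact that the $\rho$ in the corec-term only appears as an instantiation argument.

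The technical heart is the coalgebra-like condition required by \iLemRef{coind-is-largest-fp}: for every $M' \in V(\rho)$, every $k$, and every $\gamma \in \sem{\sigma_k}^{-1}(\rho)$,
\[
  \outNu_k \inst \gamma \inst M' \in \sem{A_k}(\delta[X \mapsto V], \gamma).
\]
Since the right-hand side is saturated and $M' \in \SN$, closure under reverse key reduction lets me reduce to the case $M' \in T(\rho)$. For $M' = \corecPs{\objCtx_k, y}{N_k} \inst \rho \inst M$, the identification $\rho = \sem{\sigma_k \bullet \gamma}$ given by \iLemRef{interpret-ctx-mor-composition}, combined with the conversion invariance of $V$, lets the destructor--corecursor contraction fire and produces the reduct
\[
  \tyFunc{A_k}\bigl(\corecPs{\objCtx_k, y}{N_k} \inst \id_\objCtx \inst x'\bigr)
    \subst{N_k}{x'}[\gamma, M/y].
\]

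It then remains to place this reduct inside $\sem{A_k}(\delta[X \mapsto V], \gamma[y \mapsto M])$. For this I would invoke \iLemRef{ty-functor-sound} with $A_k$ playing the role of $C$ and $\corecPs{\objCtx_k, y}{N_k} \inst \id_\objCtx \inst x'$ playing the role of the $X$-action: the very definition of $V$ says this term sends $U(\sigma) \to V(\sigma)$ for every $\sigma \in \sem{\objCtx}$, which is exactly the hypothesis required of the action. The substitution of $N_k$ for $x'$ is legitimate because the premise $N_k \in \compr{\sem{\objCtx_k}, \reidx{\sem{\sigma_k}}(U)}_y \Vdash \sem{A_k}(\delta[X \mapsto U])$ specialised at $\gamma[y \mapsto M]$ gives $N_k[\gamma, M/y] \in \sem{A_k}(\delta[X \mapsto U], \gamma[y \mapsto M])$, and \iLemRef{ty-functor-sound} transports this from $\delta[X \mapsto U]$ to $\delta[X \mapsto V]$.

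The main obstacle will be the substitution bookkeeping needed to align the key reduct with the two appeals to \iLemRef{ty-functor-sound}: matching the formal parameter $x'$ of the functor action with the substituted $N_k$, and reconciling the syntactic identification $\rho = \sigma_k \bullet \gamma$ with the conversion invariance baked into membership of $I_\objCtx$. Handling the elements of $V(\rho) \setminus T(\rho)$ — those present solely by saturation — is in principle routine but requires a careful key-redex induction to ensure that a destructor applied to any such term still eventually exposes a generator from $T(\rho)$ at its head.
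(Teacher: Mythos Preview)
Your proposal is correct and follows essentially the same route as the paper. The paper's own proof is a one-line sketch: ``Similar to the proof of \iLemRef{constructor-closure} by using that the interpretation of $\nu$-types is a largest fixed point~\iLemRef{coind-is-largest-fp} and that the interpretation is monotone~\iLemRef{monotonicity-interpretation}.'' Your write-up is precisely the natural dualisation of the constructor-closure argument---build a candidate family $V$ containing the corec-terms, verify the coalgebra condition via the contraction rule and \iLemRef{ty-functor-sound}, then invoke \iLemRef{coind-is-largest-fp}---and you correctly flag the saturation bookkeeping (base terms and key-redex closure) that the paper leaves implicit.
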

\begin{proof}
  Similar to the proof of \iLemRef{constructor-closure} by using that
  the interpretation of $\nu$-types is a largest fixed
  point~\iLemRef{coind-is-largest-fp} and that the interpretation is
  monotone~\iLemRef{monotonicity-interpretation}.
\end{proof}

\begin{lemma}
  \label{lem:interpret-resp-reduction}
  Suppose $C$ is a type with $\validTypeN{C}{i}$.
  If  $\rho, \rho' \in \sem{\objCtx_1,\objCtx_2}$ with
  $\rho \reduce \rho'$, then
  $\forall \delta. \,
  \sem{C}(\delta, \rho) = \sem{C}(\delta, \rho')$.
  Furthermore, if $C \reduce C'$, then $\sem{C} = \sem{C'}$.
\end{lemma}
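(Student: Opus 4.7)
The plan is to prove both claims by simultaneous induction on the derivation of $\validTypeN{C}{i}$, working through each rule of \defRef{wfftypecons}. The base case $C = \oneT$ is immediate since $\sem{\oneT}$ does not depend on $\rho$ or on any subterm, and the other base case $C = X$ reduces directly to the defining property of $\tyVal{\objCtx}$: every $U = \delta(X)$ is required to satisfy $\rho \tReduce \rho' \Rightarrow U(\rho) = U(\rho')$, giving $\delta(X)(\rho) = \delta(X)(\rho')$. The two weakening rules restrict $\rho$ (resp.\ $\delta$) to a smaller context, so the claim reduces to the induction hypothesis on the premise.

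For the instantiation rule $\sem{A \inst t}(\delta, \rho) = \sem{A}(\delta, \rho[x \mapsto t[\rho]])$, suppose first that $\rho \reduce \rho'$; since substitution commutes with reduction we get $t[\rho] \tReduce^{*} t[\rho']$, hence $\rho[x \mapsto t[\rho]] \tReduce^{*} \rho'[x \mapsto t[\rho']]$ coordinate-wise, and the induction hypothesis applied to $A$ yields equality. The parameter abstraction case is analogous since $\sem{\pAbstr{x}{A}}$ is defined to be $\sem{A}$ on the extended parameter context. For the fixed-point cases $C = \indTy{X}{\objCtx}{\vec\sigma}{\vec A}$ and its dual, the interpretation quantifies over $U \in \tyVal{\objCtx}$. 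Using $\rho \tReduce \rho'$ together with the $\tyVal{\objCtx}$-invariance of $U$, every occurrence $U(\rho)$ (resp.\ $U(\sem{\sigma_k}(\rho))$) on the right-hand side of the defining clause equals the corresponding expression with $\rho'$; the remaining occurrences of $\rho$ sit in term positions such as $\rec \vec{\pAbstr{\objCtx_k,y}{N_k}} \inst \rho \inst M$ or $\outNu_k \inst \gamma \inst M$, and there we observe that both terms reduce to a common reduct, so by the closure properties of saturated sets ($\kred$ closure together with SN) membership on either side transfers. The quantifier $\gamma \in \sem{\sigma_k}^{-1}(\rho)$ in the coinductive clause is handled symmetrically, noting that $\sem{\sigma_k}$ preserves reduction by \lemRef{interpret-ctx-mor-composition}.

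For the second statement, the reduction $C \reduce C'$ is either a parameter $\beta$-reduction or a lifted term reduction as in~\eqRef{ty-reduction}. In the lifted case $C = A \inst t$ and $C' = A \inst t'$ with $t \reduce t'$, we compute
\begin{equation*}
\sem{A \inst t}(\delta, \rho) = \sem{A}(\delta, \rho[x \mapsto t[\rho]])
\end{equation*}
and similarly for $t'$; since $\rho[x \mapsto t[\rho]] \reduce \rho[x \mapsto t'[\rho]]$, the first part of the lemma closes the case. In the $\beta$ case $(\pAbstr{x}{A}) \inst t \paramReduce A\subst{t}{x}$, the interpretation of the left-hand side unfolds to $\sem{A}(\delta, \rho[x \mapsto t[\rho]])$ by the defining clauses for abstraction and instantiation, while \lemRef{interpret-subst-lemma} gives exactly the same value for $\sem{A\subst{t}{x}}(\delta, \rho)$. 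Finally, reductions occurring strictly inside subtypes are handled by induction via the congruence of $\reduce$ on pre-types.

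The main obstacle I anticipate lies in the fixed-point cases: one must argue that the membership condition involving $\rho$ in a term position (as in $\rec \vec{N} \inst \rho \inst M \in U(\rho)$) is genuinely invariant under reduction of $\rho$, rather than only under the associated $\tyVal$-equality of sets. This requires care because saturated sets are defined to be closed only under backwards key reduction and to be contained in $\SN$; the forward direction needs the standard observation that the relevant reducts of $\rec \vec{N} \inst \rho \inst M$ and $\rec \vec{N} \inst \rho' \inst M$ share common reducts (compatible closure), together with the closure properties that every saturated set inherits from its interaction with $\base$ and $\kred$. Formalising this — possibly by an auxiliary lemma stating that saturated sets are in fact closed under the full reduction relation — is where the real work of the proof will sit.
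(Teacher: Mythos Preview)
Your induction for the first part is substantially more detailed than the paper's, which dismisses it as ``an easy induction, in which the only interesting case is $C = X$'' and handles that case exactly as you do, via $\delta(X) \in I_\objCtx$. In particular the paper does not discuss the fixed-point clauses at all, so the difficulty you isolate --- that $\rho$ occurs in a genuine term position inside $\rec\,\vec{N}\inst\rho\inst M$ in the clause for $\sem{\mu}$, while saturated sets are only explicitly closed under backwards \emph{key} reduction --- is a point the paper simply glosses over rather than resolves. Your instinct that some auxiliary closure fact about saturated sets is required here is well placed; the paper offers no substitute argument.

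For the second part the paper takes a different and shorter route than your case split. Instead of distinguishing lifted term reductions from parameter $\beta$-steps, it replaces every term occurring in a parameter position of $C$ by a fresh variable, obtaining a type $D$ and a valuation $\rho$ with $C = D[\rho]$; then $C \reduce C'$ forces $C' = D[\rho']$ for some $\rho \reduce \rho'$, and the first part (applied to $D$) gives $\sem{C} = \sem{C'}$. This is slick when it applies, but note that it literally covers only the lifted term reductions of \eqRef{ty-reduction}: a parameter $\beta$-step $(\pAbstr{x}{A})\inst t \paramReduce A\subst{t}{x}$ changes the shape of the type rather than the instantiating terms, so the factoring $C = D[\rho]$ no longer tracks the reduct. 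Your explicit treatment of that case via the defining clauses for abstraction/instantiation together with \lemRef{interpret-subst-lemma} therefore fills in something the paper's argument leaves implicit, even though it is needed downstream for the \rulelabel{Conv} case of \lemRef{sat-sound}.
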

\begin{proof}
  The first part follows by an easy induction, in which the only interesting
  case $C = X$ is.
  Here we have
  \begin{align*}
    \sem{C}(\delta, \rho)
    & = \delta(X)(\rho) \\
    & = \delta(X)(\rho') \\
    & = \sem{C}(\delta, \rho'),
  \end{align*}
  since $\delta(X) \in I_{\objCtx_i}$ and thus respects conversions.

  For the second part, let $D$ be given by replacing all terms in parameter
  position in $C$ by variables, so that $C = D[\rho]$ for some substitution
  $\rho$.
  But then there is a $\rho'$ with $\rho \reduce \rho'$ and $C' = D[\rho']$,
  and the claim follows from the first part.
\end{proof}

\begin{proof}[Proof of \lemRef{ty-functor-sound}]
  We proceed by induction in the derivation of
  $\validTypeN[\tyCtx][\objCtx]{C}[\objCtx']{i}$.
  \begin{itemize}
  \item $\validTypeN[][]{\oneT}[]{i}$ by \TTyI.
    In this case we have that $\tyFunc{\oneT}(\varepsilon) = x \in \base$,
    thus $\tyFunc{\oneT}(\varepsilon) \in \sem{C}(\delta)$ by
    by saturation.
  \item $\validType[\tyCtx, X_{n+1} : {\pType[\objCtx_{n+1}]} \mid \emptyset]{X}{
      \pType[\objCtx_{n+1}]}$ by \TyVarI.
    Note that
    $\tyFunc{X_{n+1}}(\vec{t}, t_{n+1}) = t_{n+1}$ and
    $\sem{X_{n+1}}(\delta, \sigma) = \delta(X_{n+1})(\sigma)
    = \sem{B_{n+1}}(\sigma)$.
    thus the claim follows directly from the assumption of the lemma.
  \item $\validTypeN[\tyCtx, X : {\pType[\objCtx'']}][\objCtx]{C}[\objCtx']{i}$
    by \TyVarWeak.
    Immediate by induction.
  \item $\validTypeN[\tyCtx][\objCtx, y : D]{C}[\objCtx']{i}$ by \TyWeak.
    Again immediate by induction.
  \item $\validTypeN[!][\objCtx]{C \inst r}[\objCtx' \subst{s}{x}]{i}$ with
    $\typed[\objCtx]{r}{C'}$ by \TyInst.
    First, we note that $\sigma = (\sigma_1, \sigma_2)$ with
    $\ctxMor{\sigma_1}{\objCtxA}{\objCtx}$ and
    $\ctxMor{\sigma_2}{\objCtxA}{\objCtx'[\sigma_1,r]}$.
    Let us put $\tau = (\sigma_1, r \substN{\sigma_1}, \sigma_2)$,
    so that we have
    \begin{align*}
      \tyFunc{(C \inst r)}(\vec{t}) \substN{\sigma, s}
      & = \tyFunc{C}(\vec{t}) \subst{s}{x} \substN{\sigma, s} \\
      & = \tyFunc{C}(\vec{t})
      \substN{\sigma_1, r \substN{\sigma_1}, \sigma_2, s} \\
      & = \tyFunc{C}(\vec{t}) \substN{\tau, s}.
    \end{align*}
    By the assumption of the lemma on parameters we have
    $r \substN{\sigma_1} \in \sem{C'}(\sigma_1)$, and thus
    $\tau \in
    \sem*{\objCtx, x : C' \substN{\sigma_1},
      \objCtx' \substN{\sigma_1, r \substN{\sigma_1}}}$,
    which gives
    $\sem{C \inst r}(\delta, \sigma) = \sem{C}(\delta, \tau)$.
    By induction, we have
    $\tyFunc{C}(\vec{t}) \substN{\tau, s} \in \sem{C}(\delta, \tau)$,
    and
    \begin{equation*}
      \tyFunc{(C \inst r)}(\vec{t}) \substN{\sigma, s}
      \in \sem{C \inst r}(\delta, \sigma)
    \end{equation*}
    follows.
  \item $\validType[\ctx{\tyCtx}{\objCtx_1}]{
      \pAbstr{x}{B}}{\pType[(x : A, \objCtx_2)]}$
    by \PAbstr.
    Immediate by induction.
  \item $\validTypeN[!][\emptyset]{
      \indTy{Y}{\Gamma}{\vec{\tau}}{\vec{D}}}[\Gamma]{i}$
    by \FPTy.
    We abbreviate, as before, this type just by $\mu$.
    Recall the definition of $\tyFunc{\mu}$:
    \begin{align*}
      & \tyFunc{\mu}(\vec{t})
      = \recPs[R_A]{\objCtxA_k, x}{g_k} \inst \id_{\objCtx} \inst x \\
      & \quad \text{with } g_k =
      \inMu_k
      \inst \id_{\objCtxA_k} \inst
      \left(\tyFunc[\tyCtx, Y]{D_k}\left(
          \vec{t}, y
        \right)\right) \\
      & \quad \text{and }
      R_A = \mu \tySubst{\vec{\pAbstr{\objCtx_i}{A_i}}}{\vec{X}} \\
      & \quad \text{for }
      \validTypeN[\tyCtx, Y : {\pType[\objCtx]}][\Delta_k]{D_k}[]{i}
    \end{align*}
    Now, put $\delta' = \delta [Y \mapsto \sem{R_B}]$,
    then we have by induction that
    $\tyFunc{D_k}(\vec{t}, y) \substN{\id_{\objCtxA_k}, y}
    \in \sem{D_k}(\delta', \id_{\objCtxA_k})$.
    Since $\id_{\objCtx_k} \in \SN$ we have by \iLemRef{constructor-closure}
    for all $\rho \in \sem{\objCtxA_k, \tyFunc{D_k}(\vec{A}, R_B)}$ that
    $g_k [\rho] \in \sem{\mu}(\delta, \sem{\tau_k}(\rho))$.
    By assumption, we have $s \in \sem{R_A}(\sigma)$, hence by
    choosing $U = \sem{R_B}$ in the definition of $\sem{R_A}$ we find
    $\tyFunc{\mu}(\vec{t})[\sigma,s] \in
    \sem{R_B}(\sigma) = \sem{\mu}(\delta, \sigma)$.
  \item $\validTypeN[!][\emptyset]{
      \coindTy{Y}{\Gamma}{\vec{\tau}}{\vec{D}}}[\Gamma]{i}$
    by \FPTy.
    Analagous to the inductive case, only that we use \iLemRef{corec-closure}.
  \end{itemize}
\end{proof}

\begin{proof}[Proof of \lemRef{sat-sound}]
  We proceed by induction in the type derivation for $t$.
  Since $t$ does not have any parameters, we only have to deal with fully
  applied terms and will thus leave out the case for instantiation in
  the induction.
  Instead, we will have cases for fully instantiated $\inMu$, $\outNu$, etc.
  So let $\rho \in \sem{\objCtx}$ and proceed by the cases for $t$.
  \begin{itemize}
  \item $\unit \in \sem{\oneT}(\rho)$ by definition.
  \item For $\typed[\objCtx, x : A]{x}{A}$ we have $x[\rho] = \rho(x)$.
    By definition of $\sem{\objCtx, x : A}$, we have
    $\rho(x) \in
    \sem{\validTypeN[][\objCtx]{A}[]{i}}(\restr{\rho}{\vars{\objCtx}})
    = \sem{\validTypeN[][\objCtx, x : A]{A}[]{i}\, }(\rho)$.
    Thus $x[\rho] \in \sem{A}(\rho)$ as required.
  \item Weakening is dealt with immediately by induction.
  \item If $t$ is of type $B$ by \rulelabel{Conv}, then by
    induction $t \in \sem{A}(\rho)$.
    Since by \iLemRef{interpret-resp-reduction} $\sem{B}(\rho) = \sem{A}(\rho)$,
    we have $t \in \sem{B}(\rho)$.
  \item Suppose we are given
    $\mu = \indTy{X}{\objCtx}{\vec{\sigma}}{\vec{A}}$ and
    $\typed[\objCtxA]{\inMu_k \inst \tau \inst t}{\mu [\sigma_k \bullet \tau]}$
    with $\ctxMor{\tau}{\objCtxA}{\objCtx}$ and
    $\typed[\objCtxA]{t}{A_k \tySubst{\mu}{X}[\tau]}$.

    Then, by induction, we have
    $t \in \sem{A_k}(X \mapsto \sem{\mu}, \tau)$
    and soundness for the components of $\mu$,
    thus by \iLemRef{constructor-closure}
    \begin{equation*}
      \inMu_k \inst \tau \inst t \in \sem{\mu}(\sem{\sigma_k \bullet \tau}(\rho))
    = \sem{\mu \substN{\sigma_k \bullet \tau}}(\rho).
    \end{equation*}


  \item Suppose we have
    $\mu = \indTy{X}{\objCtx}{\vec{\sigma}}{\vec{A}}$ and
    $\typed[\objCtxA]{\recPs[\mu]{\objCtx_k, x}{g_k} \inst \tau \inst t}{
      C \inst \tau}$.
    Then by induction we get from $\typed[\objCtxA]{t}{\mu \substN{\tau}}$
    that $t \in \sem{\mu \substN{\tau}}(\rho)$, hence if we chose
    $U = \sem{C \inst \tau}$ and $N_k = g_k$
    the definition of $\sem{\mu}$ yields
    $\recPs[\mu]{\objCtx_k, x}{g_k} \inst \tau \inst t
    \in \sem{C \inst \tau}(\rho)$.
  \item Suppose
    $\nu = \coindTy{X}{\objCtx}{\vec{\sigma}}{\vec{A}}$ and
    $\typed[\objCtxA]{\outNu_k \inst \tau \inst t}{
      A_k \tySubst{\nu}{X} \substN{\tau}}$
    with $\ctxMor{\tau}{\objCtxA}{\objCtx_k}$
    and $\typed[\objCtxA]{t}{\nu \substN{\sigma_k \bullet \tau}}$.
    By induction, $t \in \sem{\nu \substN{\sigma_k \bullet \tau}}(\rho)$
    thus there is a $U$ such that
    $\outNu_k \inst \tau \inst t \in \sem{A_k}(X \mapsto U, \sem{\tau}(\rho))$.
    By \iLemRef{coind-is-largest-fp} and \iLemRef{monotonicity-interpretation}
    we then have
    $\outNu_k \inst \tau \inst t
    \in \sem{A_k}(X \mapsto \sem{\nu}, \sem{\tau}(\rho))$.
    Since
    $\sem{A_k}(X \mapsto \sem{\nu}, \sem{\tau}(\rho))
    = \sem{A_k \tySubst{\nu}{X} \substN{\tau}}(\rho)$,
    the claim follows.
  \item For $\corec$-terms we just apply \iLemRef{corec-closure}, similar to
    the $\inMu_k$-case.
  \end{itemize}
  This concludes the induction, thus the interpretation of types is sound
  with respect to the typing judgement for terms.
\end{proof}


\end{document}